\documentclass[twocolumn]{article}
\usepackage[fontsize=9.5pt]{fontsize}

\usepackage{cite} 		  
\usepackage{xcolor}       
\usepackage{hyperref}	  
\hypersetup{colorlinks=true,linkcolor=blue,citecolor=blue,breaklinks={true}} 
\usepackage[normalem]{ulem}

\usepackage{graphicx}
\usepackage{sidecap}
\usepackage{commath}
\usepackage{float}
\usepackage[font=footnotesize,labelfont=bf,justification=justified]{caption}
\graphicspath{{./FIG/}}
\DeclareGraphicsExtensions{.eps,.png,.jpg,.jpeg,.bmp,.gif,.pdf}

\usepackage{tabularx,booktabs}	
\usepackage{multirow}	        
\usepackage{array}
\usepackage{makecell}

\usepackage{amsmath}      		
\usepackage{amssymb}	  		
\usepackage{bm}		      		
\usepackage{physics}	  		
\usepackage{mathtools}			
\usepackage{dsfont}		  		
\usepackage{soul}				
\usepackage{relsize}			

\newcommand{\R}{\mathbb{R}}		
\newcommand{\C}{\mathbb{C}}		
\newcommand{\transp}{\mathsf{T}}					

\DeclareMathOperator*{\argmin}{arg\,min}
\newcolumntype{C}[1]{>{\centering\arraybackslash}m{#1}}

\usepackage{amsthm}
\newtheorem{defin}{Definition}
\newtheorem{theor}{Theorem}
\newtheorem{corol}{Corollary}
\newtheorem{prop}{Proposition}
\newtheorem{lemma}{Lemma}
\newtheorem{conjec}{Conjecture}
\theoremstyle{definition}
\newtheorem{assump}{Assumption}
\newtheorem{remark}{Remark}
\newtheorem{example}{Example}

\newcommand*{\QEDA}{\hfill\ensuremath{\triangle}}   %

\usepackage[left=44pt,%
                right=44pt,%
                top=50pt,
                bottom=50pt,%
                headheight=0pt,
                headsep=10pt,%
                footskip=25pt,
                marginparwidth=38pt]{geometry}

\usepackage[switch]{lineno}       

\usepackage[resetlabels]{multibib}
\usepackage{authblk}              
 
\usepackage{titlesec}             
\titleformat{\section}{\large\bfseries}{\thesection.}{1em}{}
\titleformat{\subsection}{\normalsize\bfseries}{\thesubsection.}{1em}{}
\titlespacing*{\section}
{0pt}{1em}{0.2em} 
\titlespacing*{\subsection}
{0pt}{1em}{0em} 
\usepackage{changepage}

\usepackage{tocloft}

\title{{\huge\textbf{Disorder-promoted stability}}}

\author[1,2$*\dagger$]{Arthur~N.~Montanari}
\author[1,2$\dagger$]{Pietro Zanin}
\author[1,2,3,4]{Adilson E. Motter}

\affil[1]{Center for Network Dynamics, Northwestern University, Evanston, IL 60208, USA}
\affil[2]{Department of Physics and Astronomy, Northwestern University, Evanston, IL 60208, USA}
\affil[3]{Northwestern Institute on Complex Systems, Northwestern University, Evanston, IL 60208, USA}
\affil[4]{Department of Engineering Sciences and Applied Mathematics, Northwestern University, Evanston, IL, 60208, USA}

\date{}

\begin{document}

\twocolumn[
    \maketitle
    \vspace{-4.5em} 
    \begin{center}
        \begin{minipage}{1\textwidth}
            \begin{abstract}
            \vspace{2.5em}
\noindent
Previous studies of network dynamics have suggested that heterogeneity among nodes inhibits stability, at odds with the ubiquity of inherently heterogeneous natural and engineered systems. Here, we show that this conclusion arises from model reductions introduced for mathematical tractability and breaks down when nodal dynamics are higher-dimensional, yielding non-Hermitian Jacobians. In such systems, including neural, power-grid, and material networks, nodal heterogeneity can instead enhance stability, even when parameters are randomly disordered. Non-Hermiticity also underlies the stabilizing effects of network heterogeneity, which can arise even in one-dimensional nodal dynamics through nonreciprocal interactions, as shown for ecological networks. Our framework reveals disorder not as a liability but as a general resource for stabilizing complex systems.
\end{abstract}
    \end{minipage}
\end{center}
\vspace{1em} 
]

\let\thefootnote\relax\footnotetext{$^*$Corresponding author: arthur.montanari@northwestern.edu.}

\let\thefootnote\relax\footnotetext{$^\dagger$These authors contributed equally to this work.}

\noindent
Stable dynamical states are essential for the function of complex systems \cite{may1972will,deco2011emerging,meena2023emergent}. Examples include synchronization in power grids \cite{Motter2013,Dorfler2013} and laser arrays \cite{soriano2013complex}, coordinated motion in animal flocks and drone swarms \cite{katz2011inferring,olfati2006flocking,ren2008distributed}, distributed computation in neural networks \cite{fair2009functional,slotine2003modular}, and species coexistence in ecological communities \cite{allesina2012stability,chen2024stability}. In these systems, network-mediated interactions give rise to robust states that persist despite the pervasiveness of perturbations. 
The large scale of such complex systems makes their collective dynamics notoriously difficult to analyze mathematically or computationally. Previous work has often turned to simplified models that reduce the dimensionality of the system while keeping the essence of its emergent behavior \cite{ott2008low,nakao2016phase,kuehn2021universal,nijholt2022emergent,thibeault2024low}.
Yet, as shown here, reduced models may misrepresent the impact of parameter choices and network structures on the stability of dynamical states.

A prominent example of dimensionality reduction at the node level is the Kuramoto model, which captures the emergence of synchronization in large networks of coupled oscillators \cite{Kuramoto1975,Rodrigues2016}. This model has led to the fundamental understanding that synchronization is facilitated when oscillators are near-identical \cite{Kuramoto1975,Dorfler2013}. Consequently, many system-level approaches, such as the master stability function (MSF) used to decouple network modes \cite{fujisaka1983stability,Pecora1998,nishikawa2006synchronization,pecora2014cluster,carletti2023global}, have been derived on the premise of oscillator homogeneity. 
In apparent contradiction, a different body of literature has recently shown theoretically and experimentally that oscillator heterogeneity can promote, rather than inhibit, synchronization stability across a wide variety of network systems. These systems include power networks \cite{molnar2020network,molnar2021asymmetry},  laser arrays \cite{nair2021using,barioni2025,ye2025disorder}, optical systems \cite{perego2022synchronization,cao2022harnessing}, electronic circuits \cite{mallada2015distributed,sugitani2021synchronizing}, superconducting devices \cite{braiman1995disorder}, coupled chemical oscillators \cite{zhang2021random}, quantum systems \cite{lorch2017quantum}, multi-agent systems \cite{yang2022emergent,montanari2025optimal}, and neuronal circuits \cite{gast2024neural}, among others. Certain systems have even been shown to synchronize exclusively when the oscillators are nonidentical \cite{nishikawa2016symmetric,perego2022synchronization,bolotov2025heterogeneity}.

A principled approach for understanding the interplay between heterogeneity and synchronization stability is based on the characterization of symmetries of the system \cite{nishikawa2016symmetric,garbin2020asymmetric,medeiros2021asymmetry,perego2022synchronization,bolotov2025heterogeneity}. A key insight from this perspective is that stabilizing a symmetric state often requires breaking the given symmetry of the governing dynamical equations. This phenomenon can be seen as a converse of symmetry breaking \cite{molnar2020network} and has been observed across many network systems in which symmetries are broken through heterogeneities in nodal parameters. Most studies have investigated this effect using approaches tailored to specific systems. However, determining when it occurs and how prevalent it is across a broad range of systems and dynamical states, including and beyond synchronization, remains a fundamental open question in physics.

\begin{figure}[!t]
    \centering
    \includegraphics[width=\linewidth]{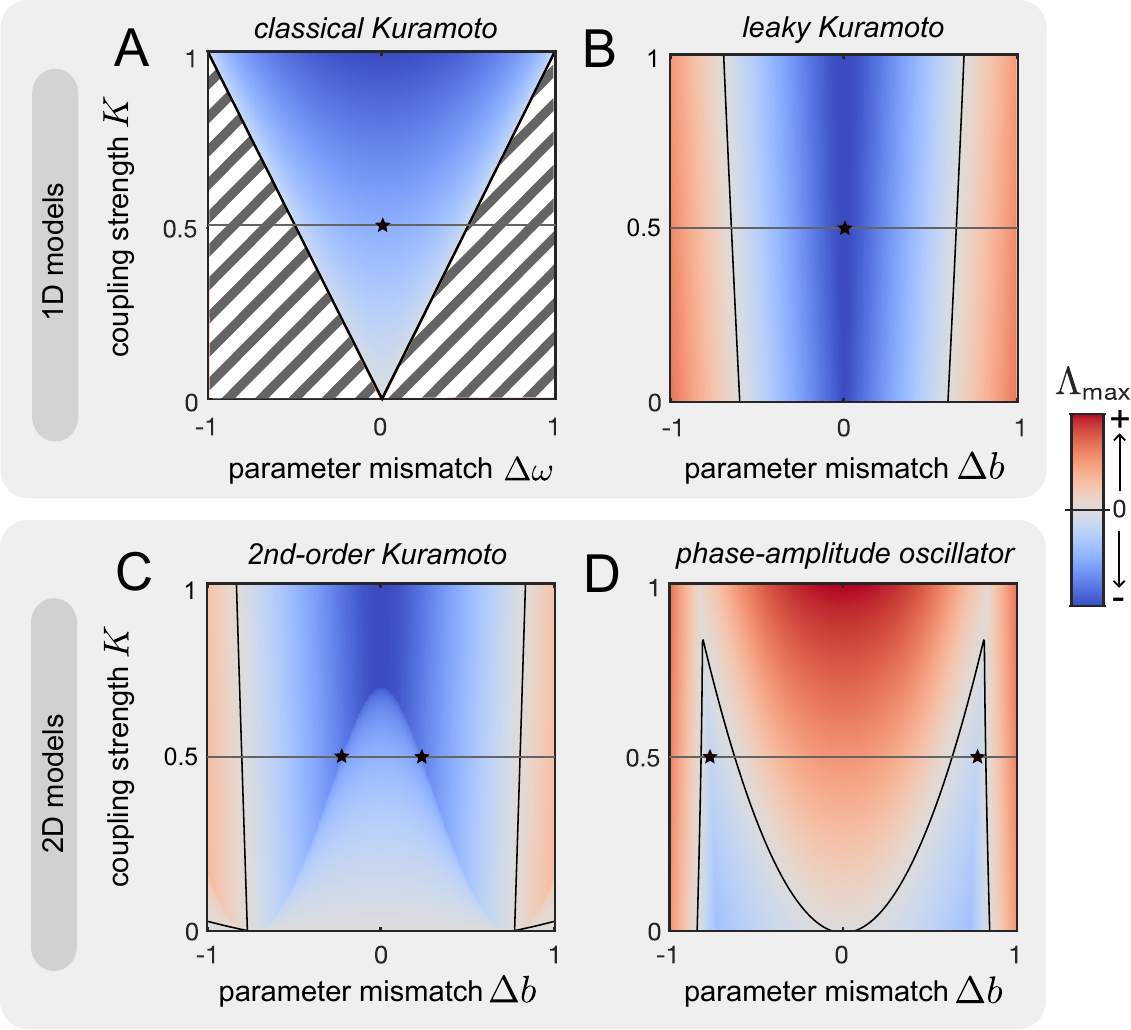}
    \caption{\textbf{Emergence of synchronization in heterogeneous oscillator networks.} 
    The stability regions (Arnold tongues) are shown for pairs of coupled classical Kuramoto oscillators (\textbf{A}), leaky Kuramoto oscillators (\textbf{B}), second-order Kuramoto oscillators (\textbf{C}), and phase-amplitude oscillators (\textbf{D}). The parameter regions where the synchronous state is stable (unstable) are marked blue (red), with the color code indicating the largest transverse Lyapunov exponent $\Lambda_{\mathrm{max}}$. The stars indicate the optimal parameter choices for a normalized coupling strength of $K=0.5$.
    For the classical Kuramoto model, synchronization occurs strictly when the frequency mismatch is sufficiently smaller than the coupling strength, with no synchronous equilibria outside the bound $|\Delta\omega| \leq 2K$ (hatched; where $
    \Delta\omega$ denotes the frequency mismatch). Both first-order models exhibit higher stability when the oscillators are nearly identical, with stronger coupling further expanding the stability region. In contrast, the second-order Kuramoto model attains {optimal} stability in weak coupling regimes at heterogeneous damping configurations. The phase-amplitude model exhibits an even more distinct behavior, where synchronization is only achieved with heterogeneity. The models and stability conditions are presented in the Supplementary Materials (SM), Section~\ref{sec.sm.arnold}.
    }
    \label{fig.arnold}
\end{figure}

Here, we {developed} a theory that reveals the mechanism through which heterogeneity stabilizes a dynamical state in general complex systems, showing that the beneficial role of heterogeneity can be traced to the dimensionality of the dynamics at the single node level. Specifically, we {established} that when the problem of maximizing stability through the tuning of nodal parameters is nonconvex, the optimal parameter configuration can be, and often is, heterogeneous. We thus {identified} that the necessary condition for the occurrence of this phenomenon is the non-Hermiticity of the Jacobian matrix, which commonly arises in systems whose nodal dynamics are described by two or more state variables. In the studies listed above, these additional state variables follow from second-order equations of motion, phase-amplitude dynamics, or other internal variables, setting them apart from one-dimensional (1D) models such as first-order Kuramoto oscillators. Fig.~\ref{fig.arnold} illustrates this effect, showing that the impact of heterogeneity on synchronization stability is fundamentally different in oscillators with one versus two degrees of freedom. {Moreover, for general network structures, we {found} that even when nodal parameters are randomly sampled, systems with moderate levels of ``disorder'' often exhibit states that are more stable than those of systems with identical parameters.}

Our analysis focuses on continuous-time systems in which all nodes obey the same underlying dynamical equations (e.g., Kuramoto, Lotka-Volterra, or FitzHugh-Nagumo dynamics), but admit heterogeneous nodal parameters or heterogeneous network structures. {Although heterogeneity has been widely studied for its role in instigating instability, particularly in synchronization dynamics, pattern formation, and multi-agent systems, here we {identified} conditions under which it can nevertheless promote stability in certain regions of parameter space.}
We {began} by analyzing nodal heterogeneity, establishing the conditions under which optimized and, more importantly, disordered parameter configurations can enhance stability relative to the best homogeneous configuration (Fig.~\ref{fig.disorder}). These conditions are governed by the nonconvexity and differentiability of the stability function, which form the basis of our framework for disorder-promoted stability (summarized in Fig.~\ref{fig.modemixing}A). Within this framework, we {showed} that, even though heterogeneity generically hinders dimensionality reduction by preventing the decoupling of network modes, in non-Hermitian systems this very mode mixing is the mechanism that enhances stability {(Fig.~\ref{fig.modemixing}B,C). We then {extended} our framework to the analysis of network heterogeneities, where non-Hermitian Jacobians can arise from directed network interactions and nonconvexity can also follow from optimization constraints.}
Thus, in contrast with the problem of optimizing nodal parameters, stability can be maximized through symmetry-broken network configurations regardless of the dimensionality of the nodal dynamics  {(Fig.~\ref{fig.symmetries})}. In the context of ecological systems, these results show that disordered networks provide a natural route toward resolving May’s longstanding complexity-stability paradox \cite{may1972will}: systems with more complexity are more likely to be stabilized by disorder, with the extent of this effect depending on the nature of the interactions {(Fig.~\ref{fig.ecologicalnet})}.

These findings demonstrate that some level of disorder is, in fact, expected to enhance the stability of a broad class of desired collective states, extending well beyond synchronization.

\section*{Results}

\subsection*{Stability and symmetry in network systems}
Many natural and engineered complex systems can be modeled as networks of $N$ interacting nodes (e.g., agents, oscillators, or species). We {considered} the broad class of models in which the state of each node, denoted $\bm x_i \in \mathbb{R}^q$, evolves according to 
\begin{equation}
    \dot{\bm x}_i = \bm f(\bm x_i;b_i) + \sum_{j=1}^N A_{ij}\bm g(\bm x_i,\bm x_j), \quad \text{for} \,\, i=1,\ldots,N.
\label{eq.generalnetwork}
\end{equation}
Here, the (generally nonlinear) functions $\bm f$ and $\bm g$ capture the self-dynamics and coupling terms, respectively, and the adjacency matrix $A\in\R^{N\times N}$ encodes the interaction network. The system state is denoted by $\bm x = (\bm x_1^\transp,\ldots,\bm x_N^\transp)^\transp$, with dimension $n=Nq$. We {examined} the case in which the heterogeneity among nodes is represented by the parameter vector $\bm b = (b_1,\ldots,b_N)$, which captures, for instance, differences in damping, natural frequency, or bias. A particularly important class of models of this form arises when the dynamics follow from Newton’s second law and are thus governed by coupled second-order equations of motion. In this case, the equations can be expressed as
\begin{equation}
    \ddot{\bm y}_i + b_i \dot{\bm y}_i + \tilde{\bm f}(\bm y_i, \dot{\bm y}_i) = \sum_{j=1}^N A_{ij} \tilde{\bm g}(\bm y_i, \dot{\bm y}_i,\bm y_j, \dot{\bm y}_j),
\label{eq.secondordersys}
\end{equation}
for $i=1,\ldots,N$, where $b_i$ is now the damping parameter. These equations map to a special case of Eq.~\eqref{eq.generalnetwork} by representing the system state as $\bm{x} = (\bm{y}_1^\transp, \ldots, \bm{y}_N^\transp, \dot{\bm y}_1^\transp, \ldots, \dot{\bm y}_N^\transp)^\transp$. 
The dynamics of drone swarms, mechanical metamaterials, and power grids all fall within this class. There are also cases such as the van der Pol oscillator, which arises both in second-order mechanical and electronic modeling and in phase-amplitude oscillators reduced from Stuart-Landau equations describing the universal dynamics near a Hopf bifurcation.
Therefore, the nodal dynamics of many network systems can be described by higher-dimensional models and/or second-order differential equations\textemdash Eqs.~\eqref{eq.generalnetwork} and \eqref{eq.secondordersys}, respectively\textemdash as illustrated in SM, Table~\ref{tab.systems}.

A central question in network dynamics is how the system symmetries influence the stability of the dynamical states.
To approach this question, we {examined} the system dynamics near an equilibrium point $\bm x^{\mathrm{eq}}$, as in the co-rotating frame of periodic orbits. Because small deviations from this state, $\bm x(t) = \bm x^{\mathrm{eq}} + \delta\bm x(t)$, evolve according to the first-order approximation $\delta\dot{\bm x} = J (\bm b;A) \delta\bm x$, {their asymptotic decay or growth rates are characterized by the associated Lyapunov exponents. At an equilibrium point, the Lyapunov exponents are given by the real part of the eigenvalues $\lambda_i$ of the Jacobian matrix $J\in\R^{n\times n}$. The equilibrium is therefore stable (unstable) if the largest transverse Lyapunov exponent $\Lambda_{\mathrm{max}} = \max_{i\notin \mathcal Z} \mathrm{Re}(\lambda_i)$ is negative (positive)}, where the set $\mathcal Z = \{i : \lambda_{i}\big(J(\bm b;A)\big)=0, \forall {\bm b\in \R^{N}}\}$ represents identically null eigenvalues (which are excluded when quantifying the stability against transverse perturbations).
In network systems, the relevant symmetries are defined by node permutations that leave the dynamical equations invariant. For example, in a ring network of identically coupled identical oscillators, these symmetries include rotations and reflections. Formally, system \eqref{eq.generalnetwork} is said to be symmetric under some node permutation represented by matrix $P$ if both the adjacency matrix $A$ and the parameter vector $\bm b$ satisfy the symmetry relations $P^\transp A P = A$ and $P \bm b = \bm b$, respectively. Likewise, a state $\bm x$ is symmetric if it satisfies the relation $(P\otimes I_q) \bm x = \bm x$, where $I_q$ is the $q\times q$ identity matrix and $\otimes$ denotes the Kronecker product.

Fig.~\ref{fig.arnold}A,B illustrates that, in a pair of symmetrically coupled 1D oscillators, the stabilization of the synchronous state $\bm x_1^{\rm eq}=\bm x_2^{\rm eq}$ requires a homogeneous choice of parameters. This result corresponds to the intuitive scenario in which symmetric systems are the optimal design choice to attain stable symmetric states\textemdash a behavior that has been tacitly assumed to be general.
Fig.~\ref{fig.arnold}C,D shows that this assumption is false. For 2D oscillators, the stabilization of such symmetric states may require breaking the system symmetry\textemdash in this case, through a heterogeneous (asymmetric) parameter choice $b_1\neq b_2$.
In what follows, we first {present} a general condition for the occurrence of this phenomenon in arbitrary network systems. We then {show} that the phenomenon is prevalent even when the heterogeneous parameters are disordered.

\begin{figure*}[!t]
    \centering
    \includegraphics[width=0.97\linewidth]{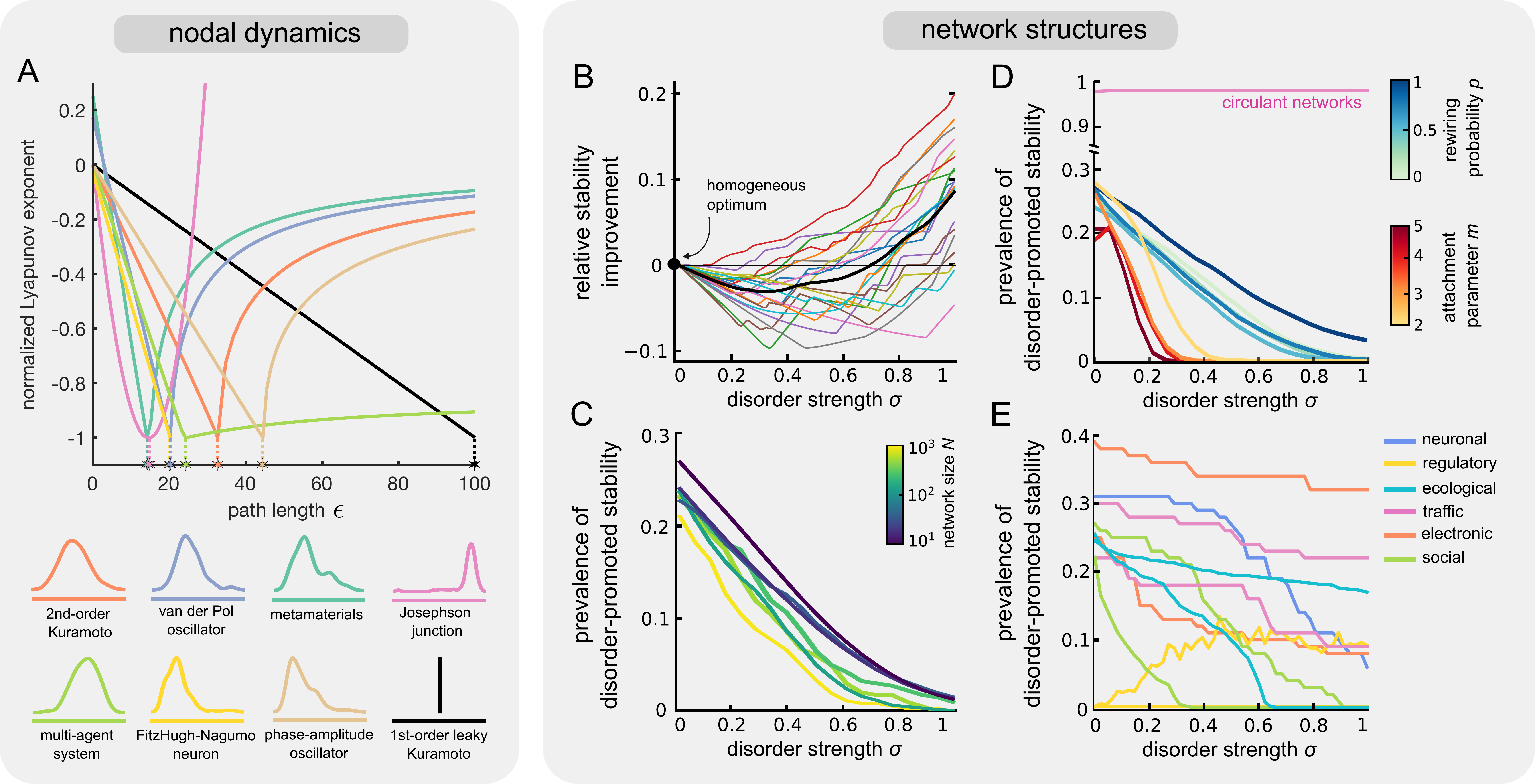}
    \caption{\textbf{Disorder-promoted stability across dynamical models and complex network structures.} 
    (\textbf{A})~Normalized largest transverse Lyapunov exponent, $\Lambda_{\mathrm{max}}(\bm b(\epsilon))/|\Lambda_{\mathrm{max}}(\bm b^*)|$, as a function of path length $\epsilon$ along the optimal direction. The curves are color coded by the dynamical models listed in SM, Table~\ref{tab.systems}, and the asterisks mark the optimum point within the interval. 
    \textit{Bottom insets}: Distribution of optimal parameter values $\bm b^*$, showing that all models achieve maximal stability at heterogeneous configurations, except for the first-order model. The simulations were performed on undirected all-to-all networks with 100 nodes. 
    (\textbf{B})~Relative stability improvement, $(\Lambda_{\mathrm{max}}(\sigma) - \Lambda_{\mathrm{max}}(0))/|\Lambda_{\mathrm{max}}(0)|$, as a function of disorder strength $\sigma$. Each colored curve corresponds to an independent realization of a directed SW network with $(N,p) = (100,0.2)$, using the best perturbation $\delta\bm b$ out of 1{,}000 trials. The average stability improvement is shown in black.
    (\textbf{C--E})~Fraction of systems with disorder-promoted stability (satisfying $\Lambda_{\mathrm{max}}(\sigma) \leq \Lambda_{\mathrm{max}}(0)$) as a function of $\sigma$. The results are shown for SW networks of varying sizes with $p=0.2$ (\textbf{C}), SW and SF networks of varying model parameters with $N=100$ (\textbf{D}), and empirical networks from diverse domains (\textbf{E}).
    The curves represent averages over 1{,}000 perturbation trials (all panels) and 100 network realizations (panels C and D).
    In panel D, results for the class of directed first-neighbor circulant networks are included for comparison.
    See Materials and Methods for details on empirical networks and models.
    }
    \label{fig.disorder}
    \vspace{-0.2cm}
\end{figure*}

\subsection*{Emergence of stability out of asymmetry}
To establish the relationship between state stability and system symmetry, {we {formulated} the following optimization problem: 
\begin{equation}
\label{eq.optimization}
    \min_{\bm b\in\R^N_{\geq 0}} \,\,\, \Lambda_{\mathrm{max}}\left(J(\bm b; A)\big|_{\bm x^{\rm eq}}\right),
\end{equation}
which seeks the (global) optimal parameter configuration $\bm b^*$ that minimizes $\Lambda_{\rm max}$, thereby maximizing the stability of an equilibrium point $\bm x^{\rm eq}$. (In Eq. 3, we implicitly assume a very large bound $|b_i|\leq b_{\rm max}$ to facilitate physical interpretation in cases in which $\bm b^*$ would be otherwise unbounded.) A key result is that the nonconvexity of this optimization problem is a necessary condition for the existence of asymmetric parameter configurations that promote stability beyond what is possible with symmetric parameters. To show this, we first {considered} the convex case and} {introduced} the following main theorem: 

\begin{adjustwidth}{0.5cm}{0cm}
``{{Let the Jacobian matrix $J(\bm b;A)$ be an affine function of $\bm b$ and $P$ be any permutation matrix representing a symmetry of the network (i.e., $P^\transp A P = A$).} If the optimization problem \eqref{eq.optimization} is convex, then at least one global optimal solution $\bm b^*$ {must satisfy all symmetries of the network (i.e., $\bm b^* = P\bm b^*$)}.}''
\end{adjustwidth}

\noindent
The problem is convex if $\Lambda_{\rm max}(\bm b)$ is a convex function of $\bm b$ and the feasible set of $\bm b$ forms a convex set, conditions under which every local minimum is also a global minimum.
A rigorous proof, based on Jensen's inequality, is presented in the SM, Section \ref{sec.sm.jensen}, {along with a more general formulation of this result for objective functions other than the largest Lyapunov exponent $\Lambda_{\rm max}$.} 

{An important consequence of this theorem is that it determines whether the stability of a state $\bm x^{\rm eq}$ is maximized by a homogeneous or a heterogeneous parameter configuration $\bm b^*$, where a parameter vector $\bm b$ is said to be homogeneous if all its components $b_i$ are equal. When $A$ corresponds to a network in which all nodes are structurally indistinguishable, any parameter that preserves all system symmetries is homogeneous (given that, for each pair of nodes $(i,j)$, there exists a permutation $P$ such that $P^\transp AP=A$ for which $P\bm e_i = \bm e_j$, where $\bm e_i$ denotes the canonical basis vector representing node $i$). For such networks, it follows from our theorem that, whenever {the optimization problem} is convex, at least one global optimal parameter $\bm b^*$ is homogeneous. It follows from the contrapositive that nonconvexity of {the optimization problem} is a necessary, but not sufficient, condition under which $\Lambda_{\rm max}$ is strictly minimized at heterogeneous parameter configurations $\bm b^*$. As we show next, such heterogeneous
optimal parameter configurations are not only possible but in fact common across network systems.}
These results are directly relevant to globally symmetric adjacency matrices $A$, including ring networks, all-to-all networks, and, more generally, all classes of circulant networks. More broadly, in arbitrary networks, the same statements can be made within each symmetry cluster.
Furthermore, because we {imposed} no requirements on the symmetries of $\bm x^{\mathrm{eq}}$, this result is not limited to synchronization states and is broadly relevant to the stabilization of arbitrary states (including cases in which $\bm x_i^{\mathrm{eq}} \neq \bm x_j^{\mathrm{eq}}$), with the understanding that the {convexity of Eq.~\eqref{eq.optimization}} itself depends on $\bm x^{\rm eq}$.
In our formulation, it is therefore important to distinguish between the symmetries of the network ($A$), of the optimal parameters ($\bm b^*$), and of the state under consideration ($\bm x^{\rm eq}$).

Fig.~\ref{fig.disorder}A shows, for a variety of network systems, how the stability of the equilibrium state $\bm x^{\mathrm{eq}}$ changes as the parameter $\bm b$ is varied along a straight line from $\bm b=0$ to the optimum $\bm b = \bm b^*$ according to the parameterization $\bm b(\epsilon) = \epsilon\bm b^*/\norm{\bm b^*}$. 
Here, $\bm b$ represents damping coefficients, leak/decay rates, or regulation feedback depending on the specific system, and we note that $\bm b = 0$ typically corresponds to the least stable configuration for $b_i\geq 0$, $\forall i$. 
For second-order systems of the form \eqref{eq.secondordersys}, it is well known that optimal stability is achieved at finite damping values, which is often referred to as critical damping. {More generally,} the optimal parameter $\bm b^*$ often lies at a nondifferentiable point of $\Lambda_{\rm max}$, as illustrated in Fig.~\ref{fig.disorder}A for a wide variety of systems governed by second-order, phase-amplitude, or excitation-inhibition dynamics.

Crucially, for the networks with 2D nodal dynamics, the optimal stability $\Lambda_{\mathrm{max}}(\bm b^*)$ always {occured} at heterogeneous parameter configurations $\bm b^*$, characterized by broad, dominantly unimodal distributions (Fig.~\ref{fig.disorder}A, bottom); {the 1D Kuramoto model (black line) {was} the only case for which optimal stability, within a finite interval, {was} attained at homogeneous values.}
These observations are directly aligned with our theorem, given that the optimization problem~\eqref{eq.optimization} is generally nonconvex for networks with 2D nodes, {but convex for 1D nodes. To demonstrate this, we compare networks of second- and first-order Kuramoto oscillators (the same systems considered in Fig.~\ref{fig.arnold}B,C).}

\medskip\noindent
{\textit{When heterogeneity is optimal.} For the second-order Kuramoto model,} the stability of the synchronous state $\bm x_1^{\rm eq} = \ldots = \bm x_N^{\rm eq}$ is given by the Jacobian matrix
\begin{equation}
    J(\bm b;A) = \begin{bmatrix}
        0_N & I_N \\ - L & -B
    \end{bmatrix},
\label{eq.jacobian}
\end{equation}
where $0_N$ is an $N \times N$ zero matrix, $B = \operatorname{diag}(b_1,\ldots,b_N)$ is a diagonal parameter matrix, $L=D-A$ is the Laplacian matrix, $D = \operatorname{diag}(d_1,\ldots,d_N)$ is the diagonal degree matrix, and $d_i = \sum_{j} A_{ij}$ is the in-degree of node $i$ {(SM, Table~\ref{tab.systems})}. {Consistent with the assumption of our theorem (SM, Sec.~\ref{sec.sm.jensen}), the Jacobian matrix $J$ depends affinely on $\bm b$; that is, each of its entries varies linearly with $\bm b$ up to a constant.} Importantly, $J$ is non-Hermitian (i.e., $J\neq J^\dagger$) for any connected network, even when the network itself is undirected (i.e., when $A=A^\dagger$). (Note that we use the term ``Hermitian'' rather than ``symmetric'' for real matrices to avoid conflating matrix invariance under transposition with symmetry properties in the parameter space, state space, and dynamical equations.) {In this setting,} the {optimization problem \eqref{eq.optimization}} is always nonconvex, implying that the optimal stability of the symmetric state $\bm x^{\rm eq}$ may be realized at heterogeneous (asymmetric) parameter configurations. 

More generally, we {proved} that the optimization problem \eqref{eq.optimization} is nonconvex {only if $J$ is non-Hermitian and that, for a broad class of non-Hermitian systems, nonconvexity holds for almost all networks (SM, Propositions~\ref{cor.sm.nonhermitian} and \ref{cor.sm.localcurvature}).} That is, convexity holds only for a measure-zero set in the space of possible adjacency matrices.
{This result indicates that regimes in which parameter heterogeneity enhances stability are common rather than exceptional in complex systems, as most such systems feature nodal dynamics with two or more state variables and therefore non-Hermitian Jacobians.}
Within this class of systems, these results apply to both directed and undirected networks.
{A relation can also be established between our convexity-symmetry results and the nonnormality of $J$ (a stronger condition than non-Hermiticity) as well as other stability measures for transient dynamics (SM, Sec.~\ref{sec.sm.nonconvexity}). {In particular, for the class of systems considered here, non-Hermiticity {is} equivalent to nonnormality, and hence it follows that heterogeneity can promote stability only if $J$ is nonnormal. This connection is noteworthy given that previous research has shown a common occurrence of nonnormality in empirical networks \cite{asllani2018structure,duan2022network}, with direct consequences for transient response analysis and the optimization of synchronization stability \cite{nishikawa2017sensitive,nazerian2024efficiency}.}}

\medskip\noindent
{\textit{When homogeneity is optimal}. To facilitate comparison to the second-order systems shown in Fig.~\ref{fig.disorder}A, we {considered} the ``leaky'' Kuramoto model \cite{Dorfler2014} (SM, Table~\ref{tab.systems}), which is a representative system with first-order nodal dynamics in which $\bm b$ also acts as a linear dissipative term. The corresponding Jacobian matrix around the synchronous state is} 
\begin{equation}
    J(\bm b;A)=-(B+L).
\label{eq.jac1storder}
\end{equation}
{In this case, $J$ is Hermitian whenever $A$ is Hermitian. Under this condition, {the optimization problem \eqref{eq.optimization}} is convex, implying that optimal stability is attained at a homogeneous parameter $\bm b^*$.} This analysis extends directly to other first-order models that share the same Jacobian structure, including externally driven phase oscillators \cite{Rodrigues2016,wang2025explosive}, consensus models in distributed coordination \cite{ren2008distributed}, and opinion dynamics \cite{castellano2009statistical} (SM, Table~\ref{tab.systems}).

\medskip
\subsection*{Disorder-promoted stability in complex networks}
\label{sec.modemixing}

Our results established that optimally stable states are generally achieved with heterogeneous parameters because most network systems have nodal dynamics with two or more state variables, leading to non-Hermitian Jacobians and nonconvex stability landscapes. Thus, the optimization problem is $N$-dimensional rather than 1D, and we now turn to the question of {convergence}: under what conditions are such heterogeneous optimal configurations easily reached through optimization or controlled parameter perturbations?
Another fundamental question concerns {generality}: to what extent do these conditions hold across complex empirical and model networks?

To address these questions, we {focused} our analysis on network models of coupled second-order equations described by the Jacobian matrix \eqref{eq.jacobian}, with the understanding that many of the analytical results also extend to Jacobians with different structures (SM, Sec.~\ref{sec.sm.dpsconditions}). 
Let the homogeneous optimal parameter, which is the solution to the optimization problem \eqref{eq.optimization} subject to the constraint $\bm b = b\bm 1_N$, be denoted $\bm b_{\mathrm{hom}}^* = b_{\rm hom}^* \bm 1_N$, where $\bm 1_N$ is the $N$-dimensional ones vector.  
For {undirected networks} (i.e., Hermitian $A$), we {showed} that all systems described by the Jacobian matrix \eqref{eq.jacobian} have their homogeneous optima $\bm b_{\rm hom}^*$ located at nondifferentiable points of $\Lambda_{\rm max}(\bm b)$ and that this property also holds for all global optima $\bm b^*$ (SM, Sec.~\ref{sec.sm.differentiability}). 
{The nondifferentiable nature of the problem exposes a critical barrier to optimization:} gradient- and Hessian-based approaches are generally inapplicable or incapable of converging to global optima, including when starting from $\bm b^*_{\rm hom}$.

However, complex systems are generally described by {directed networks}, whose optimization landscapes can exhibit quite distinct differentiability properties. (As a shorthand, we say that a network is (non)differentiable when $\Lambda_{\rm max}(\bm b)$ is (non)differentiable at the given point.) 
A broad class of directed circulant networks is differentiable at homogeneous optimal points $\bm b^*_{\mathrm{hom}}$ (SM, Fig.~\ref{fig.circulant}). This result can be rigorously proven, for example, for all directed first-neighbor ring networks. Whenever a circulant network is differentiable at $\bm b^*_{\mathrm{hom}}$,  we can further show~that $\bm b^*_{\mathrm{hom}}$ is also a saddle point of $\Lambda_{\rm max}$ (SM, Sec.~\ref{sec.sm.circulant}).
In~such cases, there exist arbitrarily small perturbations of the form $\bm b(\sigma) = \bm b_{\mathrm{hom}}^* + \sigma \delta\bm b$ that locally improve stability, so that $\Lambda_{\mathrm{max}}\big(\bm b(\sigma)\big) < \Lambda_{\mathrm{max}}(\bm b_{\mathrm{hom}}^*)$ for small $\sigma$. Crucially, the perturbation direction $\delta \bm b$ need not be finely tuned. As demonstrated next, even random perturbations $\delta b_i$ drawn from a normal distribution $\mathcal N(0,1)$ {were} likely to enhance stability for controlled values of $\sigma$. We refer to this stabilizing effect as ``{disorder-promoted stability}'' as it arises from random parameter perturbations. 

We {investigated} the prevalence of this phenomenon in complex network topologies, extending our analysis beyond circulant networks. To this end, we first {considered} an ensemble of small-world (SW) networks of size $N$, rewiring probability $p$, and random weights $A_{ij}\sim\mathcal U[0,1]$. This procedure yields directed SW networks, ensuring that the homogeneous optimum $\bm b_{\mathrm{hom}}^*$ can be differentiable. Fig.~\ref{fig.disorder}B shows, for different network realizations, the stability improvement promoted by certain perturbations $\delta\bm b$ as a function of $\sigma$. While the magnitude of the improvement is dependent on the network realization, disorder-promoted stability is observed systematically across the entire ensemble. 
A more practical measure of this effect is the likelihood that a random perturbation improves stability (Fig.~\ref{fig.disorder}C--E). For small $\sigma$, roughly 25\% of all perturbations yield stability gains in SW networks, a percentage that is largely independent of the network size (Fig.~\ref{fig.disorder}C).
Most importantly, disorder-promoted stability is prevalent across different directed network models (Fig.~\ref{fig.disorder}D){, including both SW and scale-free (SF) networks, as well as} empirical networks (Fig.~\ref{fig.disorder}E). Compared to these networks, directed circulant networks {exhibited} the highest prevalence of the effect, with probabilities close to 100\%, as anticipated by our theory.
Moreover, within the networks considered, the effect {was} more prevalent in {degree-homogeneous} networks, such as circulant networks and SW networks at extreme values of $p$, than in {degree-heterogeneous} networks such as SF ones. These networks differ in their spectral properties: the hub-dominated structure of SF networks leads to eigenmode localization \cite{goh2001spectra,metz2021localization}, whereas the eigenmodes of degree-homogeneous networks tend to be delocalized \cite{erdHos2013spectral,davis1979circulant} (SM, Fig.~\ref{fig.sm.isotropy}). This observation hints that the structure of the eigenmodes plays a role in disorder-promoted stability, which we explore next.

\begin{figure*}[t]
    \centering
    \includegraphics[width=1\linewidth]{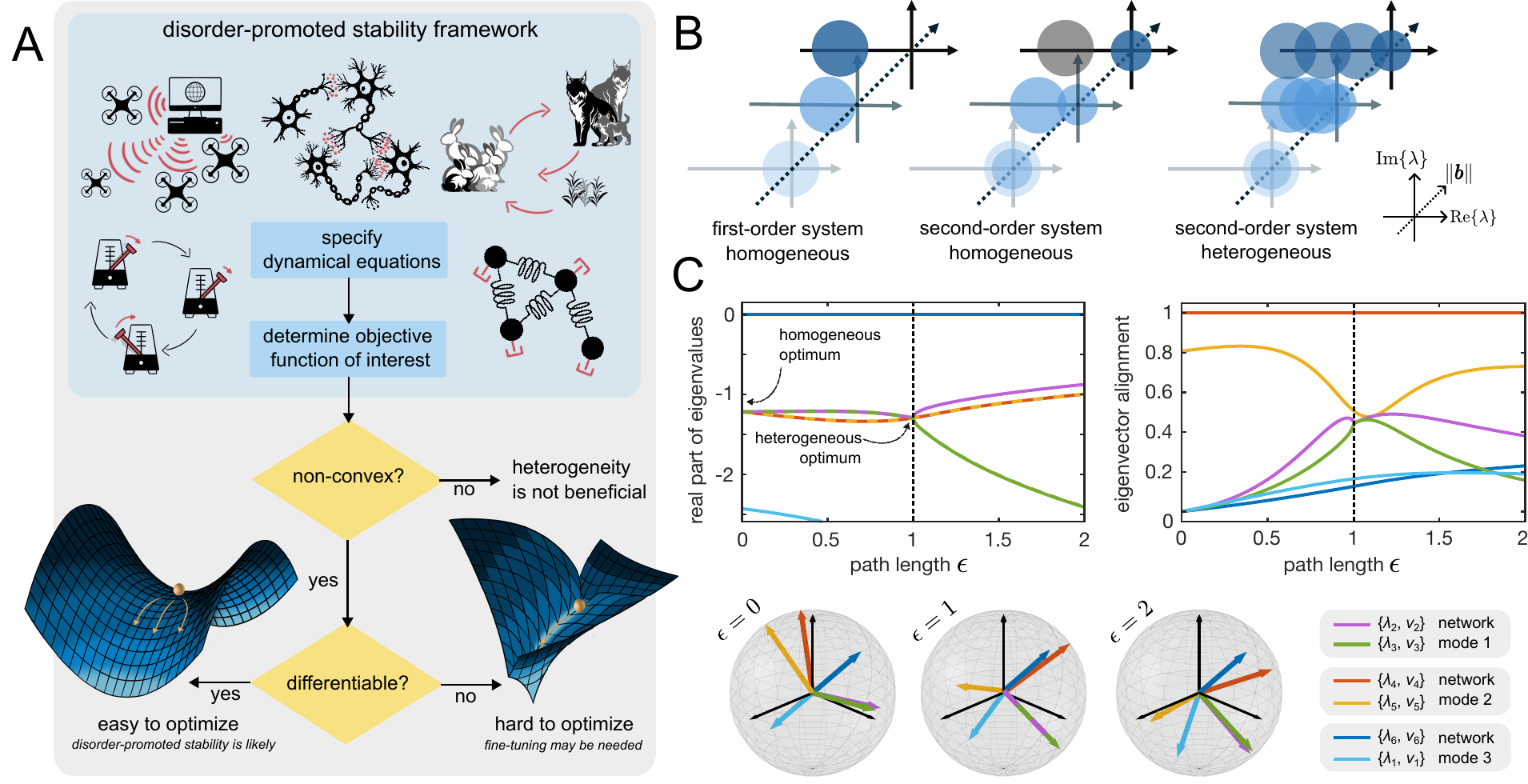}
    \caption{
    \textbf{Mechanism underlying disorder-promoted stability.} 
    (\textbf{A})~Framework to identify disorder-promoted stability. First, identify the system’s symmetries and the nodal parameters of interest. 
    Formulate the governing equations and determine the objective function for stability analysis (which need not be strictly restricted to Lyapunov exponents; see SM, Lemma~\ref{lem.sm.generaljensen}, for a general formulation).
    The existence of disorder-promoted stability depends on the objective function's nonconvexity. In circulant networks, convex landscapes yield homogeneous global optima, whereas nonconvex landscapes allow the occurrence of heterogeneous global optima.
    The hardness of improving stability relative to homogeneous configurations depends on whether the function is differentiable or not at these parameters. Orange arrows in the landscapes show possible parameter perturbations that yield improved stability relative to the homogeneous optimum (orange dot). 
    (\textbf{B})~Placement of Gershgorin discs in the complex plane for different classes of network models. As damping $\|\bm b\|$ grows, the disc placement differs qualitatively between first- and second-order models: in first-order systems, all discs shift leftward for larger damping, whereas second-order systems always exhibit a family of discs trapped at the origin of the complex plane. The blue regions delimit the possible range of $\Lambda_{\mathrm{max}}$ based on Gershgorin's theorem, which can be substantially more negative for heterogeneous second-order systems compared to their homogeneous counterparts.
    (\textbf{C}) Real part of the eigenvalues and alignment of eigenvectors as functions of the normalized path length $\epsilon$ connecting $\bm b(0) = \bm b_{\mathrm{hom}}^*$ to $\bm b(1) = \bm b^*$. The results are shown for a 3-node undirected network with random weights, where eigenpairs are color coded and alignment is measured relative to the red eigenvector (i.e., $\bm v_4^\dagger \bm v_i$, for $i=1,\ldots,6$). A 3D projection of the eigenvectors is illustrated for varying $\epsilon$, highlighting the eigenvector alignment at the heterogeneous optimal parameter ($\epsilon = 1$).
    \label{fig.modemixing}}
    \vspace{-0.2cm}
\end{figure*}

\subsection*{{Framework to identify disorder-promoted stability}}
We summarize our framework to identify disorder-promoted stability in  Fig.~\ref{fig.modemixing}A, highlighting two key ingredients discussed thus far: the nonconvexity of the optimization landscape and the differentiability with respect to system parameters. Nonconvexity provides a necessary condition for the existence of heterogeneous parameter configurations (fine-tuned or not) whose stability exceeds that of the best homogeneous configuration. Differentiability, on the other hand, allows the homogeneous optimum to be a saddle point, which is a sufficient condition for the existence of heterogeneous configurations with higher stability\textemdash often reachable through random parameter perturbations.
Both ingredients are encoded in the structure of the Jacobian matrix, capturing the complex interplay between nodal dynamics and network topology.

{
Our results can be contextualized within this framework as follows. In networks governed by coupled second-order equations of motion, the Jacobian is non-Hermitian and thus the objective function of interest, $\Lambda_{\rm max}(\bm b)$, is nonconvex, implying that the global optimum $\bm b^*$ may be heterogeneous. For undirected networks ($A=A^\dagger$), the objective function $\Lambda_{\rm max}$ is nondifferentiable at any minimum and hence finding the global minimum $\bm b^*$ is a challenging optimization task. In contrast, for directed networks ($A\neq A^\dagger$), the function $\Lambda_{\rm max}$ can be differentiable at $\bm b_{\rm hom}^*$. In this case, $\bm b_{\rm hom}^*$ may be a saddle point, so that small random perturbations $\bm b_{\rm hom}^* + \delta\bm b$ can improve stability, thus leading to disorder-promoted stability.
}

\subsection*{{A mechanistic interpretation}}
We start with a simple geometric intuition based on Gershgorin’s disc theorem, which bounds all eigenvalues of $J$ within the union of closed discs $D_i(J_{ii},R_i)$ centered at $J_{ii}$ with radius $R_i = \sum_{j\neq i} |J_{ij}|$, for $i=1,\ldots,n$.
For coupled first-order systems described by the Jacobian \eqref{eq.jac1storder}, the increase of homogeneous damping shifts all discs uniformly leftward in the complex plane, reducing $\Lambda_{\mathrm{max}}$ and thus increasing stability (Fig.~\ref{fig.modemixing}B, left). In contrast, second-order systems described by the Jacobian \eqref{eq.jacobian} have two distinct families of discs: one associated with ``inertial modes'' $\mathcal D_1 = \{D_1(0,1),\ldots,D_N(0,1)\}$ and another with ``damping modes'' $\mathcal D_2 = \{D_{N+1}(-b_1,2 d_1),\ldots,D_{2N}(-b_N,2 d_N)\}$.  We recall that a disjoint cluster of $k$ connected discs must contain exactly $k$ eigenvalues \cite{varga2011gervsgorin}. Under homogeneous damping, the families $\mathcal D_1$ and $\mathcal D_2$ are disjoint for large $\|\bm b\|$, trapping part of the spectrum\textemdash including $\Lambda_{\mathrm{max}}$\textemdash within the less stable region $\mathcal D_1$ (Fig.~\ref{fig.modemixing}B, middle). In contrast, under heterogeneous damping, the damping discs are shifted leftward unevenly for large $\|\bm b\|$; this creates overlaps that allow eigenvalues to ``escape'' from $\mathcal D_1$ into the more stable regions in $\mathcal D_2$ (Fig.~\ref{fig.modemixing}B, right).
{Even though Gershgorin’s theorem highlights how heterogeneity can alter the eigenspectrum to potentially enhance stability, it only provides conservative bounds. A leftward shift of the Gershgorin discs does not necessarily imply that $\Lambda_{\rm max}$ becomes more negative.}

To precisely characterize how the eigenspectrum is modulated by heterogeneity, we must examine the structure of the eigenvectors. Assume that $L$ is diagonalizable, with $U^{-1} L U = \operatorname{diag}(\lambda_{L,1},\ldots,\lambda_{L,N}) = D_{L}$, where $\lambda_{L,i}$ are eigenvalues of $L$. The Jacobian matrix \eqref{eq.jacobian}  is thus similar to
\begin{equation}
\label{eq.similarJ}
    J'(\bm b;A) = (I_2\otimes U)^{-1} J (I_2\otimes U) = \begin{bmatrix}
        0_N & I_{N} \\
        -D_{L} & -U^{-1}BU\end{bmatrix}.
\end{equation}
This representation shows that, when $B = b_{\rm hom}I_N$ and $U$ can be chosen to be unitary (as in the case of undirected networks), each eigenvector of $L$ maps uniquely to a pair of eigenvectors of $J$. We refer to each such pair as a ``{network mode},'' noting that under homogeneous damping all network modes are mutually orthogonal (SM, Proposition~\ref{prop.JJ'}). A central result is that, for undirected networks, {the optimum parameter $\bm b^*$ is reached only if at least two eigenvectors of $J$ become parallel} (SM, Theorem~\ref{thm.mode_mixing}). We now {show} that heterogeneity can induce such parallelism among eigenvectors of $J$.

\begin{figure*}[t]
    \centering
    \includegraphics[width=0.89\textwidth]{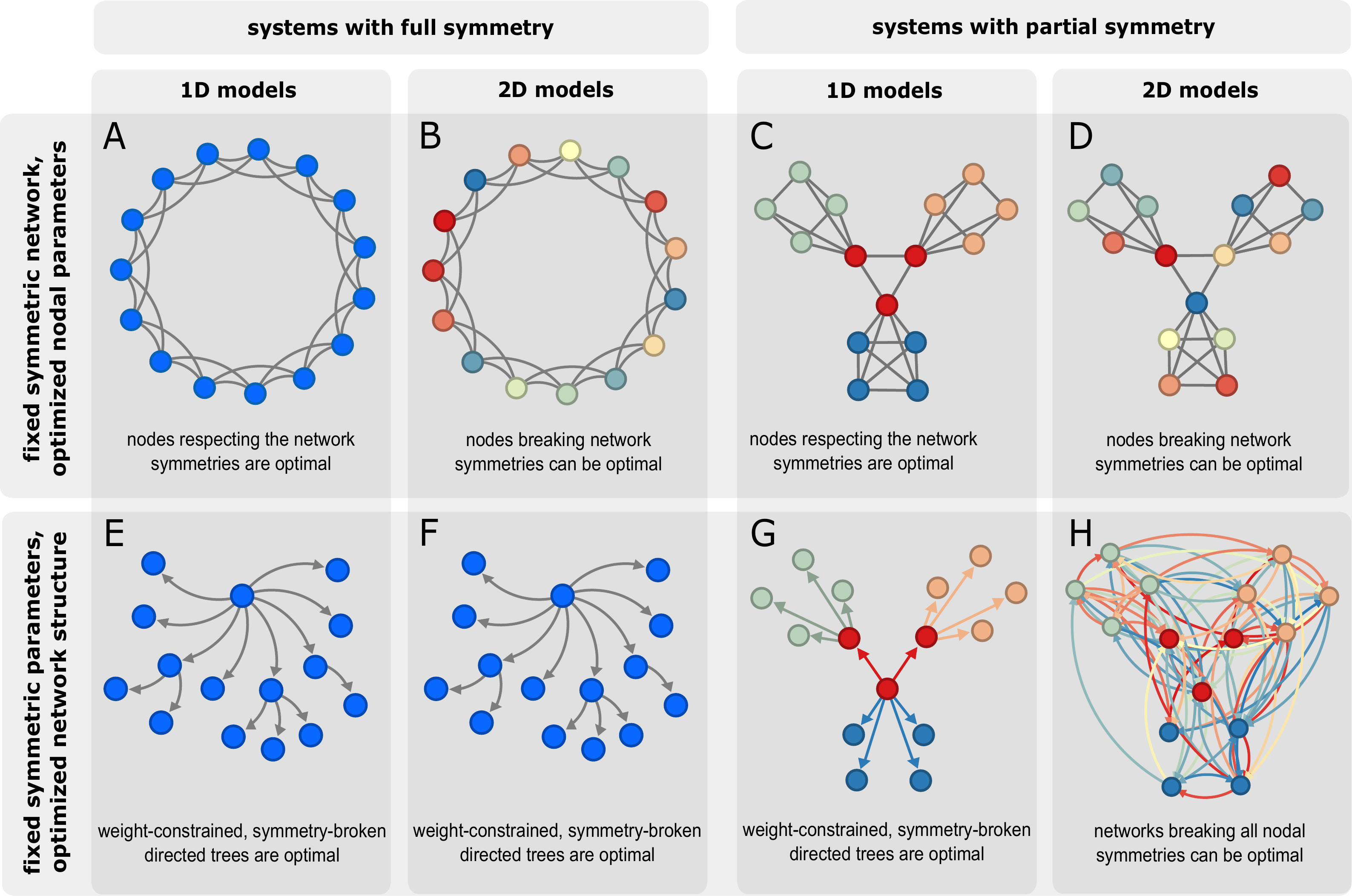}
    \caption{\textbf{Relationship between optimal state stability and system symmetries.} 
    (\textbf{A--D})~Optimal nodal parameter configurations that maximize stability for a {fixed} (symmetric or partially symmetric) network structure. The node colors indicate distinct nodal parameters. Panels A and B correspond to circulant networks, in which all nodes belong to a single symmetry cluster determined by cyclic permutations, whereas panels C and D show networks with four distinct permutation symmetry clusters.
    (\textbf{E--H})~~Optimal network structures that maximize stability for a {fixed} (symmetric or partially symmetric) assignment of nodal parameters. Different colors indicate distinct interaction weights and nodal parameters. Panels E and F show systems with identical (fully symmetric) nodal parameters, whereas panels G and H show systems with four distinct parameter values (corresponding to four symmetry clusters). In all cases, the illustrated configurations that optimize stability are only achieved for specific choices of nodal parameters (panels A to D) or edge weights (panels E to H).
    }
    \label{fig.symmetries}
    \vspace{-0.2cm}
\end{figure*}

Under the assumption that $U$ in Eq.~\eqref{eq.similarJ} is unitary, we apply the transformation $\delta \bm \eta = U^{-1}\delta \bm y$ to the linearized dynamics $\delta\dot{\bm x} = J (\bm b;A) \delta\bm x$, where $\delta\bm{x} = (\delta{\bm y}^\transp, \delta{\dot{\bm y}}^\transp)^\transp$, to obtain the variational equation
\begin{equation}
    \delta \ddot \eta_i + \sum_{k=1}^N \sum_{j=1}^N U_{ki}U_{kj} b_k  \delta\dot\eta_j = - \lambda_{L,i} \delta\eta_i, \quad {\rm for} \,\, i=1,\ldots,N.
\label{eq.variational}
\end{equation}
In the homogeneous case, we have that $U^{-1}BU = b_{\rm hom}I_N$ and hence the modes $\delta \eta_i$ are completely decoupled, as in the classical MSF analysis \cite{fujisaka1983stability,Pecora1998}. Introducing heterogeneity in $\bm b$ creates an off-diagonal structure in $J'$, which in turn induces {mode mixing} among $\delta\eta_i$ through the cross-terms $U_{ki}U_{kj}b_k\delta\dot{\eta}_j$. 
This mode mixing has been previously accounted for in the MSF analysis of near-identical oscillators \cite{sun2009master,sugitani2021synchronizing}.
Eq.~\eqref{eq.variational} allows us to extend the analysis of mode mixing even when the mismatches $(b_i-b_j)$ are very large, providing an exact description of how heterogeneities affect stability for equilibrium points that are independent of $\bm b$. 
Importantly, in the context of this study, mode mixing is the mechanism that forces two (or more) eigenvectors of $J$ to align, satisfying the necessary condition for optimal stability.
The analysis applies not only to the second-order Kuramoto model but also to a broader class of second-order systems, including all of those in SM, Table~\ref{tab.systems}.

Fig.~\ref{fig.modemixing}C illustrates the mode-mixing mechanism in a 3-node network as the damping $\bm b(\epsilon)$ varies along a straight line from $\bm b = \bm b_{\mathrm{hom}}^*$ to the (heterogeneous) global optimum $\bm b = \bm b^*$, parameterized as $\bm b(\epsilon) = \bm b_\mathrm{hom}^* + \epsilon(\bm b^* - \bm b_\mathrm{hom}^*)$. As predicted by our theory (SM, Proposition~\ref{prop.JJ'} and Theorem~\ref{thm.mode_mixing}), the initially orthogonal network modes progressively align, and a pair of eigenvectors (purple-green) becomes parallel exactly when $\Lambda_{\mathrm{max}}$ reaches its minimum value at $\bm b^*$. Thus, heterogeneity acts as the driver of mode mixing, which in turn causes eigenvector alignment, leading to maximal stability.

\vspace{-2pt}
\subsection*{Nodal versus network heterogeneities}
Thus far, we have focused on heterogeneities in {nodal parameters}. The same framework also applies to the analysis of how heterogeneities in {network structure} affect state stability.
In particular, we {addressed} the problem of identifying an optimal network that maximizes state stability under budget constraints on the interaction weights, formulated as
\begin{equation}
\begin{aligned}
    \min_{A} \,\, & \Lambda_{\rm max}\left(J(\bm b; A)\big|_{\bm x^{\rm eq}}\right) \\
    \text{s.t.} \,\, & |d_i|\leq d_{\textrm{max},i}, \,\, \|A\|_0 \leq S_{\rm max}.
\label{eq.optimizationet}
\end{aligned}
\end{equation}
The constraints on the node in-degrees are introduced to prevent trivial solutions involving unbounded edge weights, and the $\ell_0$-norm enforces a sparsity constraint on the total number of edges in the network. Extending our result for nodal heterogeneities, we {proved} that, whenever Eq.~\eqref{eq.optimizationet} is convex, at least one optimal adjacency matrix $A^*$ must respect the system symmetries imposed by the nodal parameter vector $\bm b$ (i.e., the permutations $P$ satisfying $\bm b = P\bm b$). See SM, Theorem~\ref{thm.sm.optadj}, for a proof of this result.
This constrained optimization problem is always nonconvex (SM, Corollary~\ref{cor.sm.optadj_constrain}), even if $A$ were restricted to be Hermitian, and hence we can conclude that network structures that break node-permutation symmetries imposed by $\bm b$ can yield higher stability.
More generally, even in the unconstrained case (with $S_{\rm max} = N^2$ and $d_{{\rm max},i}=\infty$), the adjacency matrix (and hence the Jacobian matrix) is generally non-Hermitian in the optimization space considered, independently of the dimensionality of the nodal dynamics; thus, the unconstrained optimization problem is also generally nonconvex.

These theoretical results are illustrated in Fig.~\ref{fig.symmetries}. The second row demonstrates that symmetry breaking induced by network heterogeneity can enhance stability in both 1D and 2D models, which is in contrast with the corresponding case of nodal heterogeneity summarized in the first row. 
Specifically, for systems with homogeneous nodal parameters, we {showed} that directed trees are optimal (SM, Example~\ref{examp.sm.optadj}). This result is consistent with previous MSF analyses identifying directed networks as optimal \cite{nishikawa2006synchronization,nishikawa2006maximum}, and it aligns well with our nonconvexity condition, given that directed trees explicitly break system symmetries. For example, in Fig.~\ref{fig.symmetries}E,F, the nodal parameter vector is fully symmetric (identical nodes), but the optimal network is not invariant upon permutations of nodes in different layers of the tree. Only in the special case $S_{\rm max} \geq N^2 - N$ is one of the optimal solutions fully symmetric (in fact, an all-to-all network).
For systems with partially symmetric nodal parameters, directed trees remain optimal in 1D models only if the assignment of edge weights respects the symmetry clusters induced by $\bm b$ (Fig.~\ref{fig.symmetries}G). In contrast, in 2D models described by the Jacobian~\eqref{eq.jacobian}, the optimal networks exhibit a more complex structure with broad degree distributions   and heterogeneous weights (Fig.~\ref{fig.symmetries}H and SM, Example~\ref{examp.sm.optadj2ndorder}).
These results lead naturally to the question of whether {disordered} network structures, characterized by randomly wired edges and/or interaction weights, can also promote stability, as considered next.

\begin{figure*}[ht!]
    \centering
    \includegraphics[width=0.8\textwidth]{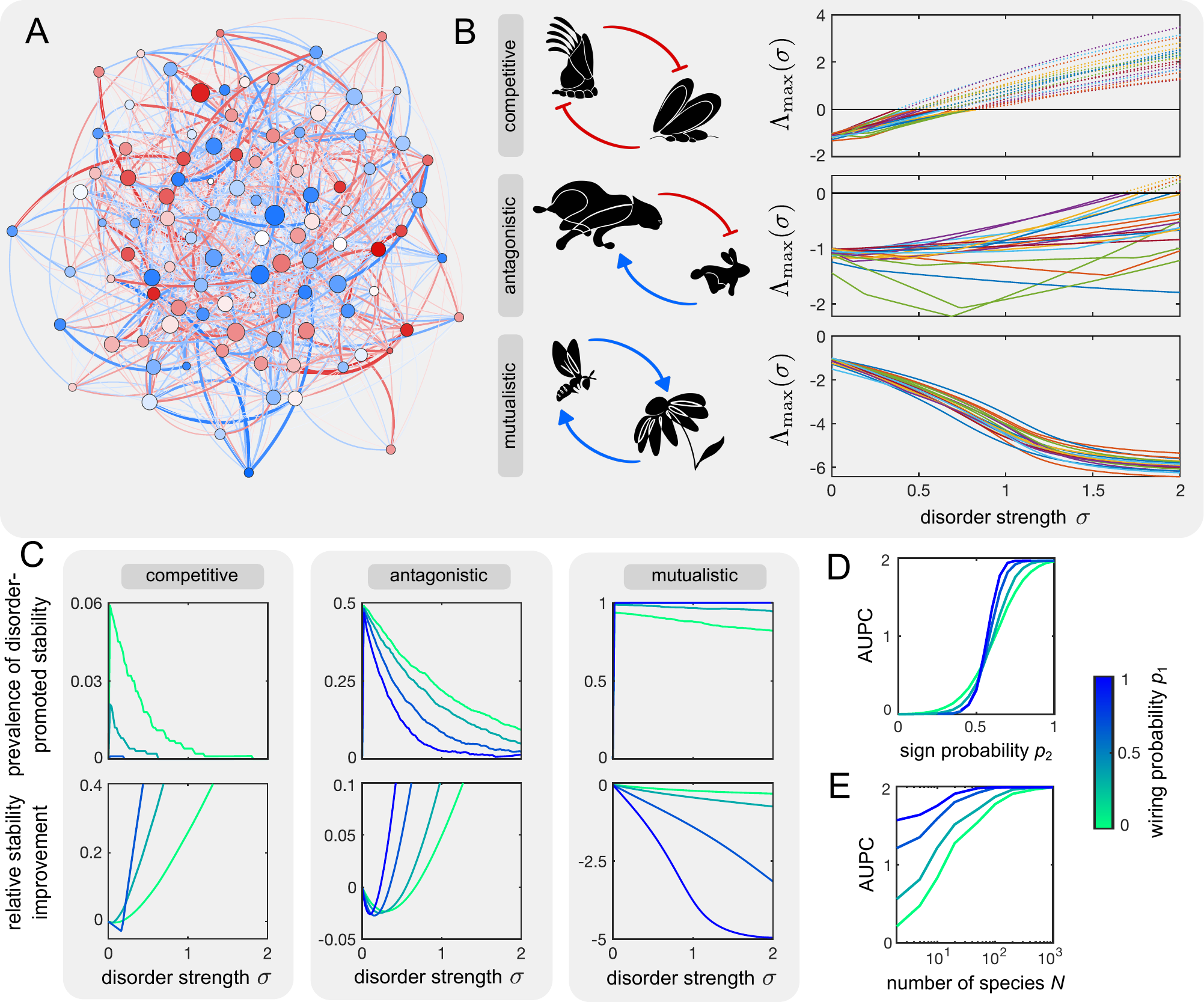}
    \caption{\textbf{Stability of large disordered ecological networks.} 
    (\textbf{A})~Example of a random ecological network with balanced excitatory (blue) and inhibitory (red) interactions, generated with parameters $(N,p_1,p_2)=(100,0.1,0.5)$. The nodes and edges are color coded by the in-degree and interaction weight, respectively.
    (\textbf{B}) Largest Lyapunov exponent $\Lambda_{\mathrm{max}}$ as a function of the disorder strength $\sigma$ for LV models with predominantly competitive (\textit{top}, $p_2 = 0.2$), antagonistic (\textit{middle}, $p_2 = 0.5$), and mutualistic (\textit{bottom}, $p_2 = 0.8$) interactions. Each curve corresponds to an independent network realization, where the dotted lines mark regions with nonfeasible equilibria (i.e., negative species abundances). Adding disorder to the system can improve stability in antagonistic and mutualistic networks.
    (\textbf{C})~Prevalence of systems exhibiting disorder-promoted stability (\textit{top}) and relative stability improvement (\textit{bottom}) as functions of $\sigma$, shown for different interaction types and color coded by the wiring probability $p_1$ of ER graphs. 
    (\textbf{D}) Area under the prevalence curve (AUPC) of disorder-promoted stability, shown as a function of the sign probability $p_2$ and color coded by the wiring probability $p_1$.
    (\textbf{E}) AUPC of disorder-promoted stability for mutualistic networks ($p_2 = 0.8$), now presented as a function of the number of species $N$.
    The results in panels C--E are averaged over 1{,}000 network realizations, and the AUPC is computed over the interval $\sigma\in[0,2]$ (as considered in panel C). These results indicate that mutualistic ecological networks with increased complexity (i.e., edge density and network size) are more likely to have their stability improved by disorder.  
    }
    \label{fig.ecologicalnet}
\end{figure*}

\subsection*{Stable biodiversity in disordered ecological networks}
Ecological systems have long been central to the study of disorder in complex systems, as species interactions are inherently heterogeneous due to variations in biological traits, population sizes, and environmental conditions. The coexistence of species, and thus biodiversity, critically depends on ecosystem stability against perturbations. A longstanding theoretical question concerns the conditions under which communities persist in the presence of disorder \cite{may1972will,chen2024stability}. Studies based on multi-species population dynamics, particularly the generalized Lotka-Volterra (LV) model, have shown that structural heterogeneity in interaction networks can either inhibit or promote coexistence, depending on whether the interactions are excitatory or inhibitory \cite{jansen2003complexity,sahasrabudhe2011rescuing,allesina2012stability,jacquet2016no,yeakel2020diverse,mazzarisi2024complexity,lechon2024robust}. 
For example, mutualistic ecological networks often exhibit modular, highly connected structures that promote coexistence, whereas antagonistic interactions typically stabilize weakly coupled communities \cite{thebault2010stability}.
Here, we {applied} our framework to this ecological context and {showed} that disorder\textemdash now arising from irregular connectivity patterns, including (weighted and signed) degree heterogeneity and interaction asymmetry\textemdash often increases the robustness of ecosystems against perturbations.

Consider the generalized LV model describing the population dynamics of $N$ interacting species:
\begin{equation}
    \dot{x}_i = x_i \Bigg(b_i + \sum_{j=1}^N A_{ij} x_j\Bigg), \quad {\mathrm{for}} \,\,\, i = 1,\ldots,N,
\label{eq.lvmodel}
\end{equation}

\noindent
where $x_i$ denotes the abundance and $b_i$ the intrinsic growth rate of species $i$. When the adjacency matrix $A$ is invertible, the system admits the equilibrium $\bm x^{\mathrm{eq}} = - A^{-1}\bm b$. We refer to this equilibrium as {feasible} if all abundances are nonnegative (i.e., $x_i\geq 0$, $\forall i$); a coexistence equilibrium corresponds to all abundances being strictly positive. Linearizing Eq.~\eqref{eq.lvmodel} around $\bm x^{\mathrm{eq}}$ yields the Jacobian matrix $J = XA$, where $X = \operatorname{diag}(x_1^{\mathrm{eq}}, \ldots, x_N^{\mathrm{eq}})$. The state stability depends on the balance between excitatory and inhibitory interactions \cite{martinez2024stabilization,cure2023antagonistic}. 
Specifically, an interaction between species $i$ and $j$ is defined as {mutualistic} if $A_{ij}, A_{ji} > 0$ (e.g., pollinators and flowering plants), {competitive} if $A_{ij}, A_{ji} < 0$ (e.g., barnacles and mussels competing for substrate), and {antagonistic} if $A_{ij} > 0$ and $A_{ji} < 0$ (e.g., predator-prey pairs).
To further examine the impact of disordered network structures, we {considered} (random) perturbations $\delta A$ to the interaction matrix given by $A(\sigma) = A_0 + \sigma \delta A$, where $\sigma$ controls the disorder strength, $A_0 = -\alpha I_N$ represents the uncoupled baseline, and $\alpha\geq 0$ is the decay rate. 
{Although the nodal dynamics of the LV model are 1D, we note that disorder-promoted stability is allowed in this setting, given that disorder is introduced through the network structure (rather than the nodal parameters). For the matrix perturbations $\delta A$ considered here, the Jacobian matrix $J$ is non-Hermitian and hence $\Lambda_{\rm max}$ is generally a nonconvex function of $A$. Therefore, network stability is not expected to favor configurations that exhibit permutation symmetries.}
In fact, using Gershgorin disc analysis, we can prove that, in strictly mutualistic systems, disorder-promoted stability is guaranteed for any perturbation $\delta A$ and sufficiently small $\sigma$ (SM, Theorem~\ref{thm.ecologicalnet}).

\vspace{3pt}

Fig.~\ref{fig.ecologicalnet} characterizes the impact of network disorder on ecological stability.
For each realization, the matrix perturbation $\delta A$ follows the (possibly sparse) structure of an Erdős–Rényi (ER) graph: off-diagonal entries $\delta A_{ij}$ are independently set as nonzero with probability $p_1$, and the nonzero weights are drawn according to 
$\delta A_{ij}\sim \vartheta|\mathcal N(0,1)|$, where $\vartheta=1$ with probability $p_2$ and $\vartheta=-1$ with probability $1-p_2$ (Fig.~\ref{fig.ecologicalnet}A). In this construction, small $p_1$ leads to sparse networks, whereas $p_2 \rightarrow 1$ ($p_2\rightarrow 0$) yields only mutualistic (competitive) interactions. 
The outcome of the analysis strongly depends on the interaction type (Fig.~\ref{fig.ecologicalnet}B). Disorder tends to destabilize the equilibria in competitive networks but strongly stabilizes mutualistic ones, leading to a fivefold improvement in $\Lambda_{\mathrm{max}}$ for the networks considered. Antagonistic networks ($p_2=0.5$) exhibit intermediate behavior: disorder initially promotes stability, but beyond a network-dependent critical $\sigma$, the equilibrium becomes unstable and infeasible. {Indeed, for both competitive and antagonistic networks, we {observed} in the numerical simulations that loss of feasibility always {preceded} loss of stability\textemdash where the former is in direct agreement with the theoretical prediction that feasibility implies stability in competitive LV networks \cite{lechon2024robust}.}  The coexistence equilibrium $\bm x^{\mathrm{eq}}$ is stable over a finite disorder range $\sigma\in[0,\sigma_{\mathrm{max}}]$ across all interaction types, and, moreover, disorder can promote coexistence in networks dominated by antagonistic and mutualistic interactions. The trends are robust against changes in edge density (Fig.~\ref{fig.ecologicalnet}C) and the balance between excitatory and inhibitory interactions  (Fig.~\ref{fig.ecologicalnet}D).  

\vspace{3pt}

Nonlinear growth effects have been proposed as a resolution to May's longstanding complexity-stability paradox \cite{allesina2012stability,meena2023emergent,chen2024stability}, which arises from the theoretical expectation that increasing the complexity of an ecological system (i.e., network size and edge density) would lead to less stability despite empirical evidence on the contrary \cite{may1972will}. Our results offer a new perspective on this problem, showing that the likelihood of disorder-promoted stability increases both with network size and edge density when interactions are predominantly mutualistic (Fig.~\ref{fig.ecologicalnet}E). These results {suggest} that, in the ensemble of disordered networks, systems with increased complexity and stability are available to be selected for by evolution.

\section*{Discussion}
\label{sec.discussion}

In 1972, Philip Anderson popularized the view that the behavior of complex systems cannot be understood through reductionist approaches, as new properties and organizing principles emerge through their interactions \cite{anderson1972more}. Following this insight, high-dimensional models with complex network topologies but simplified nodal dynamics\textemdash including Boolean networks, spin-based agent models, and Kuramoto oscillator networks\textemdash have been widely used to study emergent phenomena and phase transitions. Despite their success, such models can lead to fundamentally different conclusions about the relationship between the stability of emergent states and system parameters. Near a stable limit cycle, models of weakly coupled oscillators can generally be reduced to the dynamics of first-order oscillators \cite{nakao2016phase,omel2025phase}. Crucially, this phase reduction omits the (potentially stabilizing) mode-mixing terms that arise from nodal heterogeneities in the full high-dimensional oscillators.
In this study, we {showed} that dimensionally reduced models that neglect the non-Hermitian structure of the Jacobian matrix fail to capture the stabilizing role of heterogeneity. 
This point is well illustrated through the example of Janus oscillators, where each oscillator consists of two Kuramoto oscillators \cite{nicolaou2019multifaceted}; in this case, because each node is effectively 2D, heterogeneity {can} stabilize synchronization.

Even though synchronization dynamics have been a major focus of research concerning the impact of disorder in complex systems, here we {addressed} this problem in a more general setting and showed that disorder can stabilize a wider range of collective states. These states include fully symmetric states (such as synchronization and consensus), partially symmetric states (such as opinion polarization, associative-memory patterns, and chimera states), and fully asymmetric states (such as species coexistence). Within our framework, the symmetries of the state are implicitly encoded in the structure of the Jacobian matrix, whose (non-) Hermiticity determines the (non-)convexity of the optimization problem for the  objective function $\Lambda_{\rm max}(J)$. These relationships allow us to assess whether stability can be promoted by breaking system symmetries either through {nodal heterogeneity} or {network heterogeneity} (see SM, Table~\ref{tab.summary}, for a summary of our theoretical contributions). In undirected networks, the existence of optimal symmetry-broken nodal parameter configurations requires the nodal dynamics to be 2D or higher.
These conclusions are based on the stability analysis of a single, prespecified dynamical state and are quantified using the largest transverse Lyapunov exponent. 
The same framework can be extended to other objective functions of interest, including those designed to modify the stability region and/or the state of interest\textemdash for instance, to broaden the parameter regime over which critical behavior persists \cite{sanchez2023heterogeneity}, to increase the stability region in MSF analysis \cite{barahona2002synchronization}, or to induce frequency synchronization between clusters of oscillators \cite{ocampo2025frequency}.
Certain objective functions can be nonconvex even in networks of symmetrically coupled 1D oscillators, allowing for asymmetric solutions irrespective of the oscillator dimensionality \cite{ocampo2025frequency}.

Our results should not, however, be construed as implying that arbitrary heterogeneity promotes stability. Indeed, nodal and network heterogeneity have been shown to be destabilizing in many contexts, including pattern formation, phase transitions, and synchronization \cite{nakao2010turing,fruchart2021non,pando2024synchronization}. These observations are not in contradiction with this study, as our framework identifies conditions under which certain heterogeneous configurations improve stability, whereas others do not. For instance, Fig.~\ref{fig.arnold}C,D shows that the regimes in which parameter mismatches promote stability are smaller than regimes in which they promote instability, with sufficiently large heterogeneities always leading to instability. Moreover, when stability is promoted by disorder, the effect typically occurs over finite ranges of disorder strength (e.g., see Fig.~\ref{fig.disorder}B). 
This perspective has a direct connection to parameter optimization and related research on the optimal allocation of damping, inertia, and network weights in engineered systems, as extensively studied in the literature of power grids \cite{sun2021stability,fritzsch2024stabilizing}, multi-agent systems \cite{sarlette2009consensus,asadi2017distributed}, and other second-order systems \cite{nair2008stable}. 
For these systems, our framework reveals that optimized configurations of nodal heterogeneity need not arise as compensation for heterogeneity in network structure, given that heterogeneous optimal nodal parameters emerge even in fully symmetric (thus homogeneous) networks. 
Additional examples of disorder-promoted stability may be found in systems beyond those described by Eq.~\eqref{eq.generalnetwork}, including those with external periodic driving \cite{lucas2019nonautonomous,huang2026optimizing}, nonsmooth dynamics \cite{liu2024heterogeneity}, and higher-order interactions \cite{wang2026frequency} (see SM, Example~\ref{examp.sm.higherorder}, for a brief discussion on the latter). Even though we focused on equilibrium dynamics throughout the text, we emphasize that disorder-promoted stability can also occur in systems with periodic, chaotic, and multistable dynamics (SM, Sec.~\ref{sec.sm.spatiotemp}); in multistable systems, enhanced linear stability can be accompanied by a concurrent enlargement of the basin of attraction, thereby increasing resilience to perturbations.

Finally, we suggest that our results are relevant for interpreting
recent empirical and theoretical studies on the constructive role of
heterogeneity in living systems.
For example, it has been shown that interindividual differences drive schooling and consensus in animal groups \cite{jolles2020role,doering2022noise}, social heterogeneity promotes depolarization  \cite{ojer2025social}, and cell heterogeneity shapes self-organized patterns in developmental processes \cite{mishra2026geometry} and brain computation  \cite{padmanabhan2010intrinsic,dahmen2025heterogeneity}.
In this broad context of nonequilibrium collective dynamics, our findings can thus inspire new strategies for artificial systems to optimize network stability in drone flocking \cite{montanari2025optimal}, chaos control \cite{braiman1995taming}, weakly-driven resonance \cite{tessone2006diversity,patriarca2025dynamical},  pattern formation \cite{teng2022heterogeneity}, and deep learning \cite{ziyin2025parameter}, among other applications.

\begin{footnotesize} 


\end{footnotesize}


\section*{Acknowledgments}
\begin{small}
The authors thank Jorin Graham for insightful discussions on ecological networks.
\textbf{Funding:}
This work is supported by the Army Research Office (MURI Grant No.\ W911NF-22-2-0109) and the National Science Foundation (Grant No.\ DMS-2308341).
The authors also acknowledge the stimulating research environment provided by the NSF-Simons National Institute for Theory and Mathematics in Biology (NSF Grant No.\ DMS-2235451 and Simons Foundation Grant No.\ MP-TMPS-00005320).
\textbf{Author contributions:}
A.N.M., P.Z., and A.E.M. designed the research; A.N.M. and P.Z. developed the theory, performed the numerical simulations, and analyzed the data; A.N.M. led the writing of the manuscript; A.N.M., P.Z., and A.E.M. contributed to the interpretation of the results and editing of the final version of the paper.
\textbf{Competing interests:}
None declared.
\textbf{Data and materials availability:}
Code and data are available through our Zenodo repository \cite{papercode}.
All (other) data needed to evaluate the conclusions in the paper are present in the paper or the Supplementary Materials.
\end{small}

\begin{small}
\medskip\noindent
\section*{Supplementary Materials}
Materials and Methods; Supplementary Text (Sections S1 to S5); Figs. S1 to S8; Tables S1 to S3; References [98]--[126].
\end{small}

\nolinenumbers
\onecolumn

\setcounter{equation}{0}
\setcounter{figure}{0}
\setcounter{table}{0}
\setcounter{page}{1}
\setcounter{section}{0}
\makeatletter
\renewcommand{\theequation}{S\arabic{equation}}
\renewcommand{\thefigure}{S\arabic{figure}}
\renewcommand{\thesection}{S\arabic{section}}
\renewcommand{\thetable}{S\arabic{table}}
\renewcommand{\thepage}{S\arabic{page}}

\noindent
\textbf{\LARGE Materials and Methods} 

\bigskip\noindent
\textbf{Numerical optimization and model parameters}

\noindent
For each network system listed in Table~\ref{tab.systems}, we numerically solve both the unconstrained optimization problem \eqref{eq.optimization} and its homogeneously constrained counterpart (subject to $\bm b = b\bm 1_N$) in order to determine the optimal parameter configurations $\bm b^*$ and $\bm b_{\rm hom}^*$, respectively.
%
%
To improve convergence towards (near) global optima, we employ a Quasi-Newton method, implemented by the MATLAB function \texttt{fminunc}, over 100 random initial conditions $b_i\sim\mathcal N(\bm b_{\mathrm{hom}}^*,1)$, and then select the best solution across all trials.
 The resulting distribution of parameter values $\bm b^* = (b_1^*,\ldots,b_N^*)$ is shown in Fig.~\ref{fig.disorder}A for each network model. 

The Jacobian $J$ of each system is also reported in Table~\ref{tab.systems}. In Fig.~\ref{fig.disorder}A, we consider undirected all-to-all networks with i.i.d.~random weights drawn from a uniform distribution (i.e., $A_{ij}\sim\mathcal U[0,1]$). Under this choice, Proposition~\ref{cor.sm.localcurvature} ensures that nonconvexity holds with probability $1$ for all the systems with 2D nodal dynamics, confirming that our numerical experiments probe the generic case.
All other model parameters, except for the optimization parameter $\bm b$, are fixed and identical across nodes. Specifically, we set $\omega = 0$ in the first-order leaky Kuramoto model, $P = 0$ in the second-order Kuramoto model, $k = 0.5$ in the mass-spring-damper network, and $(\omega,\varepsilon) = (0, 0.1)$ in the phase-amplitude oscillator. In general, we examine stability at the equilibrium point $\bm x^{\mathrm{eq}} = 0$. For the Josephson junction model, we set $(I_b,I_c) = (1,1.5)$ to ensure the existence of this equilibrium. In the FitzHugh-Nagumo model, we set $(a,\tau)=(1.5,1)$ as it ensures the existence of a nontrivial equilibrium $\bm x^{\mathrm{eq}}\neq 0$; this equilibrium is computed numerically and, for $\epsilon>20$, it undergoes a bifurcation and ceases to exist (Fig.~\ref{fig.disorder}A, yellow curve).

\begin{table*}[h]
    \footnotesize
    \centering
    \caption{\label{tab.systems} \textbf{Network models of coupled dynamical systems across different domains.}}
    \vspace{-0.3cm}
    \addtolength{\tabcolsep}{-0.2em}
    \begin{tabular}{l l l l l}
        \toprule[1.5pt]
        & \textbf{System} & \textbf{Equations of motion} & \textbf{Jacobian matrix} & \textbf{Quantities} \\
        \toprule[1.5pt]

        \multirow{7}{*}{\rotatebox[origin=c]{90}{\parbox{2.5cm}{\centering first-order models}}}
        & \multirow{2}{*}{%
            \begin{tabular}[t]{@{}l@{}}
                Leaky Kuramoto \\ \scriptsize{(synchronization)}
            \end{tabular}
        }
        & \multirow{2}{*}{$\dot \phi_i + b_i\phi_{i} = \omega + \sum_j A_{ij}\sin(\phi_j - \phi_i)$}
        & \multirow{2}{*}{$-(B+L)$}
        & \multirow{2}{*}{%
            \begin{tabular}[t]{ll}
                $\phi_i$ & oscillator phase \\
                $\omega$ & natural frequency
            \end{tabular}
        }\\
        & & & & \\
        \cmidrule(l){2-5}
        & \multirow{3}{*}{\begin{tabular}[t]{@{}l@{}}
                Driven Kuramoto \\ \scriptsize{(pacemaker)}
            \end{tabular}}
        & \multirow{3}{*}{$\dot{\phi}_i + b_i \sin(\phi_i - \psi) =  \omega + \sum_j A_{ij}\sin(\phi_j - \phi_i)$}
        & \multirow{3}{*}{$-(B+L)$}
        & \multirow{3}{*}{%
            \begin{tabular}[t]{ll}
                $\phi_i$ & oscillator phase \\
                $\omega$ & natural frequency \\
                $\psi$ & reference phase
            \end{tabular}
        }\\
        & & & & \\
        & & & & \\
        \cmidrule(l){2-5}
        & \multirow{2}{*}{\begin{tabular}[t]{@{}l@{}}
                Consensus \\ \scriptsize{(opinion, coordination)}
            \end{tabular}}
        & \multirow{2}{*}{$\dot x_i + b_i x_i = \sum_j A_{ij}(x_j - x_i)$}
        & \multirow{2}{*}{$-(B+L)$}
        & \multirow{2}{*}{%
            \begin{tabular}[t]{ll}
                $x_i$ & opinion/state
            \end{tabular}
        }\\
        & & & & \\
        
        \addlinespace[3pt]
        \toprule[1.5pt]
        \addlinespace[3pt]

        \multirow{11}{*}{\rotatebox[origin=c]{90}{\parbox{4cm}{\centering second-order models}}}
        & \multirow{2}{*}{%
            \begin{tabular}[t]{@{}l@{}}
                2nd-order Kuramoto \\ \scriptsize{(power grids)}
            \end{tabular}
        }
        & \multirow{2}{*}{$\ddot{\phi}_i + b_i \dot{\phi}_i = P + \sum_j A_{ij} \sin(\phi_j - \phi_i)$}
        & \multirow{2}{*}{$\begin{bmatrix} 0_N &\!\!\!\! I_N \\ - L &\!\!\!\! - B \end{bmatrix}$}
        & \multirow{2}{*}{%
            \begin{tabular}[t]{ll}
                $\phi_i$ & oscillator phase \\
                $P$ & power generation
            \end{tabular}
        }\\
        & & & & \\
        \cmidrule(l){2-5}
        & \multirow{2}{*}{\begin{tabular}[t]{@{}l@{}}
                van der Pol \\ \scriptsize{(circuits, vibration)}
            \end{tabular}}
        & \multirow{2}{*}{$\ddot{x}_i - b_i(1 - x_i^2)\dot{x}_i + x_i = \sum_j A_{ij}(x_j - x_i)$}
        & \multirow{2}{*}{$\begin{bmatrix} 0_N &\!\!\!\! I_N \\ -(L+I_N) &\!\!\!\! -B \end{bmatrix}$}
        & \multirow{2}{*}{%
            \begin{tabular}[t]{ll}
                $x_i$ & displacement
            \end{tabular}
        }\\
        & & & & \\
        \cmidrule(l){2-5}
        & \multirow{2}{*}{%
            \begin{tabular}[t]{@{}l@{}}
                Spring-mass system \\ \scriptsize{(materials)}
            \end{tabular}
        }
        & \multirow{2}{*}{$\ddot{x}_i + b_i \dot{x}_i + k x_i = \sum_j A_{ij}(x_j - x_i)$}
        & \multirow{2}{*}{$\begin{bmatrix} 0_N &\!\!\!\! I_N \\ - (L + kI_N) &\!\!\!\! - B \end{bmatrix}$}
        & \multirow{2}{*}{%
            \begin{tabular}[t]{ll}
                $x_i$ & displacement \\
                $k$ & stiffness
            \end{tabular}
        }\\
        & & & & \\
        \cmidrule(l){2-5}
        & \multirow{3}{*}{\begin{tabular}[t]{@{}l@{}}
                Josephson junction \\ \scriptsize{(superconductors)}
            \end{tabular}}
        & \multirow{3}{*}{%
            \begin{tabular}[t]{l}
                $\ddot{\phi}_i + b_i \dot{\phi}_i + I_{\mathrm{c}} \sin(\phi_i) =  I_{\mathrm{b}} + \sum_j A_{ij}(\dot \phi_j - \dot \phi_i)$
            \end{tabular}
        }
        & \multirow{3}{*}{$\begin{bmatrix} 0_N &\!\!\!\! I_N \\ -\sqrt{I_{\mathrm{c}}^2 - I_{\mathrm{b}}^2} I_N &\!\!\!\! - (L+B) \end{bmatrix}$}
        & \multirow{3}{*}{%
            \begin{tabular}[t]{ll}
                $\phi_i$ & junction phase \\
                $I_{\mathrm{c}}$ & critical current \\
                $I_{\mathrm{b}}$ & bias current
            \end{tabular}
        }\\
        & & & & \\
        & & & & \\
        \cmidrule(l){2-5}
        & \multirow{2}{*}{%
            \begin{tabular}[t]{@{}l@{}}
                Multi-agent system \\ \scriptsize{(flocking, navigation)}
            \end{tabular}
        }
        & \multirow{2}{*}{%
            \begin{tabular}[t]{l}
                $\ddot{x}_i + b_i (\dot{x}_i - \dot x_{\mathrm{t}}) + b_i(x_i - x_{\mathrm{t}}) =$\\
                $\,\,\,\sum_j A_{ij}[(x_j - x_i) + (\dot x_j - \dot x_i)]$
            \end{tabular}
        }
        & \multirow{2}{*}{$\begin{bmatrix} 0_N &\!\!\!\! I_N \\ -(L+B) &\!\!\!\! - (L+B) \end{bmatrix}$}
        & \multirow{2}{*}{%
            \begin{tabular}[t]{ll}
                $x_i$ & agent position \\
                $x_{\mathrm{t}}$ & target position
            \end{tabular}
        }\\
        & & & & \\

        \addlinespace[3pt]
        \toprule[1.5pt]
        \addlinespace[3pt]

        \multirow{9}{*}{\rotatebox[origin=c]{90}{\parbox{3.4cm}{\centering two-dimensional models}}}
        & \multirow{6}{*}{%
            \begin{tabular}[t]{@{}l@{}}
                FitzHugh-Nagumo \\ \scriptsize{(neurons)}
            \end{tabular}
        }
        & \multirow{6}{*}{%
            \begin{tabular}[t]{l}
                $\,\,\,\,\,\dot v_i = v_i - \frac{v_i^3}{3} - w_i + \sum_j A_{ij}(v_j-v_i)$ \\
                $\tau\dot w_i = v_i + a - b_i w_i$
            \end{tabular}
        }
        & \multirow{6}{*}{$\begin{bmatrix} - L + I_N - V^2 &\!\!\!\!\!\! - I_N \\ \tau^{-1} I_N &\!\!\!\!\!\! - \tau^{-1} B \end{bmatrix}$}
        & \multirow{6}{*}{%
            \begin{tabular}[t]{ll}
                $w_i$ & neuron activator \\
                $v_i$ & neuron inhibitor \\
                $\tau$ & timescale separation \\
                $a$ & excitability threshold \\
                $v_i^{\mathrm{eq}}$ & equilibrium \\
                $V$ & $\operatorname{diag}({v_1^{\mathrm{eq}}},\ldots,v_N^{\mathrm{eq}})$
            \end{tabular}
        }\\
        & & & & \\
        & & & & \\
        & & & & \\
        & & & & \\
        & & & & \\
        \cmidrule(l){2-5}
        & \multirow{4}{*}{%
            \begin{tabular}[t]{@{}l@{}}
                Phase-amplitude \\ oscillator \\ \scriptsize{(lasers, circuits)}
            \end{tabular}
        }
        & \multirow{4}{*}{%
            \begin{tabular}[t]{l}
                $\dot r_i = b_i r_i (1-r_i) + \varepsilon r_i\sum_j A_{ij} \cos(\phi_j - \phi_i)$ \\
                $\dot\phi_i = \omega + r_i - 1 + r_i\sum_j A_{ij} \sin(\phi_j - \phi_i)$
            \end{tabular}
        }
        & \multirow{4}{*}{$\begin{bmatrix} - B &  -\varepsilon A \\ I_N & - L\end{bmatrix}$}
        & \multirow{4}{*}{%
            \begin{tabular}[t]{ll}
                $r_i$ & oscillator amplitude \\
                $\phi_i$ & oscillator phase \\
                $\omega$ & natural frequency \\
                $\varepsilon$ & coupling strength \\                
            \end{tabular}
        }\\
        & & & & \\
        & & & & \\
        & & & & \\
        \addlinespace[3pt]
        \bottomrule[1.5pt]
    \end{tabular}
    \begin{minipage}{\textwidth}
    \raggedright
    \vspace{2pt}
    \footnotesize
    $\,\,\,\,$ Across all systems, $B=\operatorname{diag}(b_1,\ldots,b_N)$ and $L$ is the Laplacian matrix associated with the adjacency matrix $A$.
\end{minipage}
\end{table*}

\bigskip\noindent
\textbf{Complex network models and datasets}

\noindent
For the generation of complex network models used in Fig.~\ref{fig.disorder}B--D, we considered Watts-Strogatz SW networks with parameters $p \in [0, 1]$ and $k=2$ as well as Barabási-Albert SF networks with parameter $m\in\{2,3,4,5\}$. Here,  $p$ is the rewiring probability, $k$ is the number of nearest neighbors in a ring graph, and $m$ is the number of edges of each node iteratively attached to the network. 
The circulant matrices in Fig.~\ref{fig.disorder}D represent directed first-neighbor ring networks, constructed using the generating vector $[1,-\delta,0,...,0,-1+\delta]$, with $\delta\sim\mathcal U[0,1]$ (see Definition~\ref{def.sm.circulant} in Section~\ref{sec.sm.circulant}).
For the empirical networks in Fig.~\ref{fig.disorder}E, we considered several adjacency matrices available in public
datasets (reported in Table~\ref{tab.realworldnets}).
For each model and empirical network, we generated a directed weighted network by independently assigning weights $A_{ij}\sim\mathcal U[0,1]$ for every directed edge $(i,j)$. Fig.~\ref{fig.sm.isotropy} provides a characterization of the eigenvector localization in directed SW and SF networks.

\begin{table}[t!]
\footnotesize
\centering 
\caption{\label{tab.realworldnets} \textbf{Datasets of empirical networks.}}
\vspace{-0.3cm}
\begin{tabular}{llrrl}
    \toprule[2pt]
    \textbf{Type} & \textbf{Name} & \textbf{Vertices}  & \textbf{Edges} & \textbf{Description}\\
    \toprule[2pt]
    Neuronal 
        & \textit{C. elegans} \cite{Kunegis2013} & $297$ & $2{,}345$ & Neuronal network of \textit{C. elegans}. \\
    \midrule[0.5pt]
    Ecological 
        & Florida wetlands \cite{clauset2016icon} & $128$ & $2{,}106$ & Food web in the wetlands of South Florida. \\
        & Little Rock Lake \cite{Kunegis2013} & $183$ & $2{,}494$ & Food web in Little Rock Lake. \\
    \midrule[0.5pt]
    Regulatory 
        & TRN-EC-2 \cite{Milo2002} & $418$ & $519$ & Transcriptional network of \textit{E. coli}. \\
        & \textit{C. elegans} \cite{Kunegis2013} & $453$ & $2{,}040$ & Metabolic network of \textit{C. elegans}. \\
    \midrule[0.5pt]
    Traffic  
        & FAA routes \cite{clauset2016icon} & $1{,}226$ & $2{,}615$ & Air traffic routes. \\
        & US agencies \cite{clauset2016icon} & $1{,}127$ & $5{,}480$ & Website traffic among government agencies. \\
    \midrule[0.5pt]
    Electronics & s208 \cite{Milo2002}  & $122$ & $189$ & Sequential logic circuit. \\ 
			& s838 \cite{Milo2002}  & $512$ & $819$ & Sequential logic circuit. \\ 
    \midrule
    Social 
        & Macaques \cite{clauset2016icon} & $62$ & $1{,}188$ & Dominance interactions among Japanese macaques. \\
        & Email company \cite{clauset2016icon} & $167$ & $82{,}928$ & Emails exchanged among employees of a manufacturing company. \\
\toprule[2pt]
\end{tabular}
\end{table}

\bigskip\noindent
\textbf{Ecological networks generation}

\noindent
In the ensemble of LV networks analyzed in Fig.~\ref{fig.ecologicalnet}, the growth rates are independently drawn from a uniform distribution, $b_i\sim\mathcal U[1,11]$. The adjacency matrix is parameterized as $A = A_0 + \sigma\delta A$, where $A_0 = \alpha I_N$, $\alpha = -1$ is the self-regulation strength, and $\sigma$ controls the disorder strength. For each realization, the perturbation $\delta A$ encodes structural disorder through three factors: edge probability $p_1$, sign probability $p_2$, and weight heterogeneity (drawn from a half-normal distribution). Thus, $\delta A_{ij} \sim +|\mathcal N(0,1)|$ with probability  $p_1 p_2$, $\delta A_{ij} \sim -|\mathcal N(0,1)|$ with probability $p_1(1-p_2)$, and $\delta A_{ij} = 0$ otherwise.


\newpage
\noindent
\textbf{\LARGE Supplementary Text}

\setcounter{tocdepth}{2}
\renewcommand{\contentsname}{}
\tableofcontents


\bigskip\noindent
\textbf{Outline.} The Supplementary Text is organized as follows. 
Section~\ref{sec.sm.arnold} provides analytical conditions for the synchronizability of different oscillator models, characterizing the stability regions (i.e., the Arnold tongues) in Fig.~\ref{fig.arnold}.
Section~\ref{sec.sm.jensen} proposes a theorem relating the symmetry of the optimal parameter vector to the convexity of the stability optimization problem. We then extend this theorem to establish an analogous relationship between the symmetry of the optimal network structure and the problem convexity. The applications of these results to several dynamical systems of interest are also discussed.
Section~\ref{sec.sm.dpsconditions} provides conditions for the existence and prevalence of disorder-promoted stability. Here, our focus is restricted to dynamical systems described by the Jacobian matrix \eqref{eq.jacobian} (e.g., power grids and spring-mass-damper systems) due to their analytical tractability and importance in physics. We introduce theorems that characterize the behavior of the eigenvalue spectrum and determine whether the globally optimum parameter configurations are located at homogeneous or heterogeneous points. Particular attention is given to circulant networks, which comprise a broad class of networks with an automorphism with full symmetry.
Section~\ref{sec.sm.modemix} builds on the theorems proposed in Secs.~\ref {sec.sm.jensen} and \ref{sec.sm.dpsconditions} to show that mode mixing is the fundamental mechanism underlying disorder-promoted stability. 
{Finally, Sec.~\ref{sec.sm.spatiotemp} presents numerical examples showing that disorder-promoted stability can also arise in broader classes of dynamical states other than equilibrium states, including periodic, chaotic, and multistable regimes. Table~\ref{tab.summary} summarizes the main theoretical contributions.
}

\newcommand{\vlabel}[2]{%
    \rotatebox[origin=c]{90}{%
        \begin{tabular}{@{}c@{}}
        \textbf{#1}\\[-1pt]
        \textbf{#2}
        \end{tabular}%
    }%
}

\begin{table*}[h!]
\footnotesize
\centering
\caption{\textbf{{Summary of main theoretical statements.}} \label{tab.summary}}
\vspace{-0.25cm}

\renewcommand{\arraystretch}{1.35}
\setlength{\tabcolsep}{4pt}

\begin{tabularx}{\textwidth}{
    >{\centering\arraybackslash}m{1cm}
    >{\raggedright\arraybackslash}X
    >{\raggedright\arraybackslash}m{2.25cm}
    >{\raggedright\arraybackslash}m{3cm}
}
\toprule[2pt]
\textbf{Type}
&
\textbf{Theoretical result}
&
\textbf{Statement}
&
\textbf{Section (main text)}
\\
\midrule[1.5pt]

\multirow{13}{*}{\vlabel{nodal}{heterogeneity}}
&
Convex stability landscape $\implies$ at least one global optimum $\bm b^*$ satisfies symmetries.

\textit{Contrapositive:} Nonconvex landscape
$\impliedby$ every $\bm b^*$ breaks symmetries.
&
Theorem~\ref{thm.jensenineq}
&
\multirow{4}{3cm}{Emergence of stability out of asymmetry}
\\

\cmidrule(lr){2-3}

&
Non-Hermitian Jacobian $\impliedby$ nonconvex stability landscape.
&
Proposition~\ref{cor.sm.nonhermitian}
&
\\

\cmidrule(lr){2-3}

&
Broad class of non-Hermitian Jacobians $\implies$ nonconvex stability landscape.
&
Proposition~\ref{cor.sm.localcurvature}
&
\\

\cmidrule(lr){2-4}
&
Undirected networks $\implies$ $\Lambda_{\rm max}$ is nondifferentiable at global optimum $\bm b^*$.

\textit{Contrapositive:} Network is directed $\impliedby$ $\Lambda_{\rm max}$ is differentiable at global optimum $\bm b^*$.
&
Proposition~\ref{prop.theor_hermit_diff}
&
\multirow{6}{3cm}{Disorder-promoted stability in complex networks}
\\

\cmidrule(lr){2-3}

&
Undirected networks $\implies$ $\Lambda_{\rm max}$ is nondifferentiable at homogeneous optimum $\bm b_{\rm hom}^*$.
&
Proposition~\ref{prop.nondifferentiablehom}
&
\\

\cmidrule(lr){2-3}

&
Broad class of directed circulant networks $\implies$
$\Lambda_{\rm max}$ is differentiable at $\bm b_{\rm hom}^*$.
&
Lemma~\ref{lem.circ_diff_ass}
&
\\

\cmidrule(lr){2-3}

&
Differentiable circulant networks $\implies$ $\bm b_{\rm hom}^*$ is a saddle point.

\textit{Consequently,} there exist small perturbations $\delta\bm b$ s.t. $\Lambda_{\rm max}(\bm b_{\rm hom}^*+\delta\bm b) < \Lambda_{\rm max}(\bm b_{\rm hom}^*)$.
&
Theorem~\ref{thm.local_CSB}
&
\\

\cmidrule(lr){2-4}
&
$\bm b$ is homogeneous and $L$ is Hermitian $\implies$ modes of $J(\bm b)$ are decoupled.
&
Proposition~\ref{prop.JJ'}
&
\multirow{3}{3cm}{A mechanistic interpretation}
\\

\cmidrule(lr){2-3}

&
$\bm b$ is a global optimum and $L$ is Hermitian $\implies$ modes of $J(\bm b)$ are coupled.

\textit{Moreover,} for heterogeneous $\bm{b}$, modes of $J(\bm{b})$ are coupled.
&
Theorem~\ref{thm.mode_mixing}
&
\\


\midrule[2pt]

\multirow{5}{*}{\vlabel{network}{heterogeneity}}
&
Convex stability landscape $\implies$ globally optimal network $A^*$ satisfies the symmetries.
&
Theorem~\ref{thm.sm.optadj}
&
\multirow{3}{3cm}{Nodal versus network heterogeneity}
\\

\cmidrule(lr){2-3}

&
The constrained stability optimization problem is nonconvex.

\emph{Consequently,} optimal networks $A^*$ may break the system symmetries.
&
Corollary~\ref{cor.sm.optadj_constrain}
&
\\

\cmidrule(lr){2-4}

&
Network interactions are mutualistic $\implies$ there exist small network perturbations $\delta A$ such that $\Lambda_{\rm max}(\delta A) < \Lambda_{\rm max}(0)$.
&
Theorem~\ref{thm.ecologicalnet}
&
{Stable biodiversity in disord.~ecological nets}
\\

\bottomrule[2pt]
\end{tabularx}
\end{table*}

\newpage
\noindent
\textbf{Notation.} 
Throughout the main and supplementary text, we adopt the following notation. Column vectors are represented by bold lower-case letters (e.g., $\bm x, \bm y, \bm z$), matrices are represented by capital letters (e.g., $X, Y, Z$), and sets are represented by calligraphic capital letters (e.g., $\mathcal X, \mathcal Y, \mathcal Z$). The $(i,j)$th entry of matrix $A$ is denoted by $A_{ij}$ or $(A)_{ij}$, and $x_i$ is used to denote the $i$th element of a vector $\bm x$. 
The cardinality (number of elements) of a set $\mathcal V$ is denoted by $\abs{\mathcal V}$.
The identity matrix of order $n$ is denoted by $I_n$ and $0_{n}$ is an $n\times n$ null matrix (the subscript is often omitted when it is self-evident from the context). 
The $n$-dimensional vectors of ones and zeros are denoted by $\bm 1_n$ and $\bm 0_n$, respectively. 
The diagonal matrix formed by the entries $a_1,\ldots,a_n$ is denoted by $\operatorname{diag}(a_1,\ldots,a_n)$.
The transpose of a matrix $A$ and of a vector $\bm v$ are denoted respectively by $A^\transp$ and $\bm v^\transp$, while their conjugate transpose are denoted by $A^{\dagger}$ and $\bm v^{\dagger}$. The complex conjugate of a scalar $z$ is denoted by $z^{\mathrm{c}}$. The square root of a complex number always refers to its principal value. The imaginary unit is denoted by $\mathrm{i}$.
A realization of a random variable drawn from a Gaussian distribution with mean $\mu$ and standard deviation $\sigma$ is denoted by $x\sim N(\mu,\sigma^2)$, and a realization of a random variable drawn from a uniform distribution in the interval $[a,b]$ is $x\sim\mathcal U[a,b]$.

\bigskip
\section{Arnold tongues of coupled heterogeneous oscillators}
\label{sec.sm.arnold}

We examine the stability of the dynamical systems used to generate the Arnold tongues in Fig.~\ref{fig.arnold}.

\medskip\noindent
\textbf{Kuramoto oscillators.} Consider a pair of classical Kuramoto oscillators,
\begin{equation}
    \dot{\phi}_i = \omega_i + K\sin(\phi_j-\phi_i), \quad\quad \text{for} \,\,\, i,j=1,2,
\label{eq.sm.kuramotopair}
\end{equation}

\noindent
where $\phi_i\in\mathbb S^1$ is the phase of oscillator $i$, $\omega_i$ is the oscillator's natural frequency, and $K$ is the coupling strength. The dynamics of the phase difference $\delta = \phi_1-\phi_2$ are given by $\dot\delta = \Delta\omega - 2K\sin(\delta)$, where $\Delta\omega = \omega_1-\omega_2$ measures the parameter mismatch. Linearization of this system around the equilibrium $\delta^{\mathrm{eq}} = \sin^{-1}(\Delta\omega/2K)$ allows us to express Eq.~\eqref{eq.sm.kuramotopair} as
\begin{equation*}
    \dot{\delta} = -2K\cos(\delta^{\mathrm{eq}}) \delta + \text{higher-order terms}.
\end{equation*}
It thus follows that the synchronized state $\delta^{\mathrm{eq}}$ is stable if and only if $|\Delta\omega/2K|< 1$, as illustrated in the Arnold tongue in Fig.~\ref{fig.arnold}A.

Notably, this result can be generalized to arbitrary networks of $N$ coupled Kuramoto oscillators:
\begin{equation}
    \dot{\phi}_i = \omega_i + K\sum_{j=1}^N A_{ij}\sin(\phi_j-\phi_i), \quad\quad \text{for}\,\,\, i = 1,\ldots,N,
\label{eq.sm.kuramotonet}
\end{equation}
where $A \in \R^{N \times N}$ is the adjacency matrix of the underlying graph $\mathcal{G} = (\mathcal{V}, \mathcal{E})$, with a set of nodes $\mathcal{V} = \{1,\ldots,N\}$ and a set of edges $\mathcal{E} = \{(i,j): A_{ij} \neq 0\}$.
As shown in Ref.~\cite{Dorfler2013}, the system is phase synchronized (i.e., $|\phi_i - \phi_j|<\gamma\leq \frac{\pi}{2}$ for all pairs and time $t\geq T$) if 
\begin{equation*}
    \norm{KL^+\bm\omega}_{\mathcal E,\infty}\leq\sin(\gamma),
\end{equation*}
where $L^+$ is the pseudoinverse of the corresponding Laplacian matrix $L$, $\bm\omega = (\omega_1, \ldots, \omega_N)$ is the vector of natural frequencies, and $\|\bm x\|_{\mathcal E,\infty} = \max_{(i,j)\in\mathcal E} |x_i - x_j|$ measures the maximum dissimilarity between pairs of oscillators connected by an edge. Clearly, when all oscillators are sufficiently similar, $\norm{KL^+\bm\omega}_{\mathcal E,\infty}\approx 0$ and the synchronized state is stable.

\medskip\noindent
\textbf{First-order leaky Kuramoto oscillators.}
Consider the ``leaky'' version of the classical Kuramoto model,
\begin{equation}
\dot{\phi}_{i} + b_{i}\phi_{i} = \omega + K\sum_{j=1}^{N} A_{ij} \sin(\phi_{j}-\phi_{i}), 
\label{eq.sm.leakykuramoto}
\end{equation}
where $b_i$ is a leak rate (or damping coefficient). In the co-rotating frame, the Jacobian matrix is given by $J = -(B+KL)$ at equilibrium $\phi_i^{\mathrm{eq}} = 0$, $\forall i$, where $B = \operatorname{diag}(b_1,\ldots,b_N)$. When the adjacency matrix has only nonnegative entries and zero diagonal entries, $J$ is an M-matrix. It is thus evident that homogeneously increasing $b_i$ is sufficient to minimize  $\Lambda_{\mathrm{max}}$ globally (as shown in Fig.~\ref{fig.arnold}B), regardless of the network structure and the number of oscillators $N$.

\medskip\noindent
\textbf{Second-order Kuramoto oscillators.} Consider the second-order Kuramoto model
\begin{equation}
\ddot{\phi}_{i} + b_{i}\dot{\phi}_{i} = \omega_{i} + K\sum_{j=1}^{N} A_{ij} \sin(\phi_{j}-\phi_{i}),
\label{eq.sm.2ndkuramoto}
\end{equation}

\noindent
where $b_i$ is the damping coefficient. For nonidentical natural frequencies $\omega_i$, the Jacobian matrix evaluated at an equilibrium point $\bm\phi^{\mathrm{eq}} = (\phi_1^{\mathrm{eq}},\ldots,\phi_N^{\mathrm{eq}})$ is generally given by 
\begin{equation}
J=\begin{bmatrix}\,\,\,\,\,\,\,\,0_N&\,\,\,\,\,\,\,\,I_N\\-\Tilde{L}&-B\end{bmatrix}, \,\,\,\,\,\,\, \text{where} \,\,\, \Tilde{L}_{ij} = -KA_{ij}\cos(\phi_{i}^{\mathrm{eq}}-\phi_{j}^{\mathrm{eq}})+\delta_{ij}K\sum_{k=1}^{N}A_{ik}\cos(\phi_{i}^{\mathrm{eq}}-\phi_{k}^{\mathrm{eq}}).
\label{eq.sm.jacobian_second_kur}
\end{equation}

\noindent
In the case of a pair of coupled oscillators with reciprocal interactions (i.e., $N=2$ and $A_{12} = A_{21} = 1$), the eigenvalues of $J$ are determined by the following characteristic polynomial:
\begin{equation*}
\begin{aligned}
0&=\det(J-\lambda I_N), \\
&=\det(\lambda^2 I_N+\lambda B+ \Tilde L), \\
&=\det\begin{bmatrix}\lambda^2+\lambda b_{1}+l&-l\\-l&\lambda^2+\lambda b_{2}+l\end{bmatrix}, \\
\end{aligned}
\end{equation*}

\noindent
where $l = K\cos(\phi_1^{\mathrm{eq}} - \phi_2^{\mathrm{eq}})$, and we used the fact that the identity matrix commutes with the Laplacian matrix $L$. Expanding this determinant yields the quartic equation
\begin{equation}
0=\lambda\left(\lambda^{3}+2\bar{b}\lambda^2 +\lambda\left(\frac{4\bar{b}^2-\Delta b^2}{4}+2l\right)+2l\bar{b} \right),
\label{eq:2nd_ord_Kur}
\end{equation}

\noindent
where we define $\bar{b} = \frac{1}{2}(b_1+b_2)$ and $\Delta b = b_1 - b_2$. Thus, the largest Lyapunov exponent, as shown in Fig.~\ref{fig.arnold}C, is determined by a cubic polynomial. When the oscillators are uncoupled (i.e., $l=0$), we have an additional null eigenvalue $\lambda_2 = 0$, while the others are given by $\lambda_{3,4}=-\bar{b}\pm\frac{\Delta b}{2}$.
For small values of $l$, the eigenvalues $\{\lambda_2,\lambda_3,\lambda_4\}$ are slowly perturbed. If the system is overdamped (large $\bar b$),  increasing $\Delta b$ will force $\lambda_2$ and $\lambda_3$ to approach each other, yielding a smaller $\Lambda_{\mathrm{max}}$. On the other hand, for large values of $l$, the system is underdamped (small $\bar b$). Due to the linear term in Eq.~\eqref{eq:2nd_ord_Kur}, increasing $\Delta b$ is similar to decreasing $\bar{b}$, thus increasing $\Lambda_{\mathrm{max}}$.

\medskip\noindent
\textbf{Phase-amplitude oscillators.} A phase-amplitude oscillator model was introduced in Ref.~\cite{nishikawa2016symmetric} to show the existence of classes of oscillator networks in which synchronization can happen only in the presence of oscillator heterogeneity, even when the oscillators are identically coupled to the network. Here, we consider a modified version of this model that allows us to show the same phenomenon using only two coupled oscillators. For an arbitrary adjacency matrix $A\in\R^{N\times N}$, this oscillator network model is described as
\begin{equation}
\begin{aligned}
\dot{\phi}_{i} &=r_{i}-1 - (cb_i+\alpha)\phi_i + K\sum_{j=1}^N A_{ij}r_{i}\sin(\phi_{i}-\phi_{j}), \\
\dot{r}_{i} &= (b_{i}+\alpha)r_{i}(1-r_{i})-\epsilon\sin(\phi_{i}),
\end{aligned}
\label{eq.sm.phaseamplitude}
\end{equation}

\noindent
where $\phi_i\in\mathbb S^1$ and $r_i\in\R$ are the phase and amplitude of oscillators $i$, $K$ is the coupling strength, $b_i$ is the damping coefficient, and $\{\epsilon,\alpha\}$ are some positive constants. This system has a single equilibrium point $(\phi_{i}^{\mathrm{eq}},r_{i}^{\mathrm{eq}})=(0,1)$, $\forall i$, and the associated Jacobian is
\begin{equation}
J=\begin{bmatrix}
KL-cB-\alpha I_{N} & I_N \\-\epsilon I_N & -B-\alpha I_{N}
\end{bmatrix}.
\label{eq.sm.jacobianphaseamplitude}
\end{equation}

For $N=2$ oscillators and $A_{12}=A_{21} = 1$, we can show that a heterogeneous choice of damping coefficients $b_1\neq b_2$ is required to stabilize the synchronous (symmetric) state $(\phi_i^{\mathrm{eq}},r_i^{\mathrm{eq}})$ when $K>0$. In this case, the eigenvalues of the Jacobian matrix \eqref{eq.sm.jacobianphaseamplitude} are determined by the characteristic polynomial for 
\begin{equation*}
\begin{aligned}
%
0&=\det(\lambda^2 I_N+\lambda\big((1+c)B-KL\big)+\epsilon I_N-KLB+cB^2), \\
&=\det\begin{bmatrix}\lambda^2+\lambda\big((1+c)b_{1}-K\big)+\epsilon-Kb_{1}+cb_{1}^2&\lambda K+Kb_{2}\\\lambda K+Kb_{1}&\lambda^2+\lambda\big((1+c)b_{2}-K\big)+\epsilon-Kb_{2}+cb_{2}^2\end{bmatrix}, \\
\end{aligned}
\end{equation*}

\noindent
where we applied the transformation $\lambda\to\lambda+\alpha$ for simplicity. This determinant leads to the following quartic equation:
\begin{equation*}
\begin{aligned}
0 = & \,\, \lambda^4+2\lambda^3\big((1+c)\bar{b}-K\big)+\lambda^2\Big(2\epsilon-(4+2c)K\bar{b}+4c\bar{b}^2+\frac{(1+c^2)}{4}(4\bar{b}^2-\Delta b^2)\Big)
\\&+\lambda\Big(2\epsilon\bar{b}(1+c)-2K\epsilon-(2+4c)K\bar{b}^2+\frac{\bar{b}c(1+c)}{2}(4\bar{b}^2-\Delta b^2)\Big)
\\&+\epsilon^2-2\epsilon K\bar{b}+\frac{\epsilon c}{2}(4\bar{b}^2+\Delta b^2)-\frac{cK\bar{b}}{2}(4\bar{b}^2-\Delta b^2)+\frac{c^2}{16}(16\bar{b}^4-8\bar{b}^2\Delta b^2+\Delta b^4),
\end{aligned}
\end{equation*}

\noindent
where we once again define $\bar{b} = \frac{1}{2}(b_1+b_2)$ and $\Delta b = b_1 - b_2$. Note that the Arnold tongue in Fig.~\ref{fig.arnold}D has an unusual ``snake tongue'' shape precisely because the eigenvalue equation is described by a fourth-order polynomial whose quadratic, linear, and constant terms simultaneously depend on $\Delta b$, leading to a more complex stability region for large $K$. In the homogeneous case $\Delta b=0$, we can write
\begin{equation*}
\begin{aligned}
0&=\Big(\lambda^2+\lambda\big((1+c)\bar{b}-2K\big)+\epsilon+c\bar{b}^2-2\bar{b}K\Big)\Big(\lambda^2+\lambda(1+c)\bar{b} +\epsilon+c\bar{b}^2\Big),
\end{aligned}
\end{equation*}

\noindent
and hence the four eigenvalues of the system are
\begin{equation*}
\begin{aligned}
\lambda_{1,\pm}+\alpha=K-\frac{(1+c)\bar{b}}{2}\pm\sqrt{\frac{(1-c)^2\bar{b}^2}{4}+(1-c)K\bar{b}-K^2-\epsilon},\,\,\,\,\,\,\,\,\,\,\,\,\,\lambda_{2,\pm}+\alpha=-\frac{(1+c)\bar{b}}{2}\pm \sqrt{\frac{(1-c)^2\bar{b}^2}{4}-\epsilon}.
\end{aligned}
\end{equation*}

\noindent
Given that $K>0$,  $\mathrm{Re}(\lambda_{2,\pm})<\mathrm{Re}(\lambda_{1,\pm})$. Since $\mathrm{Re}(\lambda_{1,-})\leq\mathrm{Re}(\lambda_{1,+})$, it follows that  $\Lambda_{\rm max} = \mathrm{Re} (\lambda_{1,+})$. Thus, for $K=0$,

\begin{equation}
\lambda_{1,+}=-\alpha-\frac{(1+c)\bar{b}}{2}+\sqrt{\frac{(1-c)^2\bar{b}^2}{4}-\epsilon}.
\end{equation}

\noindent
Therefore, if

\begin{equation}
\sqrt{\frac{(1-c)^2\bar{b}^2}{4}-\epsilon}>\frac{(1+c)\bar{b}}{2}+\alpha,
\end{equation}

\noindent
then the system will be unstable for any $K>0$, as increasing $K$ will induce more instability. Our numerical results in Fig.~\ref{fig.arnold}D confirm this effect, which further shows that it is still possible to find a stable equilibrium in these parameter regions when $\Delta b\neq0$.

\medskip\noindent
\textbf{Parameter choices.}
Fig.~\ref{fig.arnold} shows the Arnold tongues for the following parameters: $\bar b = 0.6$ (leaky Kuramoto),  $\bar b = 0.75$ and $\omega_i = \omega$, $\forall i$ (second-order Kuramoto), and $(\bar b, \epsilon, \alpha, c) = (1.5,0.2,0.7,0.3)$ (phase-amplitude). The ranges of coupling strengths are normalized by $0.1$ in panels b and c and by $0.3$ in panel d.

\bigskip
\section{Relationship between system symmetries and stability}
\label{sec.sm.jensen}

\medskip
\subsection{Nodal heterogeneity and stability}
\label{sec.sm.jensen.nodal}

In this section, we derive a necessary condition under which optimal state stability occurs only at asymmetric parameter configurations. This general result connects the convexity of an objective function to the permutation symmetries of its parameters and is applicable to a broad class of dynamical systems.
We demonstrate the implications of this condition for optimizing nodal parameters in first-order Kuramoto oscillators, second-order Kuramoto oscillators, and phase-amplitude oscillators. 

We first establish the following general result:

\begin{lemma}
    Consider a continuous objective function $C(\bm y)$, where $\bm y\in\R^m$ is the vector of decision variables.
    {Let $\mathcal P$ be a finite group of matrices representing symmetries of $C$ such that $C(P\bm y)=C(\bm y)$ for every $\bm y$ and $P\bm y$ in the domain and every $P\in\mathcal P$.}
    If the objective function $C(\bm y)$ is convex, then at least one optimal solution $\bm y^*$ satisfies the symmetry relation $\bm y^* = P\bm y^*$ {for every $P\in\mathcal P$.}
\label{lem.sm.generaljensen}
\end{lemma}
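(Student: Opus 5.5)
Take any optimal solution $\bm y^\circ$ (assumed to exist, as the lemma implicitly does) and average it over the group. Define
\begin{equation*}
\bar{\bm y} = \frac{1}{|\mathcal P|}\sum_{P\in\mathcal P} P\bm y^\circ .
\end{equation*}
The claim is that $\bar{\bm y}$ is both optimal and invariant under every element of $\mathcal P$. Note that $\mathcal P$ being a finite group means it is closed under products, contains the identity, and contains inverses; we use only these facts.

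\textbf{Invariance.} For any $Q\in\mathcal P$ we have
\begin{equation*}
Q\bar{\bm y} = \frac{1}{|\mathcal P|}\sum_{P\in\mathcal P} QP\bm y^\circ .
\end{equation*}
The map $P\mapsto QP$ is a bijection of $\mathcal P$ onto itself. Reindexing the sum therefore gives $Q\bar{\bm y}=\bar{\bm y}$. This step is purely algebraic.

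\textbf{Optimality.} This is the Jensen step. Each $P\bm y^\circ$ lies in the domain, because the hypothesis $C(P\bm y)=C(\bm y)$ is meant for points whose images stay in the domain. I will assume, as in the intended application, that the feasible set is convex and $\mathcal P$-invariant. Then $\bar{\bm y}$, a convex combination of the points $P\bm y^\circ$, is also feasible. By convexity of $C$ (finite Jensen inequality) and symmetry,
\begin{equation*}
C(\bar{\bm y}) \le \frac{1}{|\mathcal P|}\sum_{P\in\mathcal P} C(P\bm y^\circ) = C(\bm y^\circ) = \min C .
\end{equation*}
Hence $C(\bar{\bm y})=\min C$, so $\bar{\bm y}$ is an optimal solution satisfying $\bar{\bm y}=P\bar{\bm y}$ for all $P\in\mathcal P$.

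\textbf{Main obstacle.} The only delicate point is domain bookkeeping: each $P\bm y^\circ$ must lie in the domain and the domain must be convex, so that Jensen applies and $\bar{\bm y}$ is admissible. If the lemma is read with the domain being all of $\R^m$, or a convex $\mathcal P$-invariant set such as $\R^N_{\ge 0}$ intersected with a box, this holds immediately. That is true because permutation matrices preserve nonnegativity and box bounds. I would state this explicitly.

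Continuity of $C$ is not needed for the argument once a minimizer exists. If existence of a minimizer is not given, I would invoke continuity together with compactness of the bounded feasible set ($|b_i|\le b_{\max}$) to guarantee it.
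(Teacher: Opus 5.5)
Your proposal is correct and matches the paper's proof: average a global minimizer over $\mathcal P$, use Jensen's inequality together with the $\mathcal P$-invariance of $C$ to show the average is optimal, and reindex the sum via $P\mapsto QP$ to show it is invariant. Your explicit assumptions (a convex, $\mathcal P$-invariant domain and existence of a minimizer) are the ones the paper uses tacitly, so stating them is a useful tightening.
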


\begin{proof}
{Let $\bm y^*$ be a global minimizer of $C$, and denote $C^* = C(\bm y^*)$. Since the objective function is invariant under all matrices $P\in\mathcal P$, we have that $C(P\bm y^*) = C(\bm y^*) = C^*$ for every $P\in\mathcal P$. Now, define the group-averaged point
\begin{equation}
    \bar{\bm y} = \frac{1}{\abs{\mathcal{P}}}\sum_{P\in\mathcal P} P\bm y^*,
\end{equation}
where $\abs{\cdot}$ is the number of elements in the set.
The point $\bar{\bm y}$ is a convex combination of points in the domain. Therefore, using Jensen’s inequality, we have that
\begin{equation}
    C(\bar{\bm y}) \leq \frac{1}{\abs{\mathcal{P}}}  \sum_{P\in\mathcal P} C(P\bm y^*) = \frac{1}{\abs{\mathcal{P}}} \sum_{P\in\mathcal P} C(\bm y^*) = C^*.
\label{eq.sm.inequalityderivation}
\end{equation}
Since $C^*$ is the global minimum value, we must also have that $C(\bar{\bm y})\geq C^*$. Therefore, $C(\bar{\bm y}) = C^*$, and $\bar{\bm y}$ is a global minimizer.
}

{
It remains to show that $\bar{\bm y}$ satisfies all symmetries. For any $Q\in\mathcal P$,
\begin{equation}
    Q\bar{\bm y} =  \frac{1}{\abs{\mathcal{P}}}\sum_{P\in\mathcal P} QP\bm y^*.
\end{equation}
Because $\mathcal P$ is a symmetry group, left multiplication by $Q$ only reorders elements of $\mathcal P$. Therefore,
\begin{equation}
    Q\bar{\bm y} = \frac{1}{\abs{\mathcal{P}}}\sum_{P\in\mathcal P} P\bm y^* = \bar{\bm y}.
\end{equation}
Thus, $\bar{\bm y}$ is a global minimizer that satisfies the symmetry relation $\bar{\bm y}=Q\bar{\bm y}$ for every $Q\in\mathcal P$.  
}
\end{proof}

\begin{remark}
Lemma~\ref{lem.sm.generaljensen} follows from known results on the existence of symmetric optima for certain classes of convex functions \cite{keilson1967global,waterhouse1983symmetric,boyd2004convex}.
If the objective function is strictly convex, then the inequality \eqref{eq.sm.inequalityderivation} becomes strict. Thus, in this case, only one global optimum exists, which necessarily satisfies the symmetry relation.
The same conclusion holds if the objective function is analytic. To see this, assume, for contradiction, that the set of global optima cannot be described as a finite set of points. Given that the function is convex, the set of global optima is also convex, so all derivatives within it must vanish. Therefore, the function is constant throughout its domain, which is a contradiction.
\end{remark}

We now directly apply Lemma~\ref{lem.sm.generaljensen} to formalize and prove the following result (which is presented in the main text):
\begin{theor} \label{thm.jensenineq}
Consider the Jacobian matrix $J(\textbf{b};A)$ of system \eqref{eq.generalnetwork} evaluated at the equilibrium point $\bm x^{\mathrm{eq}}$, where $J$ is an affine function of $\bm b$. Let $\bm b = (b_1,\ldots,b_N)$ be the damping parameters sought to be optimized, {and let $\mathcal P_A$ be a group of permutation matrices representing symmetries of the network (i.e., $A=P^{-1}AP$ for every
$P\in\mathcal P_A$)}. If the optimization problem \eqref{eq.optimization} is convex, then at least one globally optimal solution $\bm b^*$ must satisfy the symmetry relation $\bm b^* = P\bm b^*$.
\end{theor}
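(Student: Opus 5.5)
The plan is to reduce the statement to Lemma~\ref{lem.sm.generaljensen}, taking $\bm y=\bm b$ and $C(\bm b)=\Lambda_{\rm max}\big(J(\bm b;A)|_{\bm x^{\rm eq}}\big)$ on the feasible set $\R^N_{\geq 0}$ (intersected with the box $|b_i|\le b_{\max}$). Both sets are convex and invariant under any permutation matrix. The lemma also needs a finite group, and $\mathcal P_A$ is a subgroup of the symmetric group, so it is finite. What remains is to show that $C(P\bm b)=C(\bm b)$ for every $P\in\mathcal P_A$.

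The key step is invariance. We need the Jacobian at the permuted parameters to be similar to the original one:
\begin{equation*}
J(P\bm b;A)=(P\otimes I_q)\,J(\bm b;A)\,(P\otimes I_q)^{\transp}.
\end{equation*}
I would argue this from the structure of system~\eqref{eq.generalnetwork}. Relabel nodes by the permutation $\pi$ encoded by $P$. Because $P^\transp AP=A$, the relabeled system with parameters $P\bm b$ is the same vector field conjugated by $P\otimes I_q$. Its equilibrium is therefore the permuted $\bm x^{\rm eq}$, and its Jacobian there is the conjugate above.

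One subtlety needs care. The paper fixes a single $\bm x^{\rm eq}$, so either $\bm x^{\rm eq}$ must be $P$-invariant, or we read the objective as the stability of the corresponding equilibrium branch. For the Jacobians actually used, such as $\bigl[\begin{smallmatrix}0&I\\-L&-B\end{smallmatrix}\bigr]$ and $-(B+L)$, the dependence on $\bm b$ enters only through $B=\operatorname{diag}(\bm b)$. The identity $\operatorname{diag}(P\bm b)=PBP^\transp$ together with $PLP^\transp=L$ then gives the conjugation directly, and that is the affine setting the theorem assumes. Similarity preserves the spectrum.

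We also need the excluded set $\mathcal Z$ of identically null eigenvalues to be handled consistently. It is defined as holding for all $\bm b$, so its cardinality is the same for $\bm b$ and $P\bm b$. Hence $\Lambda_{\max}(P\bm b)=\Lambda_{\max}(\bm b)$. Continuity of $C$ follows from continuity of eigenvalues in the entries of $J$, which is affine in $\bm b$.

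With convexity assumed, Lemma~\ref{lem.sm.generaljensen} yields a global minimizer $\bar{\bm b}=|\mathcal P_A|^{-1}\sum_{P}P\bm b^*$. This point is feasible because the feasible set is convex and permutation-invariant, and it is fixed by every $P\in\mathcal P_A$, which gives the claim.

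I expect the main obstacle to be the invariance step: justifying that the equilibrium, and hence the Jacobian, transforms covariantly under $P$. In particular the removal of $\mathcal Z$ must not depend on $\bm b$ in a non-equivariant way. I would handle this by restricting explicitly to the affine, $\operatorname{diag}(\bm b)$-dependent Jacobians with $\bm x^{\rm eq}$ independent of $\bm b$, or $P$-symmetric, as in all cases treated in the paper.
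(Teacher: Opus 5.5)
Your proposal is correct and follows the paper's proof: both reduce to Lemma~\ref{lem.sm.generaljensen} by showing that $J(P\bm b;A)$ is a permutation-similarity transform of $J(\bm b;A)$, so $\Lambda_{\max}$ is $\mathcal P_A$-invariant. Your extra checks address points that the paper's one-line similarity claim leaves implicit: that the feasible set is permutation-invariant, that $\mathcal Z$ is removed consistently for $\bm b$ and $P\bm b$, and that $\bm x^{\rm eq}$ transforms covariantly under relabeling. Your conjugation by $P\otimes I_q$ versus the paper's $I_q\otimes P$ only reflects node-major versus variable-major ordering of the state.
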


\begin{proof}
    {Let the objective function $C$ be the largest Lyapunov exponent $\Lambda_{\mathrm{max}}(J)$ associated with the Jacobian matrix $J(\bm b;A)$ with $\bm y\leftarrow\bm b$. That is, $C(\bm b) = \Lambda_{\rm max}(J(\bm b;A))$. To apply Lemma~\ref{lem.sm.generaljensen}, it remains to show that $C$ is invariant under every $P\in\mathcal P_A$. Since $P$ is a node permutation satisfying $A=P^{-1}AP$, relabeling the nodal parameters as $\bm b\mapsto P\bm b$ corresponds to a similarity transformation of the Jacobian:
    \begin{equation}
        J(P\bm b;A) = (I_q\otimes P )^{-1} J(\bm b;A) (I_q\otimes P).
    \end{equation}
    Therefore, $J(P\bm b;A)$ and $J(\bm b;A)$ have the same spectrum, and hence $C(P\bm b)=C(\bm b)$.
    The result then follows from Lemma~\ref{lem.sm.generaljensen}.
    }
\end{proof}

\begin{remark}
    Note that, in the second-order system \eqref{eq.secondordersys}, $J$ is always an affine function of $\bm b$. 
\end{remark}

In the following examples, we demonstrate the applicability of Theorem~\ref{thm.jensenineq} to the oscillator models considered in Fig.~\ref{fig.arnold}.

\begin{example}[First-order Kuramoto oscillator]
\label{examp.sm.1stkuramoto}

Consider the classical Kuramoto model \eqref{eq.sm.kuramotonet}, and let $A$ be a Hermitian matrix. We first investigate the influence of asymmetries in the natural frequency vector $\bm \omega$ on the stability of the synchronous state $\bm \phi^{\text{eq}} = (\phi_1^{\text{eq}},\ldots,\phi_N^{\text{eq}})$ such that $\dot\phi_i^{\rm eq} = 0$, $\forall i$. In the co-rotating frame, we have that $\omega_i = -\sum_j A_{ij} \sin(\phi_i^{\rm eq} - \phi_j^{\rm eq})$ and hence the Jacobian can be expressed as a time-invariant matrix
\begin{equation}
\begin{aligned}
\label{eq.sym_1_kur}
J_{ik}(\bm\phi^{\rm eq}) = \sum_{j=1}^N A_{ij} \cos(\phi_i^{\rm eq} - \phi_j^{\rm eq}) (\delta_{ik} - \delta_{kj}) = \sum_{j=1}^N A_{ij} \cos(\phi_i^{\rm eq} - \phi_j^{\rm eq}) - A_{ik} \cos(\phi_i^{\rm eq} - \phi_k^{\rm eq}) = J_{ki}(\bm\phi^{\rm eq}),
\end{aligned}
\end{equation}
where $\delta_{ij}$ denotes the Kronecker delta.
Thus, the Jacobian $J$ is an affine function of $\bm\omega$ and is Hermitian. By Theorem~\ref{thm.jensenineq}, we have that the optimization problem $\min_{\bm \omega} \Lambda_{\mathrm{max}}\big(J(\bm\omega)\big)$ is convex. Consequently, any asymmetric parameter vector $\bm \omega$ will not improve the stability of the synchronous state, since a symmetric configuration $\omega_1=\ldots=\omega_N$ is already optimal (as shown in Fig.~\ref{fig.arnold}A).

Now, we consider the leaky Kuramoto model \eqref{eq.sm.leakykuramoto}, where we assume uniform natural frequencies $\omega_i = \ldots = \omega_N$ but possibly asymmetric damping coefficients $b_i\neq b_j$. The (time-independent) Jacobian matrix is given by $J = - (B+L)$. Here, $J$ is an affine function of $\bm b$ and is once again always Hermitian under the assumption that $A$ is Hermitian. We thus have that $\min_{\bm b} \Lambda_{\mathrm{max}} \big(J(\bm b)\big)$ is convex and that at least one global optimum $\bm b^*$ must be symmetric (as shown in Fig.~\ref{fig.arnold}B). 

It follows that, for both models \eqref{eq.sm.kuramotonet} and \eqref{eq.sm.leakykuramoto}, asymmetric parameter configurations $\bm \omega$ or $\bm b$ may improve synchronization stability relative to the optimal symmetric configuration only if the underlying graph is directed (i.e., when $A \neq A^\dagger$).  
\QEDA
\end{example}

\begin{example}[Second-order Kuramoto oscillator]
Consider the second-order Kuramoto model \eqref{eq.sm.2ndkuramoto}. Independent of the adjacency matrix $A\neq 0$ structure, the Jacobian matrix \eqref{eq.sm.jacobian_second_kur} is always non-Hermitian. Therefore, we can show that $\min_{\bm b} \Lambda_{\mathrm{max}} \big(J(\bm b)\big)$ is nonconvex and, as a result, asymmetric damping configurations can potentially improve the system stability relative to symmetric configurations (as shown in Fig.~\ref{fig.arnold}C). Note that, for uniform frequencies $\omega_i$, we have $\phi^{\mathrm{eq}}_1 = \ldots = \phi^{\mathrm{eq}}_N$, and hence $J$ reduces to Eq.~\eqref{eq.jacobian}.
\QEDA
\end{example}

\begin{example}[Phase-amplitude oscillator]
\label{examp.sm.phaseamplitude}
For the phase-amplitude oscillator model \eqref{eq.sm.phaseamplitude}, the Jacobian matrix \eqref{eq.sm.jacobianphaseamplitude} is always non-Hermitian since the anti-diagonal matrix blocks are never equal for $\epsilon\neq 0$. Thus, we can show that $\min_{\bm{b}} \Lambda_{\mathrm{max}}\big(J(\bm b)\big)$ is nonconvex and, as a result, allowing parameter asymmetry can be beneficial even when $A$ is Hermitian (as shown in Fig.~\ref{fig.arnold}D).
\QEDA
\end{example}

{
The following example shows that Theorem~\ref{thm.jensenineq} does not necessarily apply to systems with more general coupling functions, such as higher-order interactions.
}

\begin{example}[{Higher-order Kuramoto model}]
\label{examp.sm.higherorder}
{
Higher-order interactions provide a natural extension of the class of systems considered in Eq.~\eqref{eq.generalnetwork} by allowing each (diffusive) coupling function $\bm g$ to depend on groups of three or more nodes~\cite{battiston2021physics,bick2023higher}. In synchronization problems, these interactions qualitatively change the structure of the linearized dynamics \cite{gallo2022synchronization}, influencing phenomena such as explosive synchronization \cite{kuehn2021universal} and trade-offs between the depth and size of basins of attraction \cite{zhang2024deeper,wang2026frequency}. A simple example is the Kuramoto model with both pairwise and three-body interactions:
\begin{equation}\label{eq.higherorderkuramoto}
\dot{\phi}_{i}=\omega_{i} + \sum_{j=1}^N A_{ij}^{(2)}\sin(\phi_{j}-\phi_{i}) + \sum_{j,k=1}^N A_{ijk}^{(3)}\sin(\phi_{j}+\phi_{k}-2\phi_{i}),
\end{equation}
where $A^{(2)}\in\R^{N\times N}$ and $A^{(3)}\in\R^{N\times N\times N}$. Linearizing around the phase-locked solution $\bm\phi^{\rm eq} = (\phi_1^{\rm eq},\ldots,\phi^{\rm eq}_N)^\transp$ yields the Jacobian matrix
\begin{equation} \label{eq.jacobianhigherorder}
J_{ij}=
\begin{cases}
    A_{ij}^{(2)} \cos(\phi_{j}^{*}-\phi_{i}^{*}) + \sum_{k=1}^{N} A_{ijk}^{(3)} \cos(\phi_{j}^{*}+\phi_{k}^{*}-2\phi_{i}^*),  \quad & i \neq j, \\
    -\sum_{j'\neq i}^{N}J_{ij'}, \quad & i = j.
\end{cases}
\end{equation}
This expression shows that higher-order interactions can generate non-Hermitian Jacobians even when the nodal dynamics are 1D and the underlying network is undirected. This is the case in Eq.~\eqref{eq.jacobianhigherorder} because the three-body contribution to $J_{ij}$ is generally not invariant under the permutation of nodes $i$ and $j$. Moreover, in this case, Eq.~\eqref{eq.optimization} is a nonconvex function of $\bm\omega$, which suggests that heterogeneity among the natural frequencies $\omega_i$ could in principle increase the stability of $\bm\phi^{\rm eq}$ even though the nodal dynamics are 1D. However, Theorem~\ref{thm.jensenineq} does not apply directly to this model, because the Jacobian depends on $\bm\omega$ only indirectly through the phase-locked solution $\bm\phi^{\rm eq}$, and therefore $J$ is not an affine function of $\bm\omega$. Prior work has shown that frequency heterogeneity in the higher-order Kuramoto model~\eqref{eq.higherorderkuramoto} can enlarge the basins of attraction of synchronous states, but at the cost of reduced linear stability~\cite{wang2026frequency}.
}
\QEDA
\end{example}

\subsection{{Nonconvexity, non-Hermiticity, and nonnormality}}
\label{sec.sm.nonconvexity}

In agreement with the analyses in Examples~\ref{examp.sm.1stkuramoto}--\ref{examp.sm.phaseamplitude}, the following propositions establish the relationship between (non-)Hermiticity of $J$ and the (non)convexity of the optimization problem \eqref{eq.optimization}. In particular, Proposition~\ref{cor.sm.nonhermitian} identifies non-Hermiticity as a necessary condition for nonconvexity, whereas Proposition~\ref{cor.sm.localcurvature} provides a sufficient condition for nonconvexity that holds generically within the set of all adjacency matrices $A$.

\begin{prop}
\label{cor.sm.nonhermitian}
    Let matrix $J(\bm b)$ depend affinely on the parameter vector $\bm b$. If the optimization problem $\min_{\bm b} \Lambda_{\mathrm{max}}\big(J(\bm b)\big)$ is nonconvex, then $J$ is non-Hermitian for some $\bm b$ in the feasible region $\R^N_{\geq 0}$.
\end{prop}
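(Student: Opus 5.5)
We argue by contrapositive. Suppose $J(\bm b)$ is Hermitian for every $\bm b\in\R^N_{\geq 0}$; we show that the problem $\min_{\bm b}\Lambda_{\mathrm{max}}(J(\bm b))$ is convex. The feasible set $\R^N_{\geq 0}$ (with the implicit box bound $|b_i|\le b_{\max}$) is convex, so it suffices to show that $\bm b\mapsto \Lambda_{\mathrm{max}}(J(\bm b))$ is a convex function there.

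\emph{Step 1: variational form.} Since $J(\bm b)$ is Hermitian, its eigenvalues are real. Rayleigh--Ritz then gives
\begin{equation*}
\lambda_{\max}(J(\bm b))=\max_{\|\bm v\|=1}\bm v^\dagger J(\bm b)\bm v .
\end{equation*}
Write $J(\bm b)=J_0+\sum_k b_k J_k$. For each fixed $\bm v$, the map $\bm b\mapsto \bm v^\dagger J(\bm b)\bm v$ is affine in $\bm b$. It is also real-valued on the feasible set, since Hermiticity of $J(\bm b)$ at all feasible $\bm b$ forces the affine combination to be Hermitian. A pointwise supremum of affine functions is convex, so $\lambda_{\max}(J(\bm b))$ is convex.

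\emph{Step 2: the transverse restriction.} The objective is $\Lambda_{\mathrm{max}}=\max_{i\notin\mathcal Z}\mathrm{Re}\,\lambda_i$, which excludes the identically null eigenvalues. This is the step I expect to need the most care. The plan is to show that these null eigenvalues correspond to a common null subspace $\mathcal K=\bigcap_{\bm b}\ker J(\bm b)$.

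First, note that if $0$ is an eigenvalue of the Hermitian matrix $J(\bm b)$ for every $\bm b$, a genericity argument on the affine family shows the kernel can be chosen independent of $\bm b$. An example is the consensus direction $\bm 1$ for Laplacian-type Jacobians. Next, because each $J(\bm b)$ is Hermitian, $\mathcal K^\perp$ is invariant under $J(\bm b)$. Let $Q$ be an orthonormal basis of $\mathcal K^\perp$. Then
\begin{equation*}
\Lambda_{\mathrm{max}}(\bm b)=\lambda_{\max}\bigl(Q^\dagger J(\bm b) Q\bigr)=\max_{\bm v\in\mathcal K^\perp,\ \|\bm v\|=1}\bm v^\dagger J(\bm b)\bm v .
\end{equation*}
This is again a supremum of affine functions of $\bm b$, hence convex.

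If establishing a $\bm b$-independent kernel turns out to be delicate, there is a fallback. One can instead use Ky Fan's characterization: the sum of the largest $k$ eigenvalues of a Hermitian matrix is convex. Combining this with the fact that the multiplicity of the identically zero eigenvalue is fixed gives the same conclusion, at the cost of a short case analysis on whether the zero eigenvalues sit at the top of the spectrum.

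\emph{Step 3: conclusion.} A convex objective over a convex feasible set is a convex optimization problem, which contradicts the hypothesis of nonconvexity. Hence $J(\bm b)$ must be non-Hermitian for some feasible $\bm b$.

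As a side note, affinity enters this argument in two places: it makes each map $\bm b\mapsto \bm v^\dagger J(\bm b)\bm v$ affine, and it ensures that Hermiticity on the feasible set propagates to every convex combination of feasible parameters.
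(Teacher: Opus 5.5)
Your Step 1 is exactly the paper's argument. The paper assumes $J(\bm b)$ Hermitian, notes that the eigenvalues are then real so that $\Lambda_{\mathrm{max}}=\max_i\lambda_i$, and cites Overton for convexity of the largest eigenvalue of an affine Hermitian family, which is your supremum-of-affine-functions argument. The paper never treats the excluded set $\mathcal Z$, so under its reading Step 1 is the whole proof.

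Your Step 2, which tries to handle $\mathcal Z$, rests on a false claim. An eigenvalue that vanishes for every $\bm b$ need not come from a $\bm b$-independent kernel. For example,
\begin{equation*}
J(b)=\begin{bmatrix}0&0&1\\0&0&b\\1&b&0\end{bmatrix}
\end{equation*}
is real symmetric and affine in $b$, with eigenvalues $0,\pm\sqrt{1+b^2}$ for every $b$. Its kernel, however, is spanned by $(-b,1,0)^\transp$, and $\ker J_0\cap\ker J_1=\{0\}$. So $\mathcal K$ can be strictly smaller than what $\mathcal Z$ accounts for. Compressing onto $\mathcal K^\perp$ then does not remove the identically null eigenvalues, and no genericity argument repairs this.

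The Ky Fan fallback does not close the gap either. The multiplicity of $0$ is only bounded below and can jump up at special $\bm b$. Also, $\lambda_{m+1}=S_{m+1}-S_m$ is a difference of convex functions, so gluing the regions where the zeros are or are not on top is precisely the nontrivial part you defer to a "short case analysis."

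The identity $\Lambda_{\mathrm{max}}(\bm b)=\lambda_{\max}\bigl(Q^\dagger J(\bm b)Q\bigr)$ is nevertheless true, and you can prove it directly.

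\emph{Setup.} Hermiticity on $\R^N_{\geq 0}$ makes $J_0$ and every $J_k$ Hermitian, so $\tilde J(\bm b)=Q^\dagger J(\bm b)Q$ is Hermitian on all of $\R^N$. Let $r$ be the number of zero eigenvalues that $\tilde J(\bm b)$ has for every $\bm b$.

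\emph{Contradiction hypothesis.} Suppose that at some $\bm b^*$ the matrix $\tilde J(\bm b^*)$ has exactly $r$ zero eigenvalues and all others negative. By continuity of eigenvalues, the $r$ eigenvalues near $0$ must remain exactly $0$ and the rest remain negative. Hence $\tilde J(\bm b)\preceq 0$ on a ball around $\bm b^*$.

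\emph{Common kernel vector.} Take $\bm v\in\ker\tilde J(\bm b^*)$. The real affine function $\bm b\mapsto\bm v^\dagger\tilde J(\bm b)\bm v$ is $\leq 0$ on that ball and vanishes at its center, so it vanishes identically. Negative semidefiniteness then gives $\tilde J(\bm b)\bm v=0$ on the ball, and by affinity for all $\bm b$. Since $J(\bm b)Q=Q\tilde J(\bm b)$, this puts $Q\bm v\in\mathcal K\cap\mathcal K^\perp=\{0\}$, a contradiction.

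\emph{Conclusion.} At every $\bm b$, either $\lambda_{\max}(\tilde J(\bm b))>0$ or $\tilde J(\bm b)$ has more than $r$ zero eigenvalues. In both cases, discarding the $r$ identically null eigenvalues leaves the maximum unchanged. With this argument in place of your kernel claim, Step 2 goes through, and the proof becomes a more careful version of the paper's one-line argument.
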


\begin{proof}
    We begin by proving the contrapositive of the first claim. Assume that $J$ is a Hermitian matrix for every $\bm b$. In this case, all eigenvalues of $J$ are real. Consequently, the largest Lyapunov exponent coincides with the largest eigenvalue (i.e., $\Lambda_{\mathrm{max}} = \max_i\lambda_i$). Following Ref.~\cite{overton1988minimizing}, when $J$ depends affinely on the parameters $\bm b$, the optimization problem is convex.
\end{proof}

\begin{remark}
    {While we focus on the case in which the parameters $b_i$ are nonnegative, our results extend naturally to cases in which the feasible region is the entire space $\R^N$ or a different convex set $\mathcal B\subseteq\R^N$, as in the case when $\bm b$ represents quantities other than those considered here.}
\end{remark}

\begin{prop}
\label{cor.sm.localcurvature}
{
    Consider the Jacobian matrix $J(\textbf{b};A)$ of system \eqref{eq.generalnetwork} evaluated at the equilibrium point $\bm x^{\mathrm{eq}}$.
    Suppose that, at some point $\bm b_0$, there exists an adjacency matrix $A_0$ for which $J(\bm b_0;A_0)$ has only simple eigenvalues and distinct real parts for every pair of distinct nonconjugate eigenvalues. For almost all choices of $A\in\R^{N\times N}$, the optimization problem $\min_{\bm b} \Lambda_{\mathrm{max}}\big(J(\bm b;A)\big)$ is nonconvex if the leading eigenvalue branch has negative curvature (i.e., there exist a feasible point $\bm{b}_{0}$ and direction $\delta\bm b_0$ such that $\frac{d^2}{d\epsilon^2} \operatorname{Re}\lambda_1(\bm b_0+\epsilon\delta\bm b_0;A)\big|_{\epsilon=0} <0$).
    }
\end{prop}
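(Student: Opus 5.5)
The plan is to show that, for generic $A$, $\Lambda_{\max}$ coincides near $\bm b_0$ with the smooth branch $\operatorname{Re}\lambda_1$. The negative-curvature hypothesis then directly violates convexity along a line. The argument has two parts: a genericity statement in the space of adjacency matrices, and a local perturbation statement in parameter space.

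\emph{Step 1 (genericity).} The entries of $J(\bm b_0;A)$ depend on $A$ through the system \eqref{eq.generalnetwork}. I will assume this dependence is real-analytic, as it is for every Jacobian in Table~\ref{tab.systems}, where it is even affine.

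For simplicity of the spectrum, I consider the discriminant of the characteristic polynomial $\det(\lambda I_n - J(\bm b_0;A))$. It is a polynomial in the coefficients of that polynomial, hence a real-analytic function of $A$. It is nonzero at $A_0$ by hypothesis. An analytic function that is not identically zero on the connected space $\R^{N\times N}$ has a zero set of Lebesgue measure zero, so $J(\bm b_0;A)$ has simple eigenvalues for almost all $A$.

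For the real-part condition, I work on the open full-measure set $\mathcal S$ of matrices with simple spectrum. There the eigenvalues can be labeled locally as analytic functions of $A$. For each nonconjugate pair $(i,j)$, the function $\operatorname{Re}\lambda_i-\operatorname{Re}\lambda_j$ is then real-analytic in $A$. To obtain a global statement, I intend to use the symmetric function
\[
\Phi(A)=\prod_{i<j}\Big[(\operatorname{Re}\lambda_i-\operatorname{Re}\lambda_j)^2+(\operatorname{Im}\lambda_i+\operatorname{Im}\lambda_j)^2\Big]^{-1}\prod_{i<j}(\operatorname{Re}\lambda_i-\operatorname{Re}\lambda_j)^2 ,
\]
or an equivalent construction. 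The goal is that the bad set becomes the zero set of an analytic function that is nonzero at $A_0$, which would give measure zero again.

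\emph{Step 2 (local smoothness of $\Lambda_{\max}$).} Fix a generic $A$. At $\bm b_0$ the transverse maximum $\Lambda_{\max}$ is attained either by a single simple real eigenvalue or by a single simple conjugate pair. Every other real part is strictly smaller, and the identically null eigenvalues in $\mathcal Z$ are excluded.

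By analytic perturbation theory for simple eigenvalues, $\lambda_1(\bm b;A)$ is analytic in $\bm b$ near $\bm b_0$, because $J$ is affine in $\bm b$ and $\lambda_1$ is simple. The strict gaps to the other real parts persist in a neighborhood by continuity. Hence
\[
\Lambda_{\max}(\bm b)=\operatorname{Re}\lambda_1(\bm b;A)
\]
on a neighborhood of $\bm b_0$, and this function is $C^\infty$ there.

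\emph{Step 3 (contradiction with convexity).} Restrict to the line $\bm b_0+\epsilon\,\delta\bm b_0$. For this I need $\bm b_0$ to be interior to $\R^N_{\ge0}$, or $\delta\bm b_0$ to point into the feasible set for small $|\epsilon|$; otherwise I would use a one-sided segment.

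A convex function restricted to a line is convex, and a $C^2$ convex function of one variable has nonnegative second derivative. The hypothesis $\frac{d^2}{d\epsilon^2}\operatorname{Re}\lambda_1<0$ at $\epsilon=0$ therefore contradicts convexity. Concretely, for small $\epsilon>0$ the midpoint inequality fails:
\[
\Lambda_{\max}(\bm b_0)>\tfrac12\big[\Lambda_{\max}(\bm b_0+\epsilon\delta\bm b_0)+\Lambda_{\max}(\bm b_0-\epsilon\delta\bm b_0)\big].
\]
Since the feasible set is convex, the optimization problem is nonconvex.

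\emph{Main obstacle.} I expect Step 1's real-part condition to be the hardest part. The set where two nonconjugate eigenvalues share a real part is not obviously the zero set of one globally analytic function on $\R^{N\times N}$. The reason is that eigenvalues can change between real and complex-conjugate pairs as $A$ varies, which changes which pairs count as conjugate.

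My first attempt is to partition $\mathcal S$ into its open regions with a fixed number of real eigenvalues. On each region, the condition is a finite union of zero sets of analytic functions. Nonvanishing at $A_0$ propagates only within the connected component containing $A_0$. If this propagation does not suffice, I will fall back on the weaker claim the proof actually needs: only the maximal real part must be uniquely attained. That weaker condition can be expressed through the Hurwitz-type resultant of $p(\lambda)$ and $p(2s-\lambda)$ for the shift $s$.
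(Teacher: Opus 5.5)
Your discriminant argument and your Steps~2--3 are sound, and they follow the paper's route. The paper's cases~(ii) and~(iii) are your Step~1, and its closing sentence is your Steps~2--3, which you handle with more care about feasibility. The genuine gap is the real-part half of Step~1. You flag it but do not close it, and it cannot be closed in the stated generality: the set of $A$ for which $\Lambda_{\max}$ at $\bm b_0$ is attained by two distinct nonconjugate eigenvalues can contain a nonempty open set even when the hypothesis holds at some $A_0$.

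Here is a counterexample. Take the Jacobian \eqref{eq.jacobian} with $N\geq 3$ and $\bm b_0=b_0\bm 1_N$. By \eqref{eq.sm.homogeneouseigproblem}, its eigenvalues solve $\lambda^2+b_0\lambda+\lambda_{L,j}=0$.
\begin{itemize}
\item Let $A_0$ be symmetric with distinct nonzero Laplacian eigenvalues in $(0,b_0^2/4)$. Then $J(\bm b_0;A_0)$ has a real, simple spectrum, so the hypothesis holds.
\item Let $A_s$ be symmetric with distinct nonzero Laplacian eigenvalues exceeding $b_0^2/4$. For every $A$ near $A_s$, the Laplacian eigenvalues stay real and simple, since a simple real eigenvalue of a real matrix cannot leave the real axis under small real perturbations.
\item For such $A$, the spectrum of $J(\bm b_0;A)$ is the excluded $0$, the eigenvalue $-b_0$, and the $N-1$ distinct pairs $-b_0/2\pm\mathrm{i}\sqrt{\lambda_{L,j}-b_0^2/4}$.
\end{itemize}
Hence $\Lambda_{\max}=-b_0/2$ is attained by at least two nonconjugate eigenvalues on a set of positive measure.

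This has three consequences for your plan.
\begin{itemize}
\item Any real-analytic function on $\R^{N\times N}$ that vanishes on this bad set vanishes identically. So no choice of $\Phi$ can work; yours is in any case $0/0$ on every conjugate pair.
\item Propagation from $A_0$ fails precisely because $A_0$ and this open set have different numbers of real eigenvalues ($2N$ versus $2$).
\item Your fallback also fails. The weaker claim that the maximal real part is uniquely attained is false in the same example. Moreover, $\operatorname{Res}_\lambda\big(p(\lambda),p(2s-\lambda)\big)$ vanishes identically at $s=\Lambda_{\max}(A)$, because $2s-\lambda_1=\lambda_1^{\mathrm c}$ is always a root of $p$. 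And $s$ is not analytic in $A$.
\end{itemize}
The paper's proof has the same hole. It eliminates case~(iii) in one line by treating $\operatorname{gap}(A)=\operatorname{Re}\lambda_1-\operatorname{Re}\lambda_2$ as if it were analytic on all of $\R^{N\times N}$.

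Your argument does establish two things.
\begin{itemize}
\item Steps~2--3 prove nonconvexity for every $A$ at which the leading non-null eigenvalue of $J(\bm b_0;A)$ is simple and strictly dominant up to conjugation. No measure-theoretic input is needed for this.
\item Your region idea gives a correct ``almost all'' statement relative to $A_0$. Consider the open set where $J(\bm b_0;A)$ has simple spectrum with the same number of real eigenvalues as $J(\bm b_0;A_0)$. On it, the function $G(A)=\prod(\operatorname{Re}\lambda_i-\operatorname{Re}\lambda_j)^2$, taken over unordered nonconjugate pairs, is independent of labeling and real-analytic. Since $G(A_0)\neq 0$, its zero set is null on the connected component containing $A_0$.
\end{itemize}
Extending this to almost all $A\in\R^{N\times N}$ requires an additional hypothesis. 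One option is to require a matrix satisfying the spectral condition in every such component. Another is to build simplicity and strict dominance of the leading eigenvalue at $\bm b_0$ directly into the premise.
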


\begin{proof}
{
    Let the nonidentically null eigenvalues $\{\lambda_i : i\notin\mathcal Z\}$ of the Jacobian matrix $J$ be ordered according to their real parts, $\operatorname{Re}(\lambda_1) \geq \operatorname{Re}(\lambda_2) \geq \ldots$, so that $\Lambda_{\rm max} = \operatorname{Re}(\lambda_1)$. To determine the nonconvexity of Eq.~\eqref{eq.optimization}, three cases must be considered: 
    \begin{enumerate}
        \item[i)] the leading eigenvalue $\lambda_1$ is simple and strictly dominant up to a complex conjugate, so that $\operatorname{Re}(\lambda_1) > \operatorname{Re}(\lambda_j)$ for all $j\geq 2$ except when $\lambda_2=\lambda_1^{\rm c}$;
        \item[ii)] the leading eigenvalue has an algebraic multiplicity greater than 1; and
        \item[iii)] the real parts of two or more distinct nonconjugate eigenvalues locally coincide at some point $\bm b_0$, such that $\Lambda_{\rm max}(\bm b_0) = \operatorname{Re}(\lambda_1(\bm b_0)) = \operatorname{Re}(\lambda_j(\bm b_0))$ for some $j$ for which $\lambda_j\neq\lambda_1^{\rm c}$.
    \end{enumerate}
    }

    {
    For a fixed $\bm{b}_{0}$, we now show that the latter two cases occur only for an exceptional set of matrices $A$ of Lebesgue measure zero. From Eq.~\eqref{eq.generalnetwork}, it follows that  $J$ is an analytic function of $A$. Thus, the coefficients of the characteristic polynomial $\bm p(\lambda; A) = \operatorname{det}(J - \lambda I)$ are analytic functions of $A_{ij}$, $\forall i,j$, and so are the coefficients of its discriminant $\Delta(A)$. To eliminate case ii), recall that a matrix possesses repeated eigenvalues if and only if $\Delta(A)=0$. By assumption, there exists an adjacency matrix $A_0$ for which $J(\bm b_0;A_0)$ has only simple eigenvalues. Therefore, $\Delta(A_0)\neq 0$ and hence $\Delta$ is not identically zero. The zero set of a nonzero real analytic function has Lebesgue measure zero \cite{mityagin2015zero}. Thus, for almost every $A$, all eigenvalues of $J$ are simple. 
    To eliminate case iii), suppose two distinct nonconjugate eigenvalues satisfy $\operatorname{gap}(A) = 0$, where $\operatorname{gap}(A) = \operatorname{Re}(\lambda_1(A)) - \operatorname{Re}(\lambda_2(A))$. By assumption, there exists an adjacency matrix $A_0$ for which $\operatorname{gap}(A_0) \neq 0$, and hence the set $\{A\in\R^{N\times N} : \operatorname{gap}(A) = 0\}$ has Lebesgue measure zero.
    }

    {
    Thus, for almost all choices of $A$, $\Lambda_{\rm max}$ is determined by a simple, strictly dominant eigenvalue (up to a complex conjugate). Therefore, if there exists a point with negative curvature, the objective function is nonconvex.
    }
\end{proof}

\begin{remark}
{
Proposition~\ref{cor.sm.localcurvature} provides a direct test for the nonconvexity of the optimization problem~\eqref{eq.optimization}. For example, consider the 1D homogeneous parameterization $\bm b=b_{\hom}\bm 1_N$, with $b_{\hom}\in(0,\infty)$. If the curve $\Lambda_{\rm max}(b_{\hom}\bm 1_N)$ has negative curvature at any point where the leading eigenvalue is simple and strictly dominant, then the optimization problem is nonconvex (see Fig.~\ref{fig.verifynonconvexity} for an example). In fact, the conditions in Proposition \ref{cor.sm.localcurvature} hold for all 2D systems in Table~\ref{tab.systems}, and hence the corresponding optimization problems are nonconvex in all such cases.
}
\end{remark}

\begin{SCfigure}[1.4][t]
    \centering
    \includegraphics[width=0.3\linewidth]{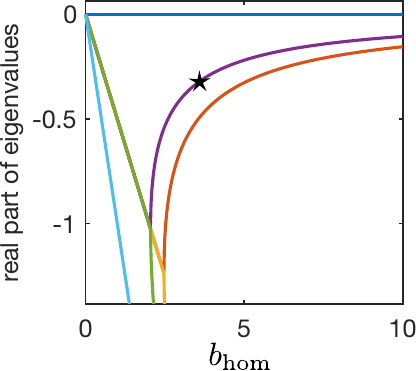}
    \caption{{\textbf{A simple test to verify the nonconvexity of the stability optimization problem.} The plot shows the real parts of all eigenvalues (color coded) of the Jacobian matrix \eqref{eq.jacobian} as functions of the homogeneous parameter $\bm b_{\rm hom} = b_{\rm hom}\bm 1_N$, for $N=3$ nodes. The star marks a point at which $\Lambda_{\rm max}(\bm b)$ is determined by a simple, strictly dominant, and nonindentically null eigenvalue branch with negative curvature. Proposition~\ref{cor.sm.localcurvature} therefore implies that the optimization problem \eqref{eq.optimization} is nonconvex. Consistent with the generic nature of this result, the calculation is performed for a randomly weighted all-to-all adjacency matrix.}}
    \label{fig.verifynonconvexity}
\end{SCfigure}

{
Proposition~\ref{cor.sm.nonhermitian} can be directly extended by replacing \textit{Hermiticity} with the weaker condition of \textit{normality}. We note that if $J$ is normal for all optimization variables in the set, then $\Lambda_{\mathrm{max}}$ can be expressed as \cite{trefethen2020spectra}
\begin{equation} \label{eq.lambdasympart}
\Lambda_{\rm max}(J) = \max_{i}\Re(\lambda_{J,i}) 
                      = \max_{i} \lambda_{J_{\rm s},i},
\end{equation}
where $\lambda_{J,i}$ is an eigenvalue of $J$ and $\lambda_{J_{\rm s},i}$ is an eigenvalue of $J_{\rm s} = (J+J^\dagger)/2$. Thus, for the case in which $J$ is an affine function of $\bm b$, the optimization problem \eqref{eq.optimization} is convex if $J$ is normal for all $\bm b$ (given that $\Lambda_{\rm max}$ reduces to the largest eigenvalue of the {Hermitian} matrix $J_{\rm s}$). Using the contrapositive of this assertion, we arrive at the conclusion that nonconvexity of optimization problem~\eqref{eq.optimization} implies nonnormality of $J$ for some $\bm b$ (which is analogous to Proposition \ref{cor.sm.nonhermitian}). Moreover, for a broad class of Jacobian matrices, the next proposition shows that nonnormality and non-Hermiticity are generically equivalent within the set of all adjacency matrices $A$.
}

\begin{prop}
\label{prop.nonnormal_herm}
{
Consider the Jacobian matrix $J(\textbf{b};A)$ of system \eqref{eq.generalnetwork} evaluated at the equilibrium point $\bm x^{\mathrm{eq}}$, where $J$ is an affine function of $\bm b$ and $\pdv{\bm g(\bm x_i,\bm x_j)}{\bm x_j}\big|_{\bm x = \bm x^{\rm eq}} \neq 0$ for some $j\neq i$. Suppose there exists an adjacency matrix $A_0$ such that $J(\bm b';A_0)$ is non-Hermitian for some $\bm b'$.
Then, for almost all choices of $A\in\R^{N\times N}$, $J(\bm b';A)$ is non-Hermitian and nonnormal.
}
\end{prop}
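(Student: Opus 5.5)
The plan is to use the same analytic-function/measure-zero strategy as in the proof of Proposition~\ref{cor.sm.localcurvature}. For fixed $\bm b'$, the entries of $J(\bm b';A)$ depend polynomially, in fact affinely, on the entries $A_{ij}$. This follows from Eq.~\eqref{eq.generalnetwork}: the $i$th block row is $\partial\bm f/\partial\bm x_i+\sum_j A_{ij}\,\partial_{\bm x_i}\bm g(\bm x_i,\bm x_j)$, and the off-diagonal blocks are $A_{ij}\,\partial_{\bm x_j}\bm g$, both evaluated at $\bm x^{\rm eq}$. We treat $\bm x^{\rm eq}$ as fixed, as implicitly done throughout the paper; if it depended on $A$ analytically, the same argument would go through with "polynomial" replaced by "real analytic". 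Two scalar functions of $A$ are then the key objects:
\begin{equation*}
h(A)=\norm{J(\bm b';A)-J(\bm b';A)^\dagger}_F^2,\qquad n(A)=\norm{J(\bm b';A)J(\bm b';A)^\dagger-J(\bm b';A)^\dagger J(\bm b';A)}_F^2 .
\end{equation*}
Both are real polynomials in the entries of $A$. Non-Hermiticity is equivalent to $h(A)\neq0$, and nonnormality is equivalent to $n(A)\neq 0$.

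First, non-Hermiticity. By hypothesis $h(A_0)\neq0$, so $h$ is not identically zero. The zero set of a nonzero real analytic (here polynomial) function on $\R^{N\times N}$ has Lebesgue measure zero \cite{mityagin2015zero}. Hence $J(\bm b';A)$ is non-Hermitian for almost all $A$.

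Second, nonnormality. The same argument applies once we exhibit a single $A_1$ with $n(A_1)\neq0$, and constructing such a witness is the main obstacle. The hypothesis only gives non-Hermiticity at $A_0$, and a non-Hermitian matrix can still be normal, for example a skew-symmetric part added to a scalar matrix. I would build $A_1$ from the coupling assumption. Pick a pair $i\neq j$ with $G:=\partial_{\bm x_j}\bm g(\bm x_i,\bm x_j)|_{\bm x^{\rm eq}}\neq 0$, and take $A_1=t\,\bm e_i\bm e_j^\transp$, a single directed edge of weight $t$. Then $J(\bm b';A_1)=J_0+t\,E$, where $J_0$ is block diagonal (the uncoupled nodal Jacobians plus the diagonal coupling contribution, which is $O(t)$ and confined to block $i$), and $E$ has a single nonzero off-diagonal block $G$ at position $(i,j)$. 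Expanding $n(A_1)$ as a polynomial in $t$, the highest-order coefficient, of order $t^4$, is $\norm{EE^\dagger-E^\dagger E+\ldots}_F^2$. The only contributions at that order come from $E$ together with the $O(t)$ diagonal block, and $EE^\dagger$ and $E^\dagger E$ occupy the distinct diagonal blocks $(i,i)$ and $(j,j)$. In particular the $(j,j)$ block of the $t^2$ commutator equals $-G^\dagger G\neq0$ plus a term coming only from the block-$j$ diagonal, which carries no $t$ dependence. So the leading coefficient is nonzero, $n$ is not identically zero, and nonnormality holds for almost all $A$.

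To finish, intersect the two full-measure sets. If the block bookkeeping in the second step turns out messy, a fallback is to note that a nilpotent-like off-diagonal block dominating at large $t$ forces $\norm{JJ^\dagger-J^\dagger J}\sim t^2\norm{G}^2$, which is nonzero.
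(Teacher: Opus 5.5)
Your proposal is correct and follows essentially the paper's route. Non-Hermiticity comes from the measure-zero zero set of a polynomial in $A$ that is nonzero at $A_0$, and nonnormality from $[J,J^\dagger]$ not being identically zero because its quadratic term along a single edge $(i,j)$ has $(j,j)$ block $-G^\dagger G\neq0$. Your restriction to the line $A=t\,\bm e_i\bm e_j^\transp$ is exactly the paper's ``coefficient of $A_{ij}^2$'' computation, done at fixed $\bm b'$ without the paper's detour through the affine decomposition in $\bm b$. One caveat about your parenthetical: the argument needs $\bm x^{\rm eq}$ independent of $A$, as the paper also assumes, because otherwise $G$ and the nodal Jacobian blocks would vary with $t$ along your witness line and the polynomial bookkeeping would not carry over verbatim.
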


\begin{proof}
{By assumption, there exists a matrix $A_0$ for which the function $J(\bm{b}',A_0) - J^\dagger(\bm{b}',A_0)\neq 0$, which implies that this function is not identically zero. Therefore, the set $\{A \in\R^{N\times N}: J(\bm{b}',A)-J^\dagger(\bm{b}',A) = 0\}$ has Lebesgue measure zero, which proves that $J(\bm{b}';A)$ is generically non-Hermitian.}

Since $J(\bm{b}; A)$ is an affine function of $\bm{b}$, we can decompose the Jacobian as $J(\bm{b}; A)=J_{0}(A)+\sum_{\alpha=1}^{N}b_{\alpha}J_{\alpha}$. Therefore,
\begin{equation}
\label{eq.nonnormal_A_gen}
J(\bm{b};A)J(\bm{b};A)^{\dagger}-J(\bm{b};A)^{\dagger}J(\bm{b};A)=[J_{0}(A),J_{0}^{\dagger}(A)]+\sum_{\alpha=1}^{N}b_{\alpha}([J_{0}(A),J_{\alpha}^{\dagger}]+[J_{\alpha},J_{0}^{\dagger}(A)])+\sum_{\alpha,\beta=1}^{N}b_{\alpha}b_{\beta} [J_{\alpha},J_{\beta}^{\dagger}],
\end{equation}
where $[X,Y]$ is the commutator between $X$ and $Y$. Thus, $J(\bm{b}';A)$ is nonnormal for some $A$ if and only if
\begin{equation}
\label{eq.nonnormal_A_gen2}
[J_0(A),J_0^\dagger(A)]+\sum_{\alpha=1}^{N}b_{\alpha}'([J_{0}(A),J_{\alpha}^{\dagger}]+[J_{\alpha},J_{0}^{\dagger}(A)])\neq -\sum_{\alpha,\beta=1}^{N}b_{\alpha}'b_{\beta}' [J_{\alpha},J_{\beta}^{\dagger}].
\end{equation}
By assumption, we can choose $i\neq j$ such that $G_{ij} = \pdv{\bm g(\bm x_i,\bm x_j)}{\bm x_j}\big|_{\bm x = \bm x^{\rm eq}} \neq 0$. Since $J_0(A)$ is affine in $A$, the coefficient of the monomial $A_{ij}^2$ in $[J_0(A),J_0^\dagger(A)]$ has its $(j,j)$th-block given by $-G_{ij}^\dagger G_{ij}\neq0$.
Therefore, the left-hand side of Eq.~\eqref{eq.nonnormal_A_gen2} is nonconstant on $A$, and its right-hand side is constant on $A$. The commutator in Eq.~\eqref{eq.nonnormal_A_gen} is thus not identically zero, implying that $\{A \in\R^{N\times N}: [J(\bm{b}',A),J^\dagger(\bm{b}',A)]= 0\}$ has Lebesgue measure zero. This proves that $J(\bm b';A)$ is nonnormal for almost all $A\in\R^{N\times N}$.
\end{proof}

{The next corollary shows that a stronger claim can be made for a class of systems that includes all second-order models in Table~\ref{tab.systems}, namely that the Jacobian matrix is nonnormal for \textit{almost all} $\bm b$.}

\begin{corol}
\label{corol.Jac_so_normal}
{Consider the following class of Jacobian matrices
\begin{equation} \label{eq.sm.classofj}
J(\bm{b}; A)=
\begin{bmatrix}0&I_{N}\\-M_{1}-\epsilon B&-\gamma M_{2}-B\end{bmatrix},
\end{equation}
where $M_{1}$ and $M_{2}$ are real matrices with exclusively nonnegative eigenvalues and $\epsilon,\gamma\geq 0$. Then, 
$J(\bm{b}; A)$ is nonnormal for almost all choices of $\bm{b}\in\R^N_{\geq 0}$.
}
\end{corol}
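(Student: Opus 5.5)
The plan is to view the commutator $[J,J^\dagger]=JJ^\dagger-J^\dagger J$ as a matrix-valued polynomial in $\bm b$ and to show that it is not identically zero. The zero set of a nonzero real polynomial on $\R^N$ has Lebesgue measure zero, by the same fact used in the proofs of Propositions~\ref{cor.sm.localcurvature} and \ref{prop.nonnormal_herm}. Intersecting with $\R^N_{\geq 0}$ then gives nonnormality for almost all feasible $\bm b$.

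Write $J=\begin{bmatrix}0&I_N\\ C&D\end{bmatrix}$ with $C=-M_1-\epsilon B$ and $D=-\gamma M_2-B$. All matrices are real, so $\dagger$ coincides with $\transp$. A direct block multiplication gives
\begin{equation*}
JJ^\transp=\begin{bmatrix}I_N & D^\transp\\ D & CC^\transp+DD^\transp\end{bmatrix},\qquad
J^\transp J=\begin{bmatrix}C^\transp C & C^\transp D\\ D^\transp C & I_N+D^\transp D\end{bmatrix}.
\end{equation*}
It therefore suffices to exhibit one block of $[J,J^\transp]$ that is a nonzero polynomial in $\bm b$. I would split into two cases according to $\epsilon$.

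\emph{Case $\epsilon>0$.} Use the top-left block $I_N-C^\transp C=I_N-(M_1+\epsilon B)^\transp(M_1+\epsilon B)$. Its part that is quadratic in $\bm b$ is $-\epsilon^2B^2=-\epsilon^2\operatorname{diag}(b_1^2,\ldots,b_N^2)$, which is a nonzero polynomial. Hence the block is not identically zero.

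\emph{Case $\epsilon=0$.} The top-left block reduces to the constant $I_N-M_1^\transp M_1$. If this constant is nonzero, we are done. Otherwise $M_1$ is orthogonal, so its eigenvalues lie on the unit circle. Since they are also nonnegative, they all equal $1$. An orthogonal matrix is normal and hence unitarily diagonalizable, so $M_1=I_N$. In this subcase, use the top-right block $D^\transp-C^\transp D=D^\transp+D=-\gamma(M_2+M_2^\transp)-2B$. Its linear part $-2B$ is nonzero, so this block is not identically zero either.

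In both cases $[J,J^\transp]$ is a nonzero polynomial in $\bm b$, and the conclusion follows. The only delicate point, and the step I expect to be the main obstacle, is the degenerate subcase $\epsilon=0$ with $M_1^\transp M_1=I_N$. There the spectral hypothesis on $M_1$ is exactly what forces $M_1=I_N$ and makes the off-diagonal block computable. The hypothesis on $M_2$ and the sign of $\gamma$ are not needed.
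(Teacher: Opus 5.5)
Your proof is correct and follows the same route as the paper, which also writes out the blocks of $JJ^\dagger$ and $J^\dagger J$ and asserts that they cannot agree for generic $\bm b$. Your version is more careful: it turns the paper's one-line ``generically does not hold'' into a precise statement about the zero set of a nonzero polynomial in $\bm b$, and it isolates the only case where the spectral hypothesis on $M_1$ matters, namely $\epsilon=0$ with $M_1^\transp M_1=I_N$. If ``nonnegative eigenvalues'' is instead read as nonnegative real parts (so that directed Laplacians $M_1=L$ are covered), your deduction $M_1=I_N$ fails, but the repair is immediate: at $\epsilon=0$ the off-diagonal block of $[J,J^\transp]$ has linear part $-(I_N+M_1^\transp)B$ in $\bm b$, which vanishes identically only for $M_1=-I_N$, and that matrix is excluded under either reading.
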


\begin{proof}
{
It follows that
\begin{equation*}
\begin{aligned}
J^{\dagger}J&=\begin{bmatrix}0&-M_{1}^{\dagger}-\epsilon B\\I_{N}&-\gamma M_{2}^{\dagger}-B\end{bmatrix}\begin{bmatrix}0&I_{N}\\-M_{1}-\epsilon B&-\gamma M_{2}-B\end{bmatrix}=\begin{bmatrix}(M_{1}^{\dagger}+\epsilon B)(M_{1}+\epsilon B)&(M_{1}^{\dagger}+\epsilon B)(\gamma M_{2}+B)\\(\gamma M_{2}^{\dagger}+B)(M_{1}+\epsilon B)&(\gamma M_{2}^{\dagger}+B)(\gamma M_{2}+B)\end{bmatrix}, \,\,\, \text{and}\\
JJ^{\dagger}&=\begin{bmatrix}0&I_{N}\\-M_{1}-\epsilon B&-\gamma M_{2}-B\end{bmatrix}\begin{bmatrix}0&-M_{1}^{\dagger}-\epsilon B\\I_{N}&-\gamma M_{2}^{\dagger}-B\end{bmatrix}=\begin{bmatrix}I_{N}&-\gamma M_{2}^{\dagger}-B\\-\gamma M_{2}-B&(M_{1}+\epsilon B)(M_{1}^{\dagger}+\epsilon B)+(\gamma M_{2}+B)(\gamma M_{2}^{\dagger}+B)\end{bmatrix}.
\end{aligned}
\end{equation*}
The equality $J^{\dagger}J = JJ^{\dagger}$ requires the off-diagonal block matrices to be identical, which generically does not hold.
}
\end{proof}

\begin{remark}
\label{rem.nonnorm_1}
{
 An analogous corollary can be established for the Jacobians of the 2D models in Table~\ref{tab.systems} (FitzHugh-Nagumo and phase-amplitude oscillator models).
}
\end{remark}

\begin{corol}
\label{corol.Jac_so_normal_!D}
{The Jacobian matrix $J(\bm{b}; A)$ given by Eq.~\eqref{eq.jac1storder} is nonnormal for some choice of $\bm{b}\in\R^N_{\geq 0}$ if $A$ is non-Hermitian.}
\end{corol}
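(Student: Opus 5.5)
We need to show: if $A$ is non-Hermitian, there exists $\bm b\in\R^N_{\geq 0}$ such that $J=-(B+L)$ is nonnormal. The plan is to compute the commutator $[J,J^\dagger]$ directly and choose $\bm b$ so that it fails to vanish. Since $J$ is real, $J^\dagger=J^\transp$. Write $J=-(B+L)$ and note that the sign does not affect normality, so it suffices to study $M=B+L$. We have $MM^\transp-M^\transp M=[B,L^\transp]+[L,B]+[L,L^\transp]$, using $[B,B]=0$ and $BB^\transp=B^2$. In particular, $M$ is normal exactly when
\[
[L,L^\transp]+[L,B]-[L^\transp,B]=0 .
\]
This is an affine function of $\bm b$.

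\textbf{Case 1: $L$ itself is nonnormal.} We take $\bm b=\bm 0$. Then $J=-L$ is nonnormal and we are done.

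\textbf{Case 2: $L$ is normal.} Then the expression reduces to $[L-L^\transp,B]$, and we must find $\bm b\geq 0$ with $[S,B]\neq 0$, where $S=L-L^\transp$. Compute entrywise: $[S,B]_{ij}=S_{ij}(b_j-b_i)$. Since $L=D-A$ with $D$ diagonal, the off-diagonal entries of $S$ are $S_{ij}=-(A_{ij}-A_{ji})$. Because $A$ is real and non-Hermitian, there is a pair $i\neq j$ with $A_{ij}\neq A_{ji}$. A diagonal discrepancy cannot arise: $A-A^\transp$ always has zero diagonal, so non-Hermiticity of real $A$ must show up off the diagonal. Hence $S_{ij}\neq 0$ for this pair. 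Choosing $\bm b=\bm e_i\in\R^N_{\geq 0}$ gives $[S,B]_{ij}=S_{ij}(0-1)\neq 0$, so $J$ is nonnormal.

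The only step needing care is confirming that the off-diagonal asymmetry of $A$ transfers to $L$. This holds because $D$ only affects the diagonal. Combining the two cases proves the corollary. In fact, a generic $\bm b$ works in Case 2, since $b_i\neq b_j$ is an open dense condition.
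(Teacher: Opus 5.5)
Your proof is correct and follows the paper's route: the paper writes $[J,J^\dagger]=[B,L^\dagger-L]+[L,L^\dagger]$ and argues that, when $A\neq A^\dagger$, the $\bm b$-dependent commutator cannot vanish for every $\bm b\in\R^N_{\geq 0}$. Your case split on whether $L$ is normal, with the explicit witness $\bm b=\bm e_i$, makes explicit the step the paper leaves implicit: the constant term $[L,L^\dagger]$ cannot cancel the term linear in $\bm b$ for all $\bm b$.
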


\begin{proof}
{We can write $[J,J^{\dagger}] = [B,(-L+L^{\dagger})]+LL^{\dagger}-L^{\dagger}L$.
If $A\neq A^\dagger$, the commutator $[B,(-L+L^{\dagger})]$ cannot be zero for every $\bm b\in\R^N_{\geq 0}$. Therefore, there exists some $\bm{b}$ for which $J(\bm{b};A)$ is nonnormal.}
\end{proof}

\begin{SCfigure}[0.5][t]
    \centering
    \includegraphics[width=0.69\linewidth]{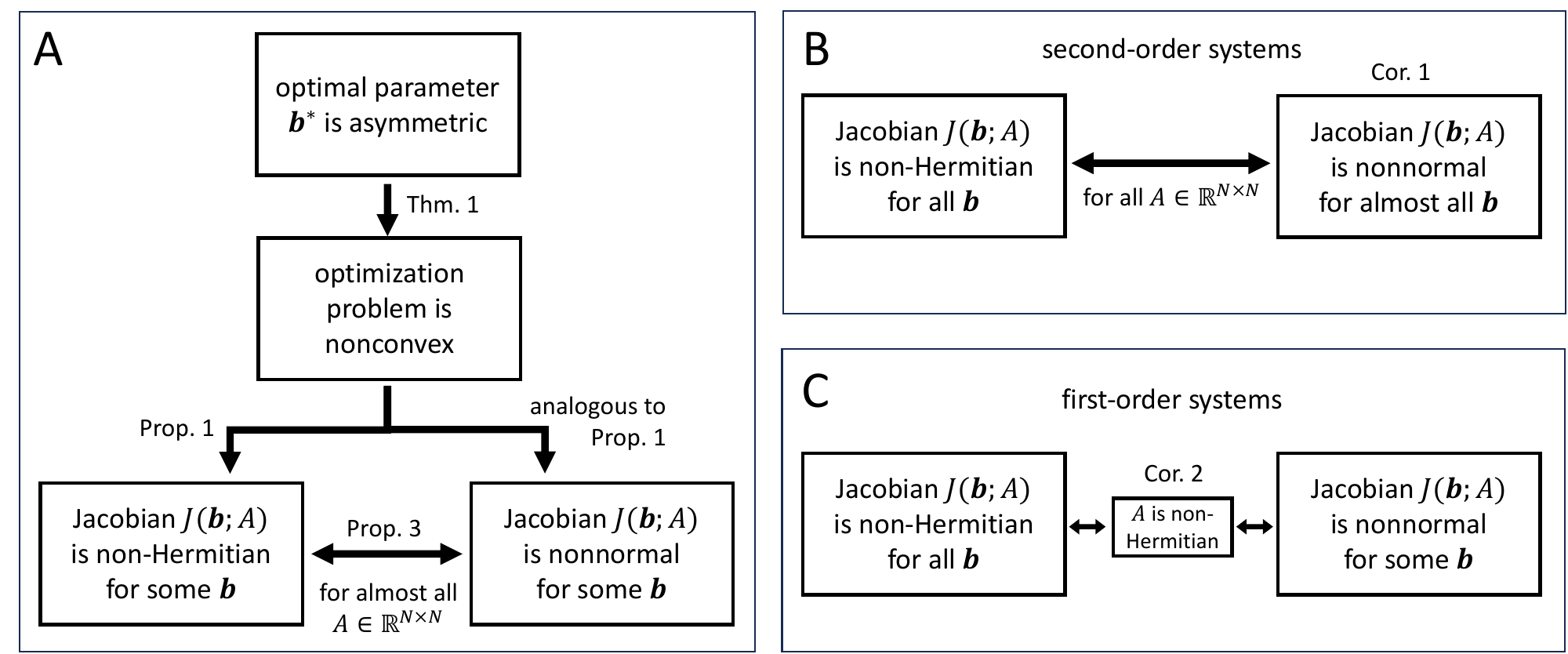}
    \caption{{\textbf{Relationship between nonconvexity, non-Hermiticity, and nonnormality.} The diagram summarizes the logical implications among these properties for (\textbf{A}) the broad class of Jacobian matrices arising from the network model in Eq.~\eqref{eq.generalnetwork}, (\textbf{B}) networks with second-order nodal dynamics described by the Jacobian matrices in Eqs.~\eqref{eq.jacobian} and \eqref{eq.sm.classofj}, and (\textbf{C}) networks with first-order nodal dynamics described by the Jacobian matrix in Eq.~\eqref{eq.jac1storder}.}}
    \label{fig.sm.diagram}
\end{SCfigure}

{
These theoretical results are summarized in Fig.~\ref{fig.sm.diagram}. Because non-Hermiticity and nonnormality are essentially equivalent in the setting considered here, we frame the discussion throughout the paper in terms of the (non-)Hermiticity of the Jacobian for consistency. This property can be inferred directly from the structure of the governing equations, whereas establishing nonnormality generally requires an explicit calculation. Moreover, Hermiticity provides a simple sufficient condition for convexity (contrapositive of Proposition~\ref{cor.sm.nonhermitian}).
}

\begin{remark} \label{rem.pseudospectra}
{
Throughout the text, we have focused our stability analysis on the largest Lyapunov exponent (i.e., the \textit{spectral abscissa}) of the Jacobian matrix. This quantity captures the asymptotic decay rate of infinitesimal perturbations. However, when the Jacobian is nonnormal, the asymptotic decay rate does not generally bound for the short-time response of the system, since perturbations may undergo substantial transient amplification before eventually decaying \cite{nishikawa2006synchronization,asllani2018structure,baggio2020efficient, duan2022network}. Lemma~\ref{lem.sm.generaljensen} can be applied to any objective function that is invariant under the relevant symmetries, including but not limited to $\Lambda_{\rm max}$. Another useful objective function is the \textit{pseudospectral abscissa} \cite{trefethen2020spectra},
\begin{equation}
\alpha_{\epsilon}(J)=\sup_{z\in\sigma_{\epsilon}(J)}\Re(z),
\end{equation}
where $\sigma(J)$ denotes the matrix spectrum and
$\sigma_{\epsilon}(J)=\{z\in\sigma(J+E) : \|E\|_2<\epsilon\}$ is the $\epsilon$-pseudospectrum, measuring how much the spectrum of $J$ can vary under perturbations of size $\epsilon>0$. The pseudospectral abscissa is thus relevant for assessing robustness to model uncertainties \cite{burke2003optimization}, and it also relates to transient amplification. For example, it provides an upper bound for the state deviation $\delta \bm x(t)$ from equilibrium, given by $\sup_{t\geq 0}\|\exp(Jt)\|_{2}\geq {\alpha_{\epsilon}(J)}/{\epsilon}$ \cite{trefethen2020spectra}. In the limit $\epsilon\rightarrow 0$, we have that $\alpha_{\epsilon}(J)-\epsilon$ approaches the spectral abscissa $\Lambda_{\rm max}(J)$. In contrast, for $\epsilon\rightarrow\infty$, the quantity $\alpha_{\epsilon}(J)-\epsilon$ approaches the \textit{numerical abscissa} $\omega(J)=\max_{i}(\lambda_{J_{\rm s},i})$, which determines the instantaneous growth rate of the state at time $t=0$.
For any finite value of $\epsilon$, the statements in Theorem~\ref{thm.jensenineq} and Proposition~\ref{cor.sm.nonhermitian} also hold for the optimization problem
\begin{equation}
\min_{\bm b} \,\,\, \alpha_{\epsilon}\big(J(\bm b;A)\big|_{\bm x^{\rm eq}}\big).
\end{equation}
In other words, the pseudospectral abscissa is a nonconvex function only when $J$ is non-Hermitian, in which case symmetry-broken parameter configurations may improve stability.
The limit case $\epsilon\rightarrow\infty$ is the sole exception. Following Eq.~\eqref{eq.lambdasympart}, since $\omega(J)$ is the largest eigenvalue of the Hermitian matrix $J_{\rm s}$, minimizing $\omega\big(J(\bm b)\big)$ is a convex optimization problem. Thus, symmetry-broken parameter configurations may enhance the (pseudo)spectral abscissa, but cannot enhance the numerical abscissa.
}
\end{remark}

\subsection{Network heterogeneity and stability}
\label{sec.sm.jensen.network}

Theorem~\ref{thm.jensenineq} provides a necessary condition under which the stability of a network system can benefit from \textit{nodal heterogeneities}, represented by an asymmetric parameter vector $\bm b$. We now build on this result to investigate when the stability of a system can benefit from \textit{network heterogeneities}, represented by network structures that break the symmetries of the nodal parameters.

\begin{theor}
\label{thm.sm.optadj}
     Consider the Jacobian matrix $J(\textbf{b};A)$ of system \eqref{eq.generalnetwork} evaluated at the equilibrium point $\bm x^{\mathrm{eq}}$. Let $A_{ij}$, for $i,j=1,\ldots,N$, be the edge weights sought to be optimized, {and let $\mathcal P_b$ be a group of permutation matrices representing symmetries of the nodal parameters (i.e., $\bm b=P\bm b$ for every
$P\in\mathcal P_b$)}. If the optimization
     \begin{equation}
         \min_{A} \Lambda_{\mathrm{max}}\big(J(\bm b;A)\big)
    \label{eq.sm.unconstrainednetopt}
     \end{equation}
     is convex, 
     then at least one globally optimal solution $A^*$ must satisfy the symmetry relation $A^* = P^\transp A^* P$ {for every $P\in\mathcal P_b$.}
\end{theor}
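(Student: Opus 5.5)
The plan is to reduce Theorem~\ref{thm.sm.optadj} to Lemma~\ref{lem.sm.generaljensen}, exactly as was done for Theorem~\ref{thm.jensenineq}, with the roles of $\bm b$ and $A$ exchanged. The decision variable is now $\bm y \leftarrow \operatorname{vec}(A)\in\R^{N^2}$, and the objective function is $C(A)=\Lambda_{\max}\big(J(\bm b;A)\big)$. For each $P\in\mathcal P_b$, we let the linear map $\mathcal T_P: A\mapsto P^\transp A P$ act on $\R^{N\times N}$. In vectorized form this map is the matrix $P^\transp\otimes P^\transp$, which is itself a permutation matrix on $\R^{N^2}$. The maps $\{\mathcal T_P\}_{P\in\mathcal P_b}$ form a finite group, because $\mathcal P_b$ is a finite group of permutations and $P\mapsto\mathcal T_P$ is an (anti)homomorphism; its image is again closed under composition and inverses. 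Continuity of $C$ follows from continuity of eigenvalues in the entries of $J$, together with the fact that $J$ depends continuously (indeed affinely, through $A$ and $L=D-A$) on $A$ at a fixed $\bm x^{\rm eq}$.

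The key step is to show the invariance $C(P^\transp A P)=C(A)$ for all $P\in\mathcal P_b$. Since $P\bm b=\bm b$ means $P^\transp B P = B$, relabeling the nodes by $P$ in system~\eqref{eq.generalnetwork} gives
\begin{equation*}
J(\bm b;P^\transp A P)=(I_q\otimes P)^{-1}J(P\bm b;A)(I_q\otimes P)=(I_q\otimes P)^{-1}J(\bm b;A)(I_q\otimes P),
\end{equation*}
with the Kronecker ordering matched to the state ordering used in Theorem~\ref{thm.jensenineq}. The two Jacobians are therefore similar and have the same spectrum. In particular, the identically null eigenvalues indexed by $\mathcal Z$ are preserved, so $\Lambda_{\max}$ is unchanged.

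This step is the main obstacle, because the Jacobian is evaluated at $\bm x^{\rm eq}$, and $\bm x^{\rm eq}$ may itself depend on $A$. I would handle this by noting that, when $\bm x^{\rm eq}(A)$ is an equilibrium of the system with matrix $A$, the permuted state $(P^\transp\otimes I_q)\bm x^{\rm eq}(A)$ (up to ordering convention) is an equilibrium of the system with $P^\transp AP$ and parameters $P\bm b=\bm b$. The Jacobian at that permuted equilibrium is exactly the conjugated matrix above. Thus the invariance holds provided the equilibrium is selected equivariantly, which is automatic for symmetric states such as synchronization, or when the equilibrium is independent of $A$. I would state this as the standing interpretation. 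Any domain constraints, such as bounds on degrees, must also be permutation invariant, and the feasible set must be convex; both are implicit in the convexity hypothesis.

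With the invariance in hand, Lemma~\ref{lem.sm.generaljensen} applies directly. The group average $\bar A=\frac{1}{|\mathcal P_b|}\sum_{P}P^\transp A^*P$ is a global minimizer, and it satisfies $Q^\transp\bar A Q=\bar A$ for every $Q\in\mathcal P_b$, which proves the theorem.
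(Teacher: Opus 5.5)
Your proposal is correct and follows the paper's route. The paper proves this theorem only by stating that it is analogous to Theorem~\ref{thm.jensenineq}: show that $\Lambda_{\max}$ is invariant under $A\mapsto P^\transp A P$ via a similarity transformation, then apply the group-averaging argument of Lemma~\ref{lem.sm.generaljensen}. Your explicit treatment of the induced permutation group on $\operatorname{vec}(A)$ and of the equivariant choice of $\bm x^{\rm eq}(A)$ fills in details the paper leaves implicit; the latter holds automatically, for instance, for $\bm x^{\rm eq}=-A^{-1}\bm b$ in the LV model.
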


\begin{proof}
    The proof is analogous to Theorem~\ref{thm.jensenineq}.
\end{proof}

\begin{corol}
\label{cor.sm.optadj_constrain}
     The constrained optimization problem 
     \begin{equation}
    \begin{aligned}
        \min_{A} \,\,\,& \Lambda_{\mathrm{max}} \big(J(\bm b;A)\big), \\
        {\rm s.t.} \,\,\,& \sum_{i,j=1}^{N}|A_{ij}|\leq A_{\rm{max}}, \,\, \norm{A}_0\leq S_{\rm max},
    \end{aligned}
    \label{eq.constrainednetopt}
    \end{equation}
    is nonconvex for $1\leq S_{\rm max}<N^2-N$.
\end{corol}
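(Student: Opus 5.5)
The plan is to prove nonconvexity through the feasible set alone, without analyzing $\Lambda_{\max}$. An optimization problem is convex only if both its objective and its feasible set are convex, so it suffices to exhibit two feasible adjacency matrices whose midpoint violates the constraints in \eqref{eq.constrainednetopt}. This also explains why the statement holds ``even if $A$ were restricted to be Hermitian'': the argument never uses the structure of $J$.

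The $\ell_1$ budget $\sum_{i,j}|A_{ij}|\le A_{\max}$ defines a convex set, being a norm ball. The obstruction therefore has to come from the sparsity constraint $\norm{A}_0\le S_{\max}$. I would work only with off-diagonal entries, which are the ones that matter for the network (self-loops cancel in the Laplacian $L=D-A$). There are $N^2-N$ such positions.

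\textbf{Construction.} Since $1\le S_{\max}<N^2-N$, there are at least $S_{\max}+1$ off-diagonal positions.
\begin{enumerate}
\item Choose a set $\mathcal S$ of $S_{\max}+1$ distinct off-diagonal index pairs.
\item Let $\mathcal S_1,\mathcal S_2\subset\mathcal S$ each have cardinality $S_{\max}$, obtained by removing two different elements of $\mathcal S$. This requires $|\mathcal S|\ge 2$, which holds since $S_{\max}\ge1$. Then $\mathcal S_1\cup\mathcal S_2=\mathcal S$.
\item Set $A^{(k)}_{ij}=c>0$ for $(i,j)\in\mathcal S_k$ and zero otherwise, with $c\le A_{\max}/S_{\max}$ (assuming $A_{\max}>0$).
\end{enumerate}
Both matrices satisfy $\norm{A^{(k)}}_0=S_{\max}$ and $\sum|A^{(k)}_{ij}|=cS_{\max}\le A_{\max}$, so both are feasible.

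\textbf{Midpoint.} Consider $\tfrac12(A^{(1)}+A^{(2)})$. All nonzero entries are positive, so no cancellation occurs, and its support is exactly $\mathcal S$. Hence $\norm{\tfrac12(A^{(1)}+A^{(2)})}_0=S_{\max}+1>S_{\max}$, and the midpoint is infeasible. The feasible set is therefore not convex, so \eqref{eq.constrainednetopt} is nonconvex for every objective, and in particular for $\Lambda_{\max}(J(\bm b;A))$.

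\textbf{Role of the bounds on $S_{\max}$.} When $S_{\max}\ge N^2-N$, the sparsity constraint is vacuous on off-diagonal entries, and the argument correctly fails. The lower bound $S_{\max}\ge 1$ makes $\mathcal S_1$ and $\mathcal S_2$ nonempty and distinct.

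\textbf{Main obstacle.} The only delicate point is which entries count in $\norm{A}_0$. If diagonal entries were also counted, the same construction still works and there would be even more room. If the intended meaning is that $\Lambda_{\max}$ itself is nonconvex, I would supplement with Proposition~\ref{cor.sm.nonhermitian}-type reasoning, but the feasible-set argument already suffices for the statement as posed.
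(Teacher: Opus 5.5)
Your argument is correct and matches the paper's proof, which also attributes nonconvexity solely to the $\ell_0$ sparsity constraint making the feasible set nonconvex; your feasible pair $A^{(1)},A^{(2)}$, whose midpoint has support of size $S_{\max}+1$, supplies the explicit witness that the paper leaves implicit (under the tacit assumption $A_{\max}>0$, which you rightly flag). One caveat concerns only your aside about the Hermitian-restricted variant: your supports are not symmetric, so the construction must instead use symmetric pairs of positions, which works for $S_{\max}\ge 2$ but fails for $S_{\max}=1$ without self-loops, where the feasible set collapses to the convex set $\{0\}$.
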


\begin{proof}
        The constraint $\norm{A}_0\leq S_{\rm max}$ defines a nonconvex feasible set, and thus it follows that the problem~\eqref{eq.constrainednetopt} is nonconvex.
\end{proof}

\begin{remark}
    Corollary \ref{cor.sm.optadj_constrain} also applies to budget constraints on the node in-degrees, given by $|d_{i}| \leq d_{{\rm max}, i}$ for each node $i$.
\end{remark}

The following examples demonstrate the applicability of Theorem~\ref{thm.sm.optadj} and Corollary~\ref{cor.sm.optadj_constrain} to networks with 1D and 2D nodal dynamics, corresponding to the results presented in the diagram in Figs.~\ref{fig.symmetries}e–h.

\begin{example}[Network systems with 1D nodal dynamics]
\label{examp.sm.optadj}
 Consider the network optimization problem \eqref{eq.sm.unconstrainednetopt}, where $J=-(L+B)$ and $B$ is a diagonal matrix with fixed entries $B_{ii} = b_{i}$.  Without loss of generality, we can write  $b_{i}>b_{i+1}\,\,\,\forall i<N$. The unconstrained problem \eqref{eq.sm.unconstrainednetopt} admits trivial optimal solutions in which the entries of $A$ are arbitrarily large, such as all-to-all networks with unbounded weights.
 We therefore focus instead on the (physically meaningful) constrained problem \eqref{eq.constrainednetopt}, with $S_{\rm max}<N^2-N$. In this setting, we can choose $A_{\text{max}}=1+\sum_{i=2}^{N}(b_{1}-b_{i})$ without loss of generality. To ensure that all optimal solutions $A^*$ correspond to weakly connected graphs, we always exclude the eigenvalue of $J$ with the largest real part when computing $\Lambda_{\rm max}$ (while ensuring that this excluded eigenvalue is equal to $-\max_i b_i$). To understand why this exclusion is necessary, consider the homogeneous case (i.e., $B=bI_N$). In this case, the Jacobian always has an eigenvalue equal to $-b$ (corresponding to the null Laplacian eigenvalue), independently of the choice of $A$. Therefore, the optimization problem \eqref{eq.constrainednetopt} should seek to minimize the eigenvalue with the \textit{second} largest real part since the leading eigenvalue is invariant. For a heterogeneous parameter $\bm b$, we can write $J=-(L+B)=-(L'+B')-(l_{\rm max}+b_{\rm max})I_N$, where $b_{\rm max} = \max_i b_i$ and $l_{\rm max} = \max_{i,j} L_{ij}$. Since $-(L'+B')$ is a primitive matrix, its largest eigenvalue is $l_{\rm max}$\cite{berman1994nonnegative} and, hence, the largest eigenvalue of $J$ is lower bounded by $-\max_i b_i$. Thus, in this general case, it is also more relevant to minimize the eigenvalue with the second largest real part.

The optimal solutions of this constrained problem are weighted adjacency matrices of the \textit{master-slave} type, that is, directed tree networks with a single master node and a hierarchical organization among the remaining nodes. Up to node permutation, master-slave networks can be represented as
\begin{equation}
 A_{ij}=\delta_{j,1}\left(1-\delta_{i,1}\right)\left(b_{1}-b_{i}+\frac{1}{N-1}\right).
 \label{eq.sol_A_network}
 \end{equation}
We can prove this result by considering an arbitrary Laplacian $L^{(1)}$ of a weakly connected graph with algebraic connectivity $\alpha_{\rm ac}$. It is always possible to construct a master-slave Laplacian $L^{(2)}$ with the same algebraic connectivity by assigning all its nonzero values to be equal to $\alpha_{\rm ac}$. Since $L^{(2)}$ contains only a single nonzero entry per row, it follows that $\sum_{i,j}A_{ij}^{(2)}<\sum_{i,j}A_{ij}^{(1)}$. Therefore, the rescaled Laplacian $\frac{L^{(2)}\sum_{i,j}A_{ij}^{(1)}}{\sum_{i,j}A_{ij}^{(2)}}$ satisfies the same budget constraint as $L^{(1)}$ while yielding a strictly larger algebraic connectivity. This result implies that master-slave architectures maximize stability under the imposed constraint.

For the special case where the parameter vector is fully symmetric (i.e., $B=bI_{N}$), a possible optimal adjacency matrix and its corresponding Jacobian matrix are given by
\begin{equation}
\label{opt_adj_1}
 A_{ij}^*=\frac{\delta_{j,1}(1-\delta_{i,1})}{N-1}
 \,\,\, \text{and} \,\,\,
 J_{ij}(\bm b;A^*)=-\frac{\delta_{j,1}(1-\delta_{i,1})}{N-1}+\delta_{i,j}\left(b+\frac{1-\delta_{i,1}}{N-1}\right).
 \end{equation}
This solution corresponds to a directed star network in which all edges have equal weight. However, this configuration is not unique: among the class of master-slave networks, the network structure illustrated in Fig.~\ref{fig.symmetries}E is equally optimal. We note that this solution is also optimal for homogeneous $b$ in the 2D case with the Jacobian \eqref{eq.jacobian}, since, in this case, the network modes decouple \cite{nishikawa2006maximum}.

In the case where we have two symmetry clusters (i.e., $B$ has two distinct parameter values such that $b_{i}=b_{1}$ if $i\leq N_1$ and $b_{i}=b_{2}$ if $N_1<i\leq N$), the optimal adjacency matrix and its corresponding Jacobian matrix are
 \begin{equation}
 A_{ij}^{*}=\begin{cases}\frac{\delta_{j,1}(1-\delta_{i,1})}{N-1}, \,\,\,  &\text{ if } i\leq N_1,\\
 \delta_{j,1}\left(b_{1}-b_{2}+\frac{1}{N-1}\right), \,\,\, &\text{ otherwise, }\end{cases}
 \,\,\, \text{and} \,\,\, 
 J_{ij}(\bm{b};A^{*})=\begin{cases}-\frac{\delta_{j,1}\left(1-\delta_{i,1}\right)}{N-1}+\delta_{i,j}\left(1-\delta_{i,1}\right)\left(b_{1}+\frac{1}{N-1}\right), \,\,\, &\text{ if } i\leq N_1,\\
 -\delta_{j,1}\left(b_{1}-b_{2}+\frac{1}{N-1}\right)+\delta_{i,j}\left(b_{1}+\frac{1}{N-1}\right), \,\,\, &\text{ otherwise. }\end{cases}
\end{equation}
This solution corresponds to a weighted directed star network in which edges pointing to nodes with parameter value $b_1$ have different weights from edges pointing to nodes with parameter value $b_2$. 
This structure generalizes naturally to multiple symmetry clusters, as illustrated in Fig.~\ref{fig.symmetries}G for a system with four symmetry clusters.
\QEDA
\end{example}

\begin{example}[Network systems with 2D nodal dynamics]
\label{examp.sm.optadj2ndorder}

Consider the Jacobian matrix \eqref{eq.jacobian} associated with a network of $N=5$ nodes and a partially symmetric parameter vector $\bm b = [3, 2, 2, 1, 1]$.  We study the constrained optimization problem \eqref{eq.constrainednetopt}, with $A_{\text{max}}=1+\sum_{i=2}^{N}(b_{1}-b_{i})$ and $S_{\rm max}=\frac{N^2-N}{2}$. 
In this case, the directed tree structures defined by Eq.~\eqref{eq.sol_A_network} are no longer optimal, yielding a Lyapunov exponent $\Lambda_{\rm max} = -0.5$.
To identify the optimal solution, we employ the sequential least squares programming (SLSQP) algorithm using 1{,}000 independently sampled random initial conditions. The resulting optimal adjacency matrix corresponds instead to the following directed network:
\begin{equation}
A^* = \begin{bmatrix}
0& 0.039& 0.042& 0& 2.09\\
       0& 0& 0& 0& 1.781\\
       1.52& 0& 0& 0& 0\\
       0& 0& 0& 0& 0.754\\
       0.342& 0.096& 0& 0.335& 0
\end{bmatrix},
\end{equation}
which exhibits a substantially less constrained topology than the directed trees obtained in the 1D case.
\QEDA
\end{example}

For LV networks, we now show that under certain conditions, any random sufficiently small perturbation to the interaction weights enhances the system stability.

\begin{theor}
\label{thm.ecologicalnet}
Consider the LV model \eqref{eq.lvmodel}, and assume that the growth rate vector $\bm b$ is nonuniform (i.e., $\bm b \neq b\bm 1_{N}$ for any scalar $b$). Let the adjacency matrix be given by $A = - I_N + \frac{\sigma}{N} \delta A$, where $\sigma\in\R$ and $\delta A\in\R^{N\times N}$, and let the Jacobian matrix at the equilibrium point $\bm x^{\mathrm{eq}} = -A^{-1}\bm b$ be denoted by $J(\sigma;A) = XA$. 
Suppose the matrix perturbation $\delta A$ satisfies the following conditions:
\begin{enumerate}
    \item $(\delta A)_{ii} = 0$ for all $i$ (no self-interactions),
    \item $(\delta A)_{ij} \geq 0$ for all $i\neq j$ (nonnegative interactions),
    \item $(\delta A)_{i_{0}j}>0$ for $i_{0} = \argmin_i b_i$ and some $j\neq i_{0}$ (the species with the lowest growth rate receives nonzero input).
\end{enumerate}

\noindent
Then, there exists a constant $\sigma_{\mathrm{max}}$ such that, for all $0<\sigma<\sigma_{\mathrm{max}}$, the largest Lyapunov exponent of the Jacobian matrix satisfies
\begin{equation}
    \Lambda_{\mathrm{max}}\big(J(\sigma;A)\big) < \Lambda_{\mathrm{max}}\big(J(0;A)\big).
\end{equation}
\end{theor}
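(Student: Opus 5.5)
The plan is a first-order perturbation expansion in $\varepsilon=\sigma/N$ around the uncoupled baseline, together with a continuity (Gershgorin-type) argument to localize the spectrum. Throughout I take $b_i>0$, so that the baseline equilibrium is a feasible coexistence state. I also read $i_0=\argmin_i b_i$ as a unique minimizer, which is implicit in the statement's notation; the case of ties is discussed at the end.

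At $\sigma=0$ we have $A=-I_N$, so $\bm x^{\rm eq}=\bm b$, $X=B$ and $J(0)=-B$. Hence $\Lambda_{\max}(J(0))=-b_{i_0}$, and every other eigenvalue $-b_i$ lies strictly to the left of it.

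\emph{Step 1: expand the equilibrium.} For small $\varepsilon$, write $\bm x^{\rm eq}=(I_N-\varepsilon\,\delta A)^{-1}\bm b=\bm b+\varepsilon\,\delta A\,\bm b+O(\varepsilon^2)$ using the Neumann series. Componentwise this gives
\[
x_i^{\rm eq}=b_i+\varepsilon\sum_{j\neq i}(\delta A)_{ij}b_j+O(\varepsilon^2).
\]
By conditions 2 and 3 and the positivity of $\bm b$, the coefficient $c_{i_0}=\sum_{j}(\delta A)_{i_0j}b_j$ is strictly positive. This expansion also shows that feasibility is preserved for small $\sigma$.

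\emph{Step 2: structure of the Jacobian.} We have $J(\sigma)=XA=-X+\varepsilon X\,\delta A$. By condition 1 the diagonal entries are exactly $J_{ii}=-x_i^{\rm eq}$, while all off-diagonal entries are $O(\varepsilon)$.

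\emph{Step 3: perturb the leading eigenvalue.} Since $-b_{i_0}$ is a simple eigenvalue of $J(0)$, the eigenvalue $\lambda_{i_0}(\varepsilon)$ is analytic in $\varepsilon$. Standard perturbation theory gives
\[
\lambda_{i_0}=J_{i_0i_0}+\sum_{j\neq i_0}\frac{J_{i_0j}J_{ji_0}}{J_{i_0i_0}-J_{jj}}+\dots
\]
The first-order term comes only from the diagonal entry, and the off-diagonal contributions enter at order $\varepsilon^2$. Therefore $\lambda_{i_0}=-b_{i_0}-\varepsilon c_{i_0}+O(\varepsilon^2)$, which is strictly less than $-b_{i_0}$ for small $\varepsilon>0$.

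\emph{Step 4: the other eigenvalues stay behind.} The remaining eigenvalues satisfy $\lambda_i=-b_i+O(\varepsilon)$ by continuity of the spectrum. Equivalently, for small $\varepsilon$ the Gershgorin discs centered at $-x_i^{\rm eq}$ have radii $O(\varepsilon)$ and separate from the disc around $-b_{i_0}$. Since $b_i>b_{i_0}$ for $i\neq i_0$, each such eigenvalue has real part at most $-b_{i_0}-\tfrac12\min_{i\neq i_0}(b_i-b_{i_0})$, which is below $-b_{i_0}-\varepsilon c_{i_0}$ for $\varepsilon$ small. Taking the minimum of the radii of validity in Steps 1, 3 and 4 yields $\sigma_{\max}$, and for all $0<\sigma<\sigma_{\max}$ we get $\Lambda_{\max}(J(\sigma))=\operatorname{Re}\lambda_{i_0}<-b_{i_0}=\Lambda_{\max}(J(0))$.

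\emph{Main obstacle.} I expect two delicate points. The first is making the $O(\varepsilon^2)$ remainder in Step 3 uniform, so that an explicit $\sigma_{\max}$ can be extracted. I would do this via the Bauer--Fike theorem, or via a Gershgorin argument after a diagonal similarity $\operatorname{diag}(1,\ldots,\varepsilon^{-1/2},\ldots)$, which shrinks the radius of the $i_0$ disc to $O(\varepsilon^{3/2})$ while keeping its center at $-x_{i_0}^{\rm eq}$.

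The second is the case where the minimum of $\bm b$ is attained at several indices $\mathcal T$. There the leading block is degenerate, and the first-order eigenvalues are those of
\[
-b_{\min}I-\varepsilon\big(\operatorname{diag}(c_i)_{i\in\mathcal T}-b_{\min}\,\delta A_{\mathcal T\mathcal T}\big).
\]
If the statement is meant to cover ties, I would try to show this block is stable by a weighted row-sum bound. That requires condition 3 at every $i\in\mathcal T$ and $\sum_{j\notin\mathcal T}(\delta A)_{ij}b_j>0$. Otherwise I would restrict the proof to a unique minimizer, which I expect is what the statement intends.
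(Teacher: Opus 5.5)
Your proof is correct, but its key step differs from the paper's. The paper does not perturb the eigenvalue itself. It truncates the Jacobian at first order, $J\approx -B+\tfrac{\sigma}{N}\big(B\,\delta A-\operatorname{diag}(\delta A\,\bm b)\big)$, and applies Gershgorin's theorem to every row. It then observes that the rightmost point of $D_{i_0}$ moves to $-b_{i_0}+\tfrac{\sigma}{N}\sum_{j}(\delta A)_{i_0j}(b_{i_0}-b_j)<-b_{i_0}$, while the other discs start at $-b_i<-b_{i_0}$ and move only by $O(\sigma)$.

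You instead treat $-b_{i_0}$ as a simple eigenvalue and read off its exact first-order slope, $-\tfrac{1}{N}\sum_j(\delta A)_{i_0j}b_j$. Gershgorin only certifies the weaker slope $-\tfrac{1}{N}\sum_j(\delta A)_{i_0j}(b_j-b_{i_0})$; the difference is exactly the disc radius. The paper's route is more elementary and gives an inclusion region rather than an asymptotic expansion. As written, however, it leaves the truncation error uncontrolled. Your diagonal-similarity device repairs this. So does applying Gershgorin to the exact rows of $J=XA$: by the equilibrium equation, the rightmost point of the $i$th disc is exactly $-b_i+\tfrac{\sigma}{N}\sum_j(\delta A)_{ij}(x_i^{\rm eq}-x_j^{\rm eq})$ whenever $x_i^{\rm eq}>0$.

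The two extra hypotheses you impose are genuinely needed, and the paper's argument uses them implicitly.

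\begin{itemize}
\item \emph{Positive growth rates.} The paper's radius formula $\tfrac{\sigma}{N}b_i\sum_j(\delta A)_{ij}$ presupposes $b_i\ge 0$, and the claim fails for negative rates. For $N=2$, $\bm b=(-2,-1)$, and only $(\delta A)_{12}=1$, the leading eigenvalue is $2+\sigma/2>2=\Lambda_{\max}(J(0;A))$.
\item \emph{Unique minimizer.} Nonuniformity of $\bm b$ does not exclude ties. If a second minimizer $i_1$ has a vanishing row in $\delta A$, then $x_{i_1}^{\rm eq}=b_{i_1}$ and row $i_1$ of $J$ equals $-b_{i_1}\bm e_{i_1}^\transp$. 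Hence $-\min_i b_i$ stays in the spectrum and the strict inequality fails.
\end{itemize}

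Reading $\argmin$ as a unique minimizer, as you do, is therefore the correct interpretation.
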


\begin{proof}
For $\sigma=0$, $J = -X$ and $\Lambda_{\mathrm{max}}(J;A) = -\min_i x_i^{\mathrm{eq}}$. We consider the effect of small perturbations $\delta A$ using a first-order expansion
\begin{equation}
A^{-1} = \left(I_N-\frac{\sigma}{N} \delta A \right)^{-1} = \sum_{k=0}^{\infty}\frac{\sigma^k}{N^k} (\delta A)^{k} =  I_N+\frac{\sigma}{N} \delta A + \mathcal O\left(\frac{\sigma^2}{N^2}(\delta A)^2\right).
\label{eq.sm.firstorderlvadj}
\end{equation}

\noindent
Thus, the equilibrium point is given by $\bm x^{\mathrm{eq}} = - A^{-1} \bm b \approx (I_N+\frac{\sigma}{N} \delta A) \bm b$ and, hence, the Jacobian matrix is approximated by
\begin{equation}
    J = XA \approx \text{diag} \Big( \big(I_N+\frac{\sigma}{N} \delta A \big) \bm b\Big)\Big(-I_N+\frac{\sigma}{N}\delta A \Big) = B + \frac{\sigma}{N} \left(B\delta A - \text{diag}(\delta A\bm b) \right).
\label{eq.sm.lvjacobianderivation}
\end{equation}

\noindent
Let us define the matrix $\Psi = B\delta A - \text{diag}(\delta A\bm b)$, and note that
\begin{equation*}
    \Psi_{ij} = 
    \begin{cases}
        -\sum_{j=1}^{N}(\delta A)_{ij}b_{j} \,\,\, & \text{if} \,\, i = j, \\
        \,\,\,\,\,\,\,\,\,\,\,\,\,\,\,(\delta A)_{ij}b_{i} \,\,\, & \text{if} \,\, i\neq j.
    \end{cases}
\end{equation*}

\noindent
We can now analyze the Gershgorin discs associated with the approximated Jacobian matrix \eqref{eq.sm.lvjacobianderivation} to assess the stability properties of $\bm x^{\mathrm{eq}}$. Each Gershgorin disc $D_i(J_{ii},R_i)$ has the following center and radius:
\begin{equation*}
    \begin{aligned}
        J_{ii} &\approx - b_i - \frac{\sigma}{N}\sum_j (\delta A)_{ij} b_j, \\
        R_{i} &= \sum_{j\neq i=1}^{N} |J_{ij}| \approx \frac{\sigma b_{i}}{N} \sum_{j\neq i=1}^{N} (\delta A)_{ij}.
    \end{aligned}
\end{equation*}

\noindent
Thus, the rightmost point of disc $D_i$ lies at $-b_i + \frac{\sigma}{N} \sum_{j \neq i=1}^{N} (\delta A)_{ij} (b_i - b_j)$. For $i_{0}=\argmin_i b_i$, the difference $(b_{i_{0}} - b_j)$ is nonpositive and, as a result, the inequality $\sum_{j\neq i_{0}=1}^{N} (\delta A)_{ij}(b_{i_{0}} - b_j) < 0$ holds given that $(\delta A)_{i_{0} j} > 0$ by assumption. Moreover, $\sum_{j\neq i=1}^{N} |(\delta A)_{ij}(b_{i} - b_j)|$ is maximized for $i=i_{0}$. It follows that, under the first-order approximation in Eq.~\eqref{eq.sm.firstorderlvadj} (which is valid for sufficiently small $\sigma<\sigma_{\mathrm{max}}$), the disc $D_{i_{0}}$ is shifted leftwards more than any other disc $D_{i\neq i_{0}}$ is shifted rightwards. Therefore, $\Lambda_{\mathrm{max}}(\sigma)<\Lambda_{\mathrm{max}}(0)$ for all $\sigma\in (0,\sigma_{\mathrm{max}})$.
\end{proof}

\begin{remark}
    Theorem~\ref{thm.ecologicalnet} establishes that introducing positive edge weights $(\delta A)_{ij}$ into an initially uncoupled network (i.e., when $A = - I_N$) will always improve the stability of the overall ecological system. This analytical result is consistent with the numerical findings presented in Fig.~\ref{fig.ecologicalnet}, which show that, in mutualistic networks (characterized by strictly nonnegative weights $A_{ij}\geq 0$, $\forall i\neq j$), small structural perturbations lead to increased stability. In the simulations, each perturbation $(\delta A)_{ij}$ was modeled as a strictly positive random variable drawn from a half-normal distribution, i.e., $(\delta A)_{ij} \sim |\mathcal N(0,\sigma^2)|$. 
    Thus, the standard deviation $\sigma$ directly controls the strength of interspecies interactions and corresponds to the perturbation magnitude analyzed in Theorem~\ref{thm.ecologicalnet}.
\end{remark}

\bigskip
\section{Conditions for disorder-promoted stability}
\label{sec.sm.dpsconditions}

In this section, we analytically investigate how parameter asymmetry affects the stability of systems that exhibit a graph automorphism with full symmetry and whose Jacobian matrix is described by Eq.~\eqref{eq.jacobian}. To this end, we first characterize the differentiability of the stability function $\Lambda_{\mathrm{max}}(\bm b)$ at a global optimum $\bm b^*$ (Sec.~\ref{sec.sm.differentiability}). Building on these results, we identify a class of network structures for which (random) parameter perturbations have a high probability of enhancing the system stability (Sec.~\ref{sec.sm.circulant}).
These results are based on the following assumptions:

\begin{assump}
    The Laplacian matrix $L$ is diagonalizable, only has eigenvalues with nonnegative real parts, and size $N\geq 3$.
\end{assump}

\begin{assump}
    The damping coefficients are nonnegative (i.e., $b_i\geq 0$, $\forall i$).
\end{assump}

\noindent
The following notation is also adopted in this section. Unless noted otherwise, we sort the eigenvalues of $L$ by the ascending order of their real part: $\mathrm{Re}(\lambda_{L,1}) \leq \ldots \leq \mathrm{Re}(\lambda_{L,N})$. Due to the quadratic nature of the considered eigenvalue problems, the eigenvalues of $J$ have conjugate branches, respectively denoted $\{\lambda_{i,+}\}$ and $\{\lambda_{i,-}\}$. The formulas for the derivatives of eigenvalues and eigenvectors of the quadratic eigenvalue problem are taken from Ref.~\cite{adhikari2001eigenderivative}. 

\smallskip
\subsection{Conditions for differentiability of the stability landscape at optimal parameter configurations}
\label{sec.sm.differentiability}

Proposition~\ref{prop.theor_hermit_diff} establishes a sufficient condition under which the gradient function of the objective function $\Lambda_{\mathrm{max}}(\bm b)$ is \textit{never} zero, implying that a global optimum $\bm b^*$ must lie at a nondifferentiable point in the stability landscape. Proposition~\ref{prop.nondifferentiablehom}, in turn, focuses on the differentiability conditions at optimal \textit{homogeneous} parameters.

\begin{prop}
\label{prop.theor_hermit_diff}
{Consider the Jacobian matrix \eqref{eq.jacobian}.}
If $L$ is a Hermitian matrix, then $\nabla_{\bm b}\Lambda_{\mathrm{max}}\big(J(\bm b;A)\big) \neq 0$ for all $\bm b\in\R^N_{\geq 0}$ such that $\Lambda_{\mathrm{max}}(\bm b)$ is differentiable and $\bm b$ is not a saddle point.
\end{prop}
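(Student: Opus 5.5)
The plan is to compute the gradient of $\Lambda_{\mathrm{max}}$ explicitly from the quadratic eigenvalue problem and show that it cannot vanish at a differentiable point that is not a saddle. Since $\det(J-\lambda I_{2N})=\det(\lambda^2 I_N+\lambda B+L)$, every nonidentically null eigenvalue $\lambda$ of \eqref{eq.jacobian} satisfies $(\lambda^2 I_N+\lambda B+L)\bm v=0$ for some $\bm v\neq 0$.

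At a point $\bm b$ where $\Lambda_{\mathrm{max}}$ is differentiable, the argument of Proposition~\ref{cor.sm.localcurvature} lets us assume that $\Lambda_{\mathrm{max}}=\operatorname{Re}\lambda_1$ for a simple eigenvalue $\lambda_1$ that is strictly dominant up to its conjugate. When $L$ is Hermitian (hence real symmetric), the pencil $\lambda^2 I_N+\lambda B+L$ is complex symmetric, so the left eigenvector is $\bm v^\transp$. The eigenderivative formula of Ref.~\cite{adhikari2001eigenderivative} then gives
\begin{equation*}
\frac{\partial \lambda_1}{\partial b_k}=-\frac{\lambda_1 v_k^2}{2\lambda_1 a+c},\qquad a=\bm v^\transp\bm v,\quad c=\bm v^\transp B\bm v ,
\end{equation*}
together with the scalar identity $\lambda_1^2 a+\lambda_1 c+\ell=0$, where $\ell=\bm v^\transp L\bm v$. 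Writing $z=\lambda_1/(2\lambda_1 a+c)$, the gradient vanishes exactly when $\operatorname{Re}(z v_k^2)=0$ for all $k$.

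I would split the argument according to whether $\lambda_1$ is real or complex. If $\lambda_1$ is real, $\bm v$ can be taken real, so $z$ and $a>0$ are real. The conditions $\operatorname{Re}(z v_k^2)=0$ for all $k$ then force $z=0$, hence $\lambda_1=0$, which contradicts $1\notin\mathcal Z$.

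If $\lambda_1$ is complex (an underdamped leading pair), I would sum the vanishing conditions with weights $1$ and $b_k$. This gives $\operatorname{Re}(za)=0$ and $\operatorname{Re}(zc)=0$. Combining them with $\lambda_1/z=2\lambda_1 a+c$ yields $\operatorname{Re}\lambda_1=-2t\,\operatorname{Im}\lambda_1$, where $za=\mathrm{i}t$.

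To obtain a contradiction, I would bring in the sesquilinear (Rayleigh) identity $\|\bm v\|^2\lambda_1^2+(\bm v^\dagger B\bm v)\lambda_1+\bm v^\dagger L\bm v=0$. Because this quadratic has real coefficients, it gives $\operatorname{Re}\lambda_1=-\bm v^\dagger B\bm v/(2\|\bm v\|^2)$ and $|\lambda_1|^2=\bm v^\dagger L\bm v/\|\bm v\|^2$. The aim is to show that the relations above then pin $(\bm v,\lambda_1)$ to a configuration at which the Hessian of $\operatorname{Re}\lambda_1$ is indefinite. This would place any zero of the gradient in the excluded saddle case, ruling out differentiable local extrema.

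The main obstacle is this complex-branch case. The bilinear quantities $a$, $c$, $\ell$ are genuinely complex for complex $\bm v$, so the clean positivity argument of the real case is unavailable. My first attempt would be to test the homogeneous direction $\delta\bm b=\bm 1_N$, for which $\sum_k\partial\lambda_1/\partial b_k=-\lambda_1^2 a/(\lambda_1^2 a-\ell)$, and to try to show that its real part has a definite sign using $L\succeq 0$. If that fails, I would fall back on the second-order expansion to certify the saddle structure.
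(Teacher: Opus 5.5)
Your real-eigenvalue case is correct and matches the paper. For real symmetric $L$ the left eigenvector is $\bm v^{\mathrm c}$, so $\partial\lambda_1/\partial b_k=-\lambda_1 v_k^2/\big(\bm v^\transp(2\lambda_1 I+B)\bm v\big)$, and for real $\lambda_1\neq 0$ this gradient is real and cannot vanish. You are slightly more careful than the paper here, since you note that $\lambda_1\neq 0$ is needed.

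The gap is the complex case, which is the only nontrivial one, and there the proposal stops at a plan. The identities $\operatorname{Re}(za)=\operatorname{Re}(zc)=0$, the resulting $\operatorname{Re}\lambda_1=-2t\operatorname{Im}\lambda_1$, and the sesquilinear relations are all just consequences of $\nabla\operatorname{Re}\lambda_1=0$. You give no argument that they are inconsistent or that they force an indefinite Hessian.

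Your first attempt also cannot succeed using only $L\succeq 0$ and $\bm b\geq 0$. The homogeneous derivative $\sum_k\partial\lambda_1/\partial b_k=-\lambda_1 a/(2\lambda_1 a+c)$ is exactly zero whenever $a=\bm v^\transp\bm v=0$. Isotropic eigenvectors of simple non-real eigenvalues do occur, because simplicity only requires $c\neq 0$. For example, take $L=\begin{bmatrix}7&-4\\-4&3\end{bmatrix}\succ 0$ and $B=\operatorname{diag}(4,0)$. Then $\lambda=-1+2\mathrm{i}$ is a simple root of $\det(\lambda^2I+\lambda B+L)=(\lambda^2+2\lambda+5)(\lambda+1)^2$, with eigenvector $\bm v=(1,\mathrm{i})^\transp$, so $a=0$ while $c=4$. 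This $L$ is not a Laplacian and $\lambda$ is not strictly dominant there, so the example does not contradict the proposition. It does show that dominance and/or the Laplacian structure must enter the argument, and your sketch uses neither.

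You also cannot simply import the paper's treatment of this case, because it is itself only sketched. The paper writes $\operatorname{Re}\lambda_{\max}=-\bm u^\dagger B\bm u/(2\bm u^\dagger\bm u)$ and asserts that some perturbation of $\bm b$ that raises this quotient, while keeping the discriminant negative, lowers $\operatorname{Re}\lambda_{\max}$. It does not account for the first-order change of $\bm u$, and this sesquilinear quotient is not stationary in $\bm u$ at a non-real eigenpair in general. It also only rules out local minima, not local maxima.

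A sharper starting point than your two weighted sums is to use all $N$ conditions.
1. $\operatorname{Re}(zv_k^2)=0$ for every $k$ puts all $v_k^2$ on a single line through the origin.
2. Hence, up to a global phase, $\bm v=e^{\mathrm{i}\theta}(\bm w+\mathrm{i}\bm w')$ with real $\bm w,\bm w'$ of disjoint supports.
3. Splitting the eigen-equation into real and imaginary parts then gives real linear relations for $L\bm w$ and $L\bm w'$.
4. If $\bm w'=0$, these relations force $b_k=-2\operatorname{Re}\lambda_1$ on the support of $\bm w$. Then $\partial\operatorname{Re}\lambda_1/\partial b_k=-w_k^2/(2\|\bm w\|^2)\neq 0$ there, a contradiction.

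The mixed case, with both $\bm w$ and $\bm w'$ nonzero, is where the sign of $\bm b$ and the structure of $L$ would have to be used. That is exactly the step your proposal, and the paper, leave open.

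A minor point: Proposition~\ref{cor.sm.localcurvature} does not justify assuming a simple, strictly dominant $\lambda_1$ at every differentiable point. It is a generic-in-$A$ statement at a fixed $\bm b_0$. The paper makes the same regularity assumption implicitly, however.
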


\begin{proof}
Let $\lambda_{\mathrm{max}}$ be the eigenvalue of the Jacobian matrix \eqref{eq.jacobian} with the largest real part. 
Because $J$ is associated with a quadratic eigenvalue problem, the eigenvalue $\lambda_{\mathrm{max}}$ satisfies the following characteristic equation
\begin{equation}
 \Big(\lambda_{\mathrm{max}}^2(\bm b) I_N+\lambda_{\mathrm{max}}(\bm b)B+L\Big) \bm u_{\mathrm{max}}(\bm b)=0, 
 \label{eq:quad_eig_prob}
\end{equation}
where $\bm u_{\mathrm{max}}$ is the right eigenvector associated with $\lambda_{\mathrm{max}}$, and the dependence on the damping vector $\bm b$ is explicitly indicated. 
%
If $\bm b$ is chosen such that the gradient is well-defined, then both $\lambda_{\mathrm{max}}$ and $\bm u_{\mathrm{max}}$ are analytic functions in a neighborhood of this point. 
Taking the gradient of Eq.~\eqref{eq:quad_eig_prob} with respect to $\bm b$ and setting it equal to zero yields
\begin{align}
\Big(\lambda^2_{\mathrm{max}}I_N+\lambda_{\mathrm{max}} B+L\Big)\frac{\partial\bm u_{\mathrm{max}}}{\partial b_{i}}
+\frac{\partial\lambda_{\mathrm{max}}}{\partial b_{i}}\Big(2\lambda_{\mathrm{max}} I_N+B\Big)\bm u_{\mathrm{max}}+\lambda_{\mathrm{max}}R_{i} \bm u_{\mathrm{max}}=0_N,  
\label{eq:grad_b}
\end{align}
where $(R_{i})_{jk} =\delta_{ij}\delta_{jk}$ is the matrix with a single nonzero entry at the $i$th entry of the diagonal.

Now, let $\bm v_{\mathrm{max}}$ be the left eigenvector associated with $\lambda_{\mathrm{max}}$, which satisfies $\bm v_{\mathrm{max}}^\dagger(\bm b) \Big(\lambda_{{\mathrm{max}}}^2(\bm b)I_N +\lambda_{\mathrm{max}}(\bm b)B+L\Big)=0$.
For quadratic eigenvalue problems \cite{adhikari2001eigenderivative}, the left and right eigenvectors can be normalized as follows:
\begin{equation}
\bm v_{\mathrm{max}}^\dagger \Big(2\lambda_{\mathrm{max}} I_N+B\Big)\bm u_{\mathrm{max}}=1. \label{eq:normalization}
\end{equation}
Left-multiplying Eq.~\eqref{eq:grad_b} by $\bm v^\dagger_{\mathrm{max}}$ and applying the normalization~\eqref{eq:normalization}, we obtain 
\begin{equation}
\frac{\partial\lambda_{\mathrm{max}}}{\partial b_{i}}=-\lambda_{\mathrm{max}}v^{\mathrm{c}}_{\mathrm{max},i}u_{\mathrm{max},i}, 
\label{eq:grad_b_res}
\end{equation}
where $u_{{\mathrm{max}},i}$ indicates the $i$th element of $\bm u_{{\mathrm{max}}}$.
Therefore, the vanishing gradient condition $\frac{\partial\lambda_{\mathrm{max}}}{\partial b_{i}}=0$ is only satisfied when $v_{{\mathrm{max}},i}^{\mathrm{c}}\,u_{{\mathrm{max}},i}=0$, $\forall i$.
However, this contradicts the normalization condition~\eqref{eq:normalization}, which ensures that there exists at least one index $i$ such that $v^{\mathrm{c}}_{{\mathrm{max}},i}u_{{\mathrm{max}},i}\neq 0$. Thus, the gradient of $\lambda_{{\mathrm{max}}}$ is never zero, that is, $\nabla_{\bm b} \lambda_{\mathrm{max}}(\bm b) \neq 0$ at any differentiable point $\bm b$.

It remains to consider whether the gradient of $\lambda_{\mathrm{max}}$ could be nonzero yet purely imaginary, thereby allowing $\nabla \Lambda_{\mathrm{max}} \coloneq \mathrm{Re} (\nabla \lambda_{\mathrm{max}})$ to be zero. Since $L$ is Hermitian, the left and right eigenvectors of Eq.~\eqref{eq:quad_eig_prob} are proportional, i.e., $\bm v_{j}\propto \bm u_{j}^c$, $\forall j$ \cite{tisseur2001quadratic}. Eq.~\eqref{eq:grad_b_res} thus reduces to $-\lambda_{\mathrm{max}} u_{{\mathrm{max}},i}^2$, which is real (complex) when $\lambda_{\mathrm{max}}$ is real (complex). We now consider the two cases separately:
\begin{itemize}
    \item If $\lambda_{\mathrm{max}}$ is real, then $u_{\mathrm{max},i}^2$ is real and nonzero for some $i$, which implies that the derivative $\partial\lambda_{\mathrm{max}}/\partial b_{i}$ is nonzero.

    \item Otherwise, we can write
    \begin{equation}
        \lambda_{\mathrm{max}}=\frac{-\bm u_{\mathrm{max}}^\dagger B \bm u_{\mathrm{max}}+\sqrt{(\bm u_{\mathrm{max}}^\dagger B\bm u_{\mathrm{max}})^2-4\bm u_{\mathrm{max}}^\dagger L \bm u_{\mathrm{max}}}}{2\bm u_{\mathrm{max}}^\dagger \bm u_{\mathrm{max}}}.
        \label{quadr_eig}
    \end{equation}
    It follows that $\frac{\bm u_{\mathrm{max}}^\dagger B\bm u_{\mathrm{max}}}{\bm u_{\mathrm{max}}^\dagger \bm u_{\mathrm{max}}}>0$ since $B$ is a diagonal and positive definite matrix by assumption. Additionally, given that $L$ is a semi-positive definite matrix, we have $\big(\bm u_{\mathrm{max}}^\dagger B\bm u_{\mathrm{max}}\big)^2-4 \bm u_{\mathrm{max}}^\dagger L\bm u_{\mathrm{max}}<0$. Thus, the term inside the radical remains negative for a small perturbation of $\bm b$ in a direction that reduces the term outside the square root. It is possible to prove that such a descending direction exists, which implies that $\lambda_{\mathrm{max}}$ decreases and, therefore, this point is not a local minima. 
\end{itemize}

\noindent
We thus conclude that local minima of the largest Lyapunov exponent can only occur at nondifferentiable points.
\end{proof}

\begin{remark}
The proof of Proposition~\ref{prop.theor_hermit_diff} also shows that, at differentiable points, the real part of the gradient of any eigenvalue of $J$\textemdash not just the one related to the largest Lyapunov exponent\textemdash is nonzero.
\label{rem.proposition.nonzeroeig}
\end{remark}

\begin{prop}
{Consider the Jacobian matrix \eqref{eq.jacobian}.}
If all the eigenvalues of the Laplacian matrix $L$ are real, then $\Lambda_{\mathrm{max}}\big(J(\bm b;A)\big)$ is not differentiable at the homogeneous optimal point $\bm b^{*}_{\mathrm{hom}} = b^{*}_{\mathrm{hom}} \bm 1_{N}$.
\label{prop.nondifferentiablehom}
\end{prop}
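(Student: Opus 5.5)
Under homogeneous damping $B=bI_N$ with diagonalizable $L$ and real spectrum, the Jacobian \eqref{eq.jacobian} decouples into $2\times 2$ blocks, as in Eq.~\eqref{eq.similarJ}. Its eigenvalues are therefore the roots of $\lambda^2+b\lambda+\lambda_{L,k}=0$, namely
\begin{equation*}
\lambda_{k,\pm}(b)=\tfrac{1}{2}\big(-b\pm\sqrt{b^2-4\lambda_{L,k}}\big),\qquad k=1,\ldots,N.
\end{equation*}
The plan is to compute $b^*_{\rm hom}$ explicitly from these formulas and then show that at $\bm b^*_{\rm hom}$ one of two things happens. Either the leading eigenvalue is defective (a double root, where branches collide), or two distinct eigenvalue branches tie for the maximal real part. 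In both cases $\Lambda_{\max}$ fails to be differentiable as a function of $\bm b\in\R^N$.

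\textbf{Step 1: the homogeneous optimum.} The eigenvalue $\lambda_{L,1}=0$ gives the identically null eigenvalue, which lies in $\mathcal Z$, and its partner $-b$. For $\lambda_{L,k}>0$, $\mathrm{Re}\,\lambda_{k,+}$ equals $-b/2$ when $b\le 2\sqrt{\lambda_{L,k}}$ and then increases toward $0$ for larger $b$. Hence
\begin{equation*}
\Lambda_{\max}(b\bm 1)=\max\big(-b,\ \max_{k\ge 2}\mathrm{Re}\,\lambda_{k,+}(b)\big).
\end{equation*}
The slowest mode, $k=2$ with $\lambda_{L,2}$ the smallest nonzero Laplacian eigenvalue, dominates once $b>2\sqrt{\lambda_{L,2}}$. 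For smaller $b$, the terms $-b/2$ and $-b$ dominate, and $-b/2$ is the larger. It follows that $b^*_{\rm hom}=2\sqrt{\lambda_{L,2}}$ (critical damping of the slowest nonzero mode). At this point $\lambda_{2,\pm}$ coalesce into the double eigenvalue $-\sqrt{\lambda_{L,2}}$, a Jordan block of $J$. Every other mode has real part $-b^*/2$ or $-b^*$, so they are either tied with it or strictly below it.

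\textbf{Step 2: nondifferentiability.} Along the homogeneous direction, $\Lambda_{\max}(b\bm 1)$ has slope $-1/2$ from the left. From the right it has slope $+\infty$, because of the square-root branching $\sqrt{b^2-4\lambda_{L,2}}$. One-sided derivatives of the restriction to a line already fail to agree, so $\Lambda_{\max}$ cannot be (Fréchet) differentiable at $\bm b^*_{\rm hom}$. To be careful, I would phrase the argument through the local Puiseux expansion. A defective double eigenvalue perturbs like $\lambda\approx\lambda_0+c\,\epsilon^{1/2}$ with $c\neq0$ generically, and for the homogeneous direction $c$ is computable directly from the formula above. I would also treat degenerate cases, namely a repeated or zero $\lambda_{L,2}$, or an eigenvalue at $0$ with multiplicity greater than one. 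These only add further ties at $\bm b^*_{\rm hom}$ and do not remove the kink, and repeated nonzero eigenvalues collide simultaneously, giving the same square-root behavior.

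\textbf{Main obstacle.} The delicate point is Step 1's claim that the minimum over $b$ is attained exactly at the collision $b=2\sqrt{\lambda_{L,2}}$, rather than at an interior point where the dominant branch is smooth. This requires comparing all branches, including the $-b$ branch of the zero mode, and confirming that no branch with $\lambda_{L,k}>\lambda_{L,2}$ overtakes the leading one before the collision. It uses the assumption that all $\lambda_{L,k}$ are real and nonnegative, so that for $b<2\sqrt{\lambda_{L,k}}$ every branch has real part exactly $-b/2$. With complex Laplacian eigenvalues this piecewise picture breaks down, which is consistent with the directed-circulant counterexamples discussed later in the paper.
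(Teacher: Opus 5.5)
Your proposal is correct and follows essentially the paper's route. Like the paper, you reduce the problem on the homogeneous line to the roots of $\lambda^2+b\lambda+\lambda_{L,k}=0$, use the monotonicity of $\mathrm{Re}\sqrt{b^2-4\lambda_{L,k}}$ in the real $\lambda_{L,k}$ to show that the smallest nonzero Laplacian eigenvalue dominates, and place $b^*_{\rm hom}$ at its critical-damping kink. Your explicit one-sided slopes ($-1/2$ versus $+\infty$), and the remark that nondifferentiability along $\bm 1$ rules out differentiability on $\R^N$, make the last step slightly more precise than the paper's version; the Puiseux digression and the general claim that ties force a kink are unnecessary (the latter is false in general, but you never actually use it).
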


\begin{proof}
For a homogeneous damping vector $\bm b = b \bm 1_N$, it follows from Eq.~\eqref{eq:quad_eig_prob} that the eigenvalue spectrum of $J$ is given by
\begin{equation}
    \lambda_{j,\pm}^2+\lambda_{j,\pm} b+\lambda_{L,j}=0,
\label{eq.sm.homogeneouseigproblem}
\end{equation}
\noindent
where $\lambda_{L,j}$ is the $j$th eigenvalue of $L$, and thus each eigenvalue $\lambda_{j,\pm}$ of $J$ depends exclusively on a single eigenvalue $\lambda_{L,j}$ of $L$. 
Note that every identically null eigenvalue of $J$ is associated with a zero eigenvalue of $L$ (which is excluded by definition in the computation of the largest Lyapunov exponent $\Lambda_{\mathrm{max}}$), but the converse is not necessarily true.

Now, consider the constrained optimization problem
\begin{equation}
\begin{aligned}
        \min_{\bm b} \,\,\,& \Lambda_{\mathrm{max}} \big(J(\bm b;A)\big), \\
        \text{s.t.} \,\,\,& \bm b = b\bm 1_N,
\end{aligned}
\label{eq.constrainedoptimization}
\end{equation}
whose optimum is denoted as the homogeneous optimal point $\bm b^*_{\mathrm{hom}} = b^*_{\mathrm{hom}}\bm 1_N$. From Eq.~\eqref{eq.sm.homogeneouseigproblem}, this optimization is equivalent to
\begin{equation*}
\min_{b\geq0} \,\,\, \max_{j\notin\mathcal Z} \,\,\, \frac{1}{2} \left(-b+\mathrm{Re}\left(\sqrt{b^2-4\lambda_{L,j}}\right)\right),
\end{equation*}
where the set of identically null eigenvalues of the Jacobian matrix is denoted by $\mathcal Z = \{j   : \, \lambda_{j,\pm}(\bm b)=0,\,\,\,\forall \bm b\in \R^{N}_{\geq 0}\}$. Applying the transformation $b = 2\hat{b}$, we have the following optimization problem:
\begin{equation}
\min_{\hat{b}\geq0} \,\,\, \max_{j\notin\mathcal Z} \,\,\, -\hat{b}+\mathrm{Re}\left(\sqrt{\hat{b}^2-\lambda_{L,j}}\right).
\label{eq.sm.equivalentconstrainedopt}
\end{equation}

\noindent
By assumption, we have that $\lambda_{L,j}\in\R$ and, hence, it follows that
\begin{equation*}
\mathrm{Re}\left(\sqrt{\hat{b}^2-\lambda_{L,j}}\right) = 
\begin{cases}
    \sqrt{\hat{b}^2-\lambda_{L,j}}, \quad & \text{if} \,\, \hat{b}^2>\lambda_{j}, \\
    0, \quad & \text{otherwise}. 
\end{cases}
\end{equation*}
\noindent
Consequently, $\mathrm{Re}\left(\sqrt{\hat{b}^2-\lambda_{L,j}}\right)\leq \mathrm{Re}\left(\sqrt{\hat{b}^2-\lambda_{L,k}}\right)$ if $\lambda_{L,j}\geq\lambda_{L,k}$. The optimization problem \eqref{eq.constrainedoptimization} thus reduces to
\begin{equation}
\label{eq.sm.equivproblem}
\min_{\hat{b}} \,\,\, -\hat{b}+\mathrm{Re}\left(\sqrt{\hat{b}^2-\lambda_{L,1}}\right).
\end{equation}
\noindent
The solution is $\hat{b}^*=\sqrt{\lambda_{L,1}}$, which corresponds to a nondifferentiable point of the objective function in Eq.~\eqref{eq.sm.equivproblem}.
\end{proof}

\begin{remark}
These results show that, when $L$ is Hermitian, the optimal damping configuration lies at a nondifferentiable point of the stability landscape, posing a challenge for standard analysis and optimization tools due to the problem's nonsmoothness. This limitation motivates the study of non-Hermitian structures, such as non-Hermitian circulant matrices, for which the stability function may be differentiable, thereby facilitating analytical tractability and the use of gradient-based optimization approaches.
\end{remark}

\smallskip
\subsection{Conditions for disorder-promoted stability in circulant networks}
\label{sec.sm.circulant}

We now focus on the important class of network structures described by \textit{circulant matrices}, for which disorder-promoted stability can be established analytically. Within this class, we identify conditions under which the optimal damping parameter $\bm b^*$ always lies at a heterogeneous point. 
We first prove this property for networks of second-order dynamical systems characterized by the Jacobian matrix \eqref{eq.jacobian} (Lemma~\ref{lem.circ_diff_ass} and Theorem~\ref{thm.local_CSB}), and then extend these results to more general second-order systems (Corollary~\ref{cor.dpsgeneraljac}). Before we proceed, we define the class of circulant matrices.

\begin{defin}
A circulant matrix $C\in\C^{N\times N}$ has the form
\begin{equation}
    C = \begin{bmatrix}
        c_1 & c_{N} & \ldots & c_2 \\
        c_2 & c_1 & \ldots & c_3 \\
        \vdots & \vdots & \ddots & \vdots \\
        c_{N-1} & c_{N-2} & \ddots & c_{N} \\
        c_{N} & c_{N-1} & \ldots & c_1
    \end{bmatrix}.
\end{equation}

\noindent
Thus, a circulant matrix can be fully generated by one vector $\bm c = (c_1,\ldots,c_N)$, which appears as the first column of $C$, whereas the remaining columns of $C$ are each cyclic permutations of the vector $\bm c$. Additionally, if $c_{i}=c_{(N+2-i) \, {\mathrm{mod}} \, N}$, for $2\leq i\leq N$, then the circulant matrix is Hermitian. 
\label{def.sm.circulant}
\end{defin}

Fig.~\ref{fig.circulant}A illustrates circulant network structures and the theoretical results that apply to each subclass. For Hermitian circulant matrices, Proposition~\ref{prop.nondifferentiablehom} has established that the homogeneous optimal point always lies at a nondifferentiable point, making it analytically challenging to determine whether disorder can be beneficial for stability. In what follows, Lemma~\ref{lem.circ_diff_ass} shows that, for a subclass of non-Hermitian circulant matrices (represented by directed first-neighbor ring networks), the homogeneous optimal point is always differentiable. Consequently, it follows from Theorem~\ref{thm.local_CSB} that disorder-promoted stability is \textit{always} guaranteed in this subclass of circulant networks. For broader classes of circulant matrices, the largest Lyapunov exponent may or may not be differentiable; however, whenever differentiability holds,  Theorem~\ref{thm.local_CSB} ensures that disorder promotes stability.

\begin{figure}
    \centering
	\includegraphics[width=\textwidth]{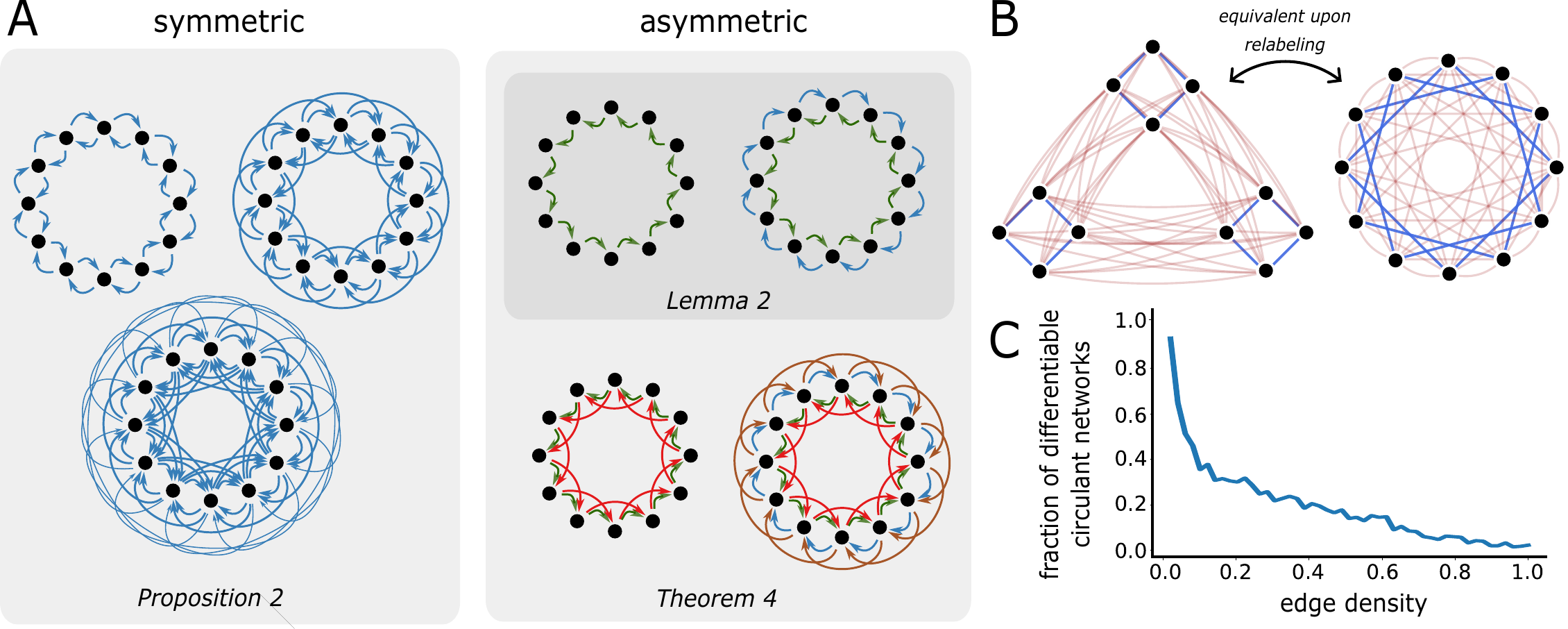}
    \caption{\textbf{Classes of circulant networks with distinct edge densities and symmetries.}  
    (\textbf{A}) Circulant networks comprise a broad class of networks with an automorphism with full symmetry, including both (directed and undirected) ring and all-to-all networks. The diagram illustrates the subclasses of circulant networks for which our theoretical results apply, with edge colors indicating subsets of edges sharing the same weights. 
    (\textbf{B}) Symmetrically coupled clusters of the same size, as often considered in the study of chimera states \cite{abrams2004chimera,zhang2020critical} and cluster synchronization \cite{montbrio2004synchronization}, are also a circulant network. This example shows that a ring of connected clusters (with distinct intra- and inter-cluster coupling strengths) can be identified as a circulant network upon relabeling of nodes.
    (\textbf{C}) Fraction of directed circulant networks with differentiable stability landscapes at the homogeneous optimum, shown as a function of the proportion of nonzero entries in the generating vector $\bm c$. 
    Note that this proportion directly determines the edge density of a circulant network. The results are averaged over $100$ independent network realizations of size $N=100$.
    }
\label{fig.circulant}
\end{figure}

\begin{lemma}
Consider the Jacobian matrix~\eqref{eq.jacobian}. Let $L$ be a non-Hermitian circulant matrix generated by the vector 
\begin{equation*}
\bm c = [1,-\delta,0,...,0,-1+\delta],\,\,\,\,\,\,0\leq \delta\leq 1,\,\,\,\,\,\,\delta\neq \frac{1}{2}.
\label{gen_vector_diff}
\end{equation*} 
\noindent
Then, $\Lambda_{\mathrm{max}}(\bm b)$ is differentiable at $\bm b_{\mathrm{hom}}^{*} = b_{\mathrm{hom}}^{*} \bm 1_N$.
\label{lem.circ_diff_ass}
\end{lemma}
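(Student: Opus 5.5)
The plan is to reduce everything to the spectrum of the circulant Laplacian, which is explicit, and then use analytic perturbation theory for simple eigenvalues.

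\textbf{Laplacian spectrum and homogeneous branches.} Since $L$ is circulant with generating vector $\bm c$, it is diagonalized by the (unitary) Fourier matrix. With $\theta_k = 2\pi (k-1)/N$, its eigenvalues are
\begin{equation*}
\lambda_{L,k} = 1-\delta e^{\mathrm{i}\theta_k}-(1-\delta)e^{-\mathrm{i}\theta_k} = (1-\cos\theta_k) + \mathrm{i}(1-2\delta)\sin\theta_k .
\end{equation*}
Thus $\lambda_{L,1}=0$, which gives the identically null eigenvalue in $\mathcal Z$. Because $\delta\neq\frac12$, every $\lambda_{L,k}$ with $\sin\theta_k\neq 0$ is nonreal, and $\lambda_{L,N+2-k}=\lambda_{L,k}^{\rm c}$.

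For homogeneous damping $\bm b=b\bm 1_N$, Eq.~\eqref{eq.sm.homogeneouseigproblem} gives the branches
\begin{equation*}
\lambda_{k,\pm}(b)=\tfrac12\bigl(-b\pm\sqrt{b^2-4\lambda_{L,k}}\bigr).
\end{equation*}
The first key observation is that $b^2-4\lambda_{L,k}\neq 0$ for every real $b$ whenever $\lambda_{L,k}\notin\R$. So no branch point is ever reached, the two branches $\lambda_{k,\pm}$ are distinct, and each is a real-analytic function of $b$.

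This is exactly what fails in Proposition~\ref{prop.nondifferentiablehom}. There, the optimum sits at the collision $\hat b^2=\lambda_{L,1}$. The real modes with $\sin\theta_k=0$ (namely $k=N/2+1$ when $N$ is even, with $\lambda_L=2$) must be handled separately. I would show they are never dominant near the optimum, because their real part $\tfrac12(-b+\mathrm{Re}\sqrt{b^2-8})$ is bounded away from the slow-mode value.

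\textbf{The 1D homogeneous problem.} Write $f_k(b)=\mathrm{Re}\,\lambda_{k,+}(b)$. Using $\mathrm{Re}\sqrt z=\sqrt{(|z|+\mathrm{Re}\,z)/2}$, each $f_k$ is smooth. It satisfies $f_k(0)=\mathrm{Re}\sqrt{-\lambda_{L,k}}/2$, behaves like $-\mathrm{Re}(\lambda_{L,k})/b\to 0^-$ as $b\to\infty$, and for fixed $b$ is negative, with $f_k<0$ given $\mathrm{Re}\lambda_{L,k}>0$.

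I would then show the following:
\begin{enumerate}
\item[(a)] Each $f_k$ has a unique interior minimizer.
\item[(b)] In a neighborhood of the minimizer of the slowest pair $k\in\{2,N\}$, the quantity $\max_{k\notin\mathcal Z} f_k$ is attained only by this conjugate pair, with a strict gap.
\end{enumerate}
Granting (b), $b^*_{\rm hom}$ is the smooth stationary point of $f_2$, not a crossing point of two branches.

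The natural route to (b) is a monotonicity statement. Parametrize $\lambda_L = a+\mathrm{i}s$ with $a=1-\cos\theta$ and $s=(1-2\delta)\sin\theta$. The map $(a,s)\mapsto \mathrm{Re}\sqrt{b^2-4(a+\mathrm{i}s)}$ is decreasing in $a$ and increasing in $|s|$. Along the ellipse traced by the $\lambda_{L,k}$, the goal is to show that the net effect makes $f_k(b)$ strictly smaller than $f_2(b)$ for $k\neq 2,N$, at least for $b$ near the minimizer of $f_2$.

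\textbf{Main obstacle.} I expect this comparison to be the hardest step, because increasing $\theta$ increases both $a$ and (initially) $|s|$, and these two effects compete. I would first try to compare $f_k$ directly at the stationary point of $f_2$, using the identity $f_2'(b^*)=0$ to eliminate $b^*$. If that fails, I would bound $f_k(b)\le -\,\mathrm{Re}(\lambda_{L,k})/(b+\mathrm{Re}\sqrt{\cdot})$ and compare these bounds across $k$.

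\textbf{Lifting to $\R^N$.} At $\bm b^*_{\rm hom}$, the eigenvalues of $J$ attaining $\Lambda_{\max}$ are the conjugate pair $\lambda_{2,+},\lambda_{N,+}$. Each is algebraically simple: it differs from $\lambda_{2,-}$ because there is no branch point, and it differs from all other $\lambda_{k,\pm}$ by the strict gap. By analytic perturbation theory for simple eigenvalues of the affine family $J(\bm b;A)$, this pair extends to analytic functions of $\bm b$ in a neighborhood of $\bm b^*_{\rm hom}$, with gradient given by Eq.~\eqref{eq:grad_b_res}. Continuity of the spectrum preserves the strict gap in that neighborhood. Hence $\Lambda_{\max}(\bm b)=\mathrm{Re}\,\lambda_{2,+}(\bm b)$ there, and it is differentiable at $\bm b^*_{\rm hom}$.
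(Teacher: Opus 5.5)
Your outer structure is sound. The explicit ring spectrum is correct. So is the observation that $b^2-4\lambda_{L,k}\neq0$ for nonreal $\lambda_{L,k}$, which is exactly the last line of the paper's proof. The lift to $\R^N$ via analytic perturbation of a simple, strictly dominant conjugate pair is also sound, and is more careful than the paper, which only argues along the homogeneous line.

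The gap is your step (b), which is the whole content of the lemma: you state it, call it the main obstacle, and list strategies without executing any. Without it, nothing prevents $b^*_{\rm hom}$ from sitting at a crossing $f_2=f_k$ of two smooth branches. That would be a kink of $\Lambda_{\max}$ even though no branch point occurs, so absence of branch points alone does not give differentiability.

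Your description of $f_k$ is also wrong in one place: $f_k$ is not negative for all $b$, since $f_k(0)=\tfrac12\mathrm{Re}\sqrt{-\lambda_{L,k}}>0$. Moreover, near $b=0$ the $|s|$ effect wins: for small $\theta$, $\mathrm{Re}\sqrt{-\lambda_L(\theta)}\approx\sqrt{|1-2\delta|\theta/2}$ grows with $\theta$. So for large $N$ the slowest pair is not dominant at small damping. The dominance you need can therefore only hold on the stable region, and that restriction is what resolves the competition between $a$ and $|s|$ that you worry about.

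The monotonicity route you call natural does work there, and it is the core of the paper's proof. Write $\hat b=b/2$, treat $\theta$ as continuous, and set $h(\hat b,\theta)=-\hat b+\mathrm{Re}\sqrt{\hat b^2-a(\theta)-\mathrm{i}s(\theta)}$, so that $f_k(b)=h(\hat b,\theta_k)$. Squaring twice shows $h\le0$ iff $\hat b^2\ge s^2/(4a)=\tfrac12(1-2\delta)^2\cos^2(\theta/2)$, a threshold that decreases in $\theta$. The sign of $\partial_\theta h$ equals that of
\begin{equation*}
s\,s'-a'\Bigl[(\hat b^2-a)+\sqrt{(\hat b^2-a)^2+s^2}\,\Bigr].
\end{equation*}
This is nonincreasing in $\hat b$, because $a'=\sin\theta\ge0$ and the bracket increases with $\hat b$. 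At the threshold the bracket equals $s^2/(2a)$, and the expression has the sign of $2a\,s\,s'-a'\,s^2=-(1-2\delta)^2\sin\theta\,(1-\cos\theta)^2<0$ on $(0,\pi)$. Hence, whenever $f_2(b)\le0$, $h(\hat b,\cdot)$ is strictly decreasing on $[2\pi/N,\pi]$.

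So the slowest pair strictly dominates every other upper branch. The real mode at $\theta=\pi$ is included, and modes with $\theta>\pi$ follow by conjugation. The lower branches and the eigenvalue $-b$ lie at or below $-b/2<f_2(b)$. Since $\Lambda_{\max}(b^*_{\rm hom})<0$, this yields the strict gap on a neighborhood of $b^*_{\rm hom}$ and makes your uniqueness claim (a) unnecessary. Your lift then completes the proof.
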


\begin{proof}
Since the damping vector is homogeneous (i.e., $\bm b = b\bm 1_N$), it follows from Eq.~\eqref{eq.sm.homogeneouseigproblem} that each eigenvalue of $J$ depends on a single, unique eigenvalue of $L$. From Ref. \cite{gray2006toeplitz}, the eigenvalues of $L$ are
\begin{equation*}
\begin{aligned}
\lambda_{L,j} &= 1-\delta\omega^{-j}+(\delta-1)\omega^{-(N-1)j} \\
&=1-\omega^{-(N-1)j}-\delta(\omega^{-j}-\omega^{-(N-1)j})
\\
&=2\sin^2\left(\frac{\pi j}{N}\right)+\mathrm{i}(2\delta-1)\sin\left(\frac{2\pi j}{N}\right),
\end{aligned}
\end{equation*}
\noindent
for all $0\leq j\leq N-1$, where $\omega=\exp\big(\frac{2\pi \mathrm{i}}{N}\big)$. Here, we denote $\lambda_{0}=\lambda_{L,0}=0$ as the only identically null eigenvalue. Due to the conjugate symmetry $\lambda_{L,j}=\lambda_{L,N-j}^{\mathrm{c}}$ for circulant matrices, and the fact that $\mathrm{Re}(\sqrt{z})=\mathrm{Re}(\sqrt{z^{\mathrm{c}}})$, we only have to consider up to $N/2$ eigenvalues if $N$ is even, and up to $(N-1)/2$ eigenvalues if $N$ is odd. We introduce the shorthand notation
\begin{equation*}
\Upsilon(z)\coloneq 2\sin^2\left(\frac{z}{2}\right)+\mathrm{i}(2\delta-1)\sin(z),\,\,\,\,\,\,\,\,\,\,\,\,\,\, \varrho(z)\coloneq \mathrm{Re}\big(\Upsilon(z)\big),\,\,\,\,\,\,\,\,\,\, \iota(z)\coloneq \Im\big(\Upsilon(z)\big),\,\,\,\,\,\,\,\,\,\,\,z\coloneq \frac{2\pi j}{N}, \,\,\,\,\,\,0< z\leq \pi.
\end{equation*}


Recall that the optimization problems \eqref{eq.constrainedoptimization} and \eqref{eq.sm.equivalentconstrainedopt} are equivalent for homogeneous vectors $\bm b = b\bm 1_N$. Thus, each Lyapunov exponent is given by
\begin{equation}
\Lambda_{+}(\hat{b},j) = \left(-\hat{b}+\mathrm{Re}\left(\sqrt{\hat{b}^2-\lambda_{L,j}})\right)\right),
\label{eq.lyapexp+}
\end{equation}
\noindent
for $0\leq j\leq\frac{N}{2}$, where we exclude the negative branch since it is not related to the largest Lyapunov exponent (which is our quantity of interest).
Let us define the continuous function $h(\hat{b},z)\coloneq-\hat{b}+\mathrm{Re}\left(\sqrt{\hat{b}^2-\Upsilon(z)}\right)$, which is a generalization of Eq.~\eqref{eq.lyapexp+}.
Now, we will prove that there exist $\hat b_2\geq\hat b_1$ such that: 1) the inequality $h(\hat{b},z)\leq 0$ (resp. $h(\hat{b},z)\geq 0$) holds for all $z$ and $\hat{b}\geq\hat{b}_{2}$ (resp. $\hat{b}\leq\hat{b}_{2}$); and 2) the inequality $\frac{\partial h(\hat{b},z)}{\partial z}< 0$ holds for all $z$ and $\hat{b}\geq \hat{b}_{1}$. These properties imply $\Lambda_{+}(\hat{b},j)> \Lambda_{+}(\hat{b},j+1)$, $\forall j$, in the region of $\hat{b}$ where $\Lambda_{+}(\hat{b},j)\leq 0$.

To prove property 1, we calculate the interval where $h(\hat{b},z)\leq 0$:
\begin{equation*}
\begin{aligned}
    0 &\geq -\hat{b}+\mathrm{Re}\Big(\sqrt{\hat{b}^2-\Upsilon}\Big), \\
    \hat{b} &\geq \frac{1}{\sqrt{2}}\sqrt{\sqrt{(\hat{b}^2-\varrho)^2+\iota^2}+\hat{b}^2-\varrho}, \\
    \hat{b}^2+\varrho &\geq \sqrt{(\hat{b}^2-\varrho)^2+\iota^2}, \\
    (\hat{b}^2+\varrho)^2-(\hat{b}^2-\varrho)^2 &\geq \iota^2, \\
    \hat{b}^2 &\geq \frac{\iota^2}{4\varrho}.
\end{aligned}
\end{equation*}
\noindent
Therefore, $\hat{b}_2 = \frac{\iota}{2\sqrt{\varrho}}$.

To prove property 2, we now calculate the sign of the partial derivative of $h(\hat{b},z)$ with respect to $z$:
\begin{equation*}
\frac{\partial}{\partial z} h(\hat{b},z) = \frac{1}{2\sqrt{2}}\frac{\frac{\frac{\partial \iota}{\partial z}\iota-\frac{\partial \varrho}{\partial z}(\hat{b}^2-\varrho)}{\sqrt{(\hat{b}^2-\varrho)^2+\iota^2}}-\frac{\partial \varrho}{\partial z}}{\sqrt{\sqrt{(\hat{b}^2-\varrho)^2+\iota^2}+\hat{b}^2-\varrho}}\stackrel{S}{=}\frac{\frac{\partial \iota}{\partial z}\iota-\frac{\partial \varrho}{\partial z}(\hat{b}^2-\varrho)}{\sqrt{(\hat{b}^2-\varrho)^2+\iota^2}}-\frac{\partial \varrho}{\partial z}\stackrel{S}{=}\frac{\partial \iota}{\partial z}\iota-\frac{\partial \varrho}{\partial z}\Big((\hat{b}^2-\varrho)+\sqrt{(\hat{b}^2-\varrho)^2+\iota^2}\,\Big),
\end{equation*}
\noindent
where $\stackrel{S}{=}$ denotes equality in sign. Given that $\frac{\partial \varrho}{\partial z}=\sin(z)\geq 0$ for all $z$, if there exists some $\hat{b}_1$ for which $\pdv{h}{z}< 0$ for all $z$, then it follows that $\pdv{h}{z}$ is also negative for all $\hat{b}>\hat{b}_{1}$ and for all $z$. At $\hat{b}=\hat{b}_2$, we obtain the following relation for all $z$:
\begin{equation*}
\begin{aligned}
    \pdv{}{z} h(\hat{b},z)_{\hat{b}=\hat{b}_{2}} &\stackrel{S}{=} \frac{\frac{\partial \iota}{\partial z}\iota-\frac{\partial \varrho}{\partial z}(\frac{\iota^2}{4\varrho}-\varrho)}{\sqrt{(\frac{\iota^2}{4\varrho}-\varrho)^2+\iota^2}}-\frac{\partial \varrho}{\partial z} \\
    &= \frac{4\frac{\partial \iota}{\partial z}\iota\varrho-\frac{\partial \varrho}{\partial z}(\iota^2-4\varrho^2)}{\sqrt{(\iota^2-4\varrho^2)^2+16\varrho^2\iota^2}}-\frac{\partial \varrho}{\partial z} \\
    &\stackrel{S}{=}4\frac{\partial \iota}{\partial z}\iota\varrho-2\frac{\partial \varrho}{\partial z}\iota^2 \\
    &=2(2\delta-1)^2\sin(z)\big(4\sin^2\left(\frac{z}{2}\right)\cos(z)-\sin^2(z)\big) \\
    &\stackrel{S}{=}4\sin^2\left(\frac{z}{2}\right)\cos(z)-\sin^2(z) \\
    &=-4\sin^4\left(\frac{z}{2}\right)< 0.
\end{aligned}
\end{equation*}
\noindent
Thus, we have $\hat{b}_2\geq \hat{b}_1$. As a result, for all $\hat{b}\geq\hat{b}_2$, it follows that $\Lambda_{+}(\hat{b},j)$ decreases as $j$ increases. As a result, in this region, we can express the Lyapunov exponents of the positive branch of the Jacobian matrix as follows:
\begin{equation}
\Lambda_{+}(\hat{b},j) = -\hat{b}+\mathrm{Re}\Big(\sqrt{\hat{b}^2-\lambda_{L,j}}\,\Big),
\label{eq.sm.eigJtoeigL}
\end{equation}
for $1\leq j\leq \frac{N}{2}$. Since the Laplacian eigenvalues $\{\lambda_{L,j}\}$ have their real parts sorted in ascending order, the Lyapunov exponents $\{\Lambda_{+}(x,j)\}$ are sorted in descending order when indexed by $j$.

From Eq.~\eqref{eq.sm.eigJtoeigL}, it thus follows that, in a neighborhood around the homogeneous optimal parameter vector $\bm b^{*}_{\mathrm{hom}} = b^{*}_{\mathrm{hom}} \bm 1_N$ (which necessarily satisfies $\Lambda_{\mathrm{max}}(\hat{b})<0$), the largest Lyapunov exponent can be expressed as 
\begin{equation}
\Lambda_{\mathrm{max}}(\hat{b})=-\hat{b}+\mathrm{Re}\Big(\sqrt{\hat{b}^2-\lambda_{L,1}}\Big)=-\hat{b}+\frac{1}{\sqrt{2}}\sqrt{\sqrt{(\hat{b}^2-\varrho)^2+\iota^2}+\hat{b}^2-\varrho},
\label{eq.sm.largestlyapexpcirculant} 
\end{equation}
\noindent
where $\varrho= 2\sin^2\big(\frac{\pi}{N}\big)$ and $\iota=(2\delta-1)\sin\big(\frac{2\pi}{N}\big)$.
Since $\iota\neq 0$ for $\delta\neq \frac{1}{2}$, the term inside the square root in Eq.~\eqref{eq.sm.largestlyapexpcirculant} is never negative, and thus $\Lambda_{\rm max}(\hat b)$ is differentiable in this neighborhood.
\end{proof}

Fig.~\ref{fig.circulant}C shows the fraction of directed circulation networks that exhibit differentiable homogeneous optima, plotted as a function of the edge density. Notably, sparser circulant networks are more likely to be differentiable than denser ones. This is in agreement with Lemma~\ref{lem.circ_diff_ass}, which states that every directed (first-neighbor) ring network is differentiable at the homogeneous optima. To compute the results in Fig.~\ref{fig.circulant}C, the circulant networks were generated by setting $fN$ values of the generating vector $\bm c$ to be nonzero (with random weight $c_{i} \sim \mathcal U[0,1]$), where $f\in[0,1]$ controls the edge density. We then assessed the differentiability of the networks by computing the maximum value of the numerical second derivative; if this value exceeds $2.5\times 10^5$, a network is classified as nondifferentiable. The threshold value was determined by analyzing the subclass of circulant matrices considered in Lemma~\ref{lem.circ_diff_ass}, which are guaranteed to be differentiable. We selected a value that correctly labeled more than $90\%$ of these networks as differentiable under the given numerical precision. Since this threshold is conservative, Fig.~\ref{fig.circulant}C provides a lower bound on the fraction of differentiable networks.

Having characterized the differentiability properties of systems with circulant networks in the vicinity of the homogeneous optimal point, we are now prepared to prove the following theorem on the existence of disorder-promoted stability:

\begin{theor}
Consider the optimization problem \eqref{eq.optimization} for the Jacobian matrix~\eqref{eq.jacobian}, and let $L$ be a non-Hermitian circulant matrix whose real parts of all eigenvalues are nonnegative. 
If $\Lambda_{\mathrm{max}}(\bm b)$ is differentiable at the homogeneous optimal point $\bm b_{\mathrm{hom}}^* =  b^{*}_{\mathrm{hom}}\bm 1_N$, then $\bm b_{\rm hom}^{*}$ is not a local optimum.
\label{thm.local_CSB}
\end{theor}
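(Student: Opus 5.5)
Since $\Lambda_{\max}$ is differentiable at $\bm b_{\rm hom}^*$ and that point minimizes $\Lambda_{\max}$ along the homogeneous line $\bm b = b\bm 1_N$, the directional derivative of $\Lambda_{\max}$ along $\bm 1_N$ vanishes there. I will show that the full gradient at $\bm b_{\rm hom}^*$ is a nonzero multiple of $\bm 1_N$; combined with the vanishing derivative along $\bm 1_N$, this gives a contradiction. I will use the first-order eigenvalue sensitivity derived in the proof of Proposition~\ref{prop.theor_hermit_diff}, namely Eq.~\eqref{eq:grad_b_res},
\begin{equation*}
\frac{\partial\lambda_{\max}}{\partial b_i} = -\lambda_{\max}\, v^{\rm c}_{\max,i}\,u_{\max,i},
\end{equation*}
with the normalization \eqref{eq:normalization}. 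That derivation never used Hermiticity of $L$, only simplicity of $\lambda_{\max}$ and analyticity, and these are guaranteed by differentiability at $\bm b_{\rm hom}^*$.

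\textbf{Step 1 (eigenvectors).} At $\bm b=b\bm 1_N$, the quadratic problem $(\lambda^2 I + \lambda b I + L)\bm u = 0$ shares eigenvectors with $L$. Because $L$ is circulant, its right eigenvectors are the Fourier vectors $\bm u = N^{-1/2}(1,\omega^j,\ldots,\omega^{(N-1)j})$ and its left eigenvectors are their conjugate-transpose partners, so every entry has modulus $N^{-1/2}$. For the leading branch $\lambda_{\max}=\lambda_{1,+}$ this gives $v^{\rm c}_{\max,i}u_{\max,i} = \kappa/N$, a constant independent of $i$. The normalization \eqref{eq:normalization} fixes $\kappa\,(2\lambda_{\max}+b)=1$, so $\kappa = 1/(2\lambda_{\max}+b)$.

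\textbf{Step 2 (gradient).} Substituting into Eq.~\eqref{eq:grad_b_res} gives
\begin{equation*}
\nabla_{\bm b}\lambda_{\max} = -\frac{\lambda_{\max}}{N(2\lambda_{\max}+b)}\,\bm 1_N .
\end{equation*}
This is a multiple of $\bm 1_N$. Hence $\nabla_{\bm b}\Lambda_{\max} = \mathrm{Re}(\cdot)\,\bm 1_N$, and by Step 0 this real coefficient must be zero.

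\textbf{Step 3 (contradiction).} Set $c=-\lambda/(2\lambda+b)$ with $\lambda = -\hat b + \sqrt{\hat b^2-\lambda_{L,1}}$ and $b=2\hat b$, so that $2\lambda + b = 2\sqrt{\hat b^2-\lambda_{L,1}} =: 2s$. I need to show $\mathrm{Re}(c)\neq 0$. Along the homogeneous line one has $d\lambda/db = -\lambda/(2\lambda+b)$, which is $N$ times the gradient coefficient. Its real part is $\tfrac12\bigl(-1+\mathrm{Re}(\hat b/s)\bigr)$, so stationarity requires $\mathrm{Re}(\hat b/s)=1$. But $s^2 = \hat b^2 - \lambda_{L,1}$ with $\mathrm{Im}\,\lambda_{L,1}\neq 0$, since $L$ is non-Hermitian and differentiability excludes the real case, which is nondifferentiable by the argument of Proposition~\ref{prop.nondifferentiablehom}. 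A direct check shows that $\mathrm{Re}(\hat b/s)=1$ forces $\mathrm{Re}(\lambda_{L,1}) \le 0$ or equality only in degenerate cases. This is the step I expect to be the main obstacle: the sign and case bookkeeping for the complex square root must be done carefully.

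\textbf{Possible shortcut.} The conclusion may come out more cleanly by observing that a stationary point along the line is already an interior critical point there. By Step 2 the gradient is then zero in all directions, so the point is a differentiable critical point. From there I would examine the second-order term: with heterogeneous perturbations $\delta\bm b\perp\bm 1_N$, mode mixing with the conjugate branch should give negative curvature, making $\bm b^*_{\rm hom}$ a saddle. To do this I would compute the Hessian by second-order perturbation of the quadratic eigenproblem (Ref.~\cite{adhikari2001eigenderivative}), summing over the other circulant modes $k$. Translation invariance makes the off-diagonal couplings $\sum_i \delta b_i \,\omega^{i(k-1)}$ behave as Fourier coefficients of $\delta\bm b$, which should give a trace-zero quadratic form on $\bm 1_N^{\perp}$. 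Being trace-zero and not identically zero, this form has a negative direction, which proves that $\bm b^*_{\rm hom}$ is not a local optimum.
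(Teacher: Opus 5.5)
\textbf{The main line of Steps 0--3 cannot work.} It looks for a first-order contradiction, and there is none.

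Your Step 2 is correct: at $\bm b=b\bm 1_N$ the gradient of $\lambda_{\max}$ is a multiple $c\,\bm 1_N$. But differentiability at $\bm b^*_{\rm hom}$ (with $b^*_{\rm hom}>0$) together with optimality along the homogeneous line forces $\mathrm{Re}\,c=0$. Hence $\nabla_{\bm b}\Lambda_{\max}(\bm b^*_{\rm hom})=0$. The homogeneous optimum is a genuine critical point, so no first-order argument can show it is not a local optimum. Indeed, if Steps 0--3 succeeded they would prove that $\Lambda_{\max}$ is \emph{never} differentiable at $\bm b^*_{\rm hom}$, contradicting Lemma~\ref{lem.circ_diff_ass}.

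Concretely, the ``direct check'' in Step 3 is false. The condition $\mathrm{Re}(\hat b/s)=1$ is compatible with $\mathrm{Re}\,\lambda_{L,1}>0$: for example, $\hat b=1$ and $\lambda_{L,1}=1-\mathrm{i}/2$ give $s=(1+\mathrm{i})/2$ and $\hat b/s=1-\mathrm{i}$. Moreover, this condition \emph{must} hold at a differentiable interior minimizer along the line. The paper uses exactly this stationarity as an input, in the form $\lambda_j/\sqrt{b^2-4\lambda_{L,j}}=\mathrm{i}\gamma$ with $\gamma$ real. This is what lets a positive factor $\gamma^2$ be pulled out of the second-order terms.

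\textbf{Your ``possible shortcut'' is the right strategy, and it is the paper's.} You treat $\bm b^*_{\rm hom}$ as a critical point and show that the circulant Hessian has a negative direction in $\bm 1_N^\perp$, using second-order perturbation of the quadratic eigenproblem in the Fourier basis. However, the decisive step is only asserted, and the reason you give for it is wrong.

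Nothing makes the Hessian trace-zero on $\bm 1_N^\perp$. You may be conflating this with $\sum_k\lambda_{k,\alpha\beta}=0$ summed over \emph{all eigenvalues}, which holds because $\mathrm{tr}\,J$ is affine in $\bm b$. That is a statement about eigenvalues, not about perturbation directions.

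If you carry out your Fourier computation, the Hessian eigenvalue on the Fourier direction $m\neq 0$ is, up to a positive factor, $-\gamma^2\sum_{k-j\equiv\pm m}\mathrm{Re}\,\Xi_k$. Here $\Xi_k=\sqrt{b^2-4\lambda_{L,j}}/(\lambda_{L,k}-\lambda_{L,j})$ and Fourier indices are taken mod $N$. Summing over $m$ gives a multiple of $-\sum_{k\neq j}\mathrm{Re}\,\Xi_k$, which is not zero; by the paper's analysis it is in fact negative.

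What is missing is that sign analysis:
\begin{itemize}
\item You need $\mathrm{Re}\,\Xi_k>0$ for every non-null mode. The paper derives this from $\varrho_j\ge\varrho_k$, i.e., from $j$ being the leading mode.
\item The null Laplacian mode breaks this ordering. Its $\mathrm{Re}\sqrt{b^2}=b$ exceeds $\varrho_j$ at the stationary point, and it has $\mathrm{Re}\,\Xi<0$.
\item The paper controls the null mode by pairing it with the conjugate partner $\lambda_{L,j}^{\rm c}$, showing $\mathrm{Re}(\Xi_{\rm null}+\Xi_{\bar j})>0$. It then chooses $(\alpha,\beta)$ so that the null mode's $\sin^2$ weight is dominated by the partner's.
\end{itemize}
Only with these steps does one obtain $H_{\alpha\beta}>H_{\alpha\alpha}$, and hence the descent direction $\bm e_\alpha-\bm e_\beta$. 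Without them, your shortcut states the conclusion rather than proving it.
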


\begin{proof}
We recall that the global optimum of the constrained optimization problem~\eqref{eq.constrainedoptimization} is denoted as the homogeneous optimal point $\bm b_{\mathrm{hom}}^* = b^*_{\mathrm{hom}}\bm 1_N$ and define the index $j\coloneq\operatorname{arg \, max}_{k\neq \mathcal Z} \mathrm{Re}\big(\lambda_k(b^{*}_{\mathrm{hom}} \bm 1_N)\big)$ that relates a specific eigenvalue of $J$ to the largest Lyapunov exponent under the homogeneous optimal damping. At $\bm b = \bm b_{\mathrm{hom}}^*$, the Hessian matrix of the Lyapunov exponent is given by
\begin{equation*}
H_{\alpha \beta}=\frac{\partial^2\Lambda_{\mathrm{max}}}{\partial b_{\alpha}\partial b_{\beta}}=\frac{\partial^2\mathrm{Re}\left(\lambda_{j}\right)}{\partial b_{\alpha}\partial b_{\beta}}=\mathrm{Re}\left(\frac{\partial^2\lambda_{j}}{\partial b_{\alpha}\partial b_{\beta}}\right),
\end{equation*}
\noindent
where $1\leq \alpha,\beta\leq N$. Given the system's symmetries, $H$ is a circulant and Hermitian matrix. If at least one of the eigenvalues of $H$ is negative, then $\bm b^*_{\mathrm{hom}}$ cannot be a local optimum of the unconstrained problem \eqref{eq.optimization}. To prove this, it suffices to show that 
\begin{equation}
\exists (\alpha,\beta) \text{ s.t. }H_{\alpha\beta}>\max(H_{\alpha\alpha},H_{\beta\beta}),
\label{eq.sm.eigHcondition}
\end{equation}
\noindent
since we then have $\bm v^{\dagger}H\bm v<0$ for $v_{l}\coloneq\delta_{l,\alpha}-\delta_{l,\beta}$,, implying the existence of a descent direction. Once again, due to the symmetries of the system,  the diagonal elements of the matrix are all equal to each other, and hence condition \eqref{eq.sm.eigHcondition} simplifies to
\begin{equation}
\exists (\alpha,\beta) \text{ s.t. }H_{\alpha\beta}>H_{\beta\beta}.
\label{eq.sm.existencecondition}
\end{equation}
\noindent

For simplicity, we will consider the case where we have an identically null eigenvalue related to the eigenvector $\bm 1_N$, but the following derivations are still valid when $\mathcal Z=\emptyset$.
We adopt the following shorthand notation:
\begin{equation*}
C_{,\alpha}\coloneq\frac{\partial C}{\partial b_{\alpha}},\,\,\,\,\,\,\,C_{,\alpha\beta}\coloneq\frac{\partial C}{\partial b_{\alpha}\partial b_{\beta}},
\end{equation*}
\noindent
where $C$ can be a matrix, a vector, or a scalar, and the derivatives are only taken with respect to $b_{\alpha}$ and $b_{\beta}$. Using Eq.~\eqref{eq:quad_eig_prob} for $\bm b = b\bm 1_N$, we can express the eigenvalues of $J$ in the following order:
\begin{equation}
\lambda_{k}=
\begin{cases}
\frac{1}{2}\big(-b+\sqrt{b^2-4\lambda_{L,k}}\,\big) &\text{   for   } 1\leq k\leq N,
\\
\frac{1}{2}\big(-b-\sqrt{b^2-4\lambda_{L,k-N}}\,\big)&\text{  for  } N+1\leq k\leq 2N.\nonumber
\end{cases}
\end{equation}
\noindent
For each pair of eigenvalues, the respective right eigenvectors $\bm u$ of the quadratic problem \eqref{eq:quad_eig_prob} are the eigenvectors of the circulant matrix $L$. For all circulant matrices of size $N$, the eigenvectors can always be expressed as orthogonal $N$ discrete Fourier modes \cite{gray2006toeplitz}:
\begin{equation*}
(\bm u_{\alpha})_{\beta}\propto\exp(\frac{2\pi \mathrm{i}(\alpha-1)(\beta-1)}{N}).
\end{equation*}
\noindent
Due to their orthogonality, there is only one other eigenvector $\bm u_{j'}$ such that $\bm u_{j}^{\dagger} \bm u_{j'}\neq 0$, and its index is $j'=j+N$. From now on, similarly to Proposition~\ref{prop.theor_hermit_diff}, we apply the normalization
\begin{equation*}
\bm v_{j}^{\dagger}\bm u_{j}=\frac{1}{2\lambda_{j}+b},\,\,\,\forall j,
\end{equation*}
\noindent
where $\bm u_j^\dagger\bm u_j=1$ and $\bm v_j$ denotes the left eigenvectors.
As a result, we have $\bm v_{j}^{\dagger}=\frac{\bm u_{j}^{\dagger}}{2\lambda_{j}+b},\,\,\,\forall j$.

We can then write the first derivatives of the eigenvector $\bm u_{j}$ as \cite{adhikari2001eigenderivative}
\begin{equation*}
\bm u_{j,\alpha}=\sum_{k=1}^{2N}a_{jk}^{(\alpha)}\bm u_{k},
\end{equation*}
\noindent
where the coefficients $a_{jk}^{(\alpha)}$ are given by
\begin{equation*}
\begin{aligned}
&a_{jk}^{(\alpha)}=-\frac{\bm v_{k}^{\dagger}\big(\lambda_{j}^{2}(I_{N})_{,\alpha}+\lambda_{j}B_{,\alpha}+L_{,\alpha}\big)\bm u_{j}}{\lambda_{j}-\lambda_{k}}=\frac{\lambda_{j}\bm v_{k}^{\dagger}R_{\alpha}\bm u_{j}}{-\lambda_{j}+\lambda_{k}}=\frac{\lambda_{j}(v_{k}^{\mathrm{c}})_{\alpha}(u_{j})_{\alpha}}{\lambda_{k}-\lambda_{j}}=\frac{\lambda_{j}\exp\big(\frac{2\pi \mathrm{i}(\alpha-1)(j-k)}{N}\big)}{N(2\lambda_{k}+b)(\lambda_{k}-\lambda_{j})},\,\,\,\forall j\neq k,
\\&a_{jj}^{(\alpha)}=-\bm u_{j}^{\dagger}\sum_{\substack{k=1 \\ k\neq j}}^{2N}a_{jk}^{(\alpha)}\bm u_{k}=-a_{jj'}=\frac{\lambda_{j}}{N(2\lambda_{j'}+b)(\lambda_{j}-\lambda_{j'})}.\nonumber
\end{aligned}
\end{equation*}
\noindent
Here, the diagonal elements of $a^{(\alpha)}$ are determined by our normalization choice, and we used the fact that 
\begin{equation*}
\lambda_{j,\alpha}=\lambda_{j,\beta}=-\lambda_{j} \bm v_{j}^{\dagger}R_{\alpha}\bm u_{j}=-\lambda_{j} \bm v_{j}^{\dagger}R_{\beta}\bm u_{j}=-\lambda_{j}(v_{j}^{\mathrm{c}})_{\alpha}(u_{j})_{\alpha}=-\frac{\lambda_{j}}{N(2\lambda_{j}+b)}.
\end{equation*}

The second derivatives of the eigenvalues can be calculated with the general formula \cite{adhikari2001eigenderivative}
\begin{equation}
\begin{aligned}
\lambda_{j,\alpha\beta} =& -\frac{1}{\bm v_{j}^{\dagger}G_{j}\bm u_{j}}\Big(\underbrace{\bm v_{j}^{\dagger}\big(F_{j,\alpha\beta}+\lambda_{j,\alpha}\Tilde{G}_{j,\beta}+\lambda_{j,\beta}\tilde{G}_{j,\alpha}\big)\bm u_{j}+2\lambda_{j,\alpha}\lambda_{j,\beta}\bm v_{j}^{\dagger}\bm u_{j}+\lambda_{j,\alpha}\bm v_{j}^{\dagger}G_{j}\bm u_{j,\beta}+\bm v_{j}^{\dagger}\lambda_{j,\beta}G_{j}\bm u_{j,\alpha}}_{\text{independent of the choice of } (\alpha,\beta)}\\ 
&+\underbrace{\bm v_{j}^{\dagger}\tilde{F}_{j,\alpha}\bm u_{j,\beta}+\bm v_{j}^{\dagger}\tilde{F}_{j,\beta}\bm u_{j,\alpha}}_{\text{dependent on the choice of } (\alpha,\beta)}\Big),
\end{aligned}
\label{expr_sec_derivative}
\end{equation}
\noindent
where here we obtain $F_{j,\alpha\beta}\coloneq\lambda_{j}B_{,\alpha\beta}+L_{,\alpha\beta}=0$, $\Tilde{G}_{j,\alpha}\coloneq B_{,\alpha}=R_{\alpha}$, $\Tilde{F}_{j,\alpha}\coloneq\lambda_{j}^2(I_{N})_{,\alpha}+\lambda_{j}B_{,\alpha}+L_{,\alpha}=\lambda_{j}R_{\alpha}$, $G_{j,\beta}\coloneq\Tilde{G}_{j,\beta}+2\lambda_{j,\beta}I_{N}=R_{\beta}+2\lambda_{j,\beta}I_N$, $G_{j}\coloneq 2\lambda_{j}I_{N}+B=(2\lambda_{j}+b)I_N$, and $\bm v_{j}^{\dagger}G_{j}\bm u_{j}=I_N$.

In what follows, we consider a single pair $(\alpha,\beta)$ and focus only on the terms that are dependent on a choice of $(\alpha,\beta)$ in Eq.~\eqref{expr_sec_derivative}. 
For the diagonal elements of the Hessian matrix, we have that these terms are given by 
\begin{equation}
\begin{aligned}
(v_{j}^{\mathrm{c}})_{\alpha}(u_{j,\alpha})_{\alpha} &= \sum_{k=1}^{2N}a_{jk}^{(\alpha)}(v_{j}^{\mathrm{c}})_{\alpha}(u_{k})_{\alpha} \\
&=a_{jj}^{(\alpha)}(v_{j}^{\mathrm{c}})_{\alpha}(u_{j})_{\alpha}+a_{jj'}^{(\alpha)}(v_{j}^{\mathrm{c}})_{\alpha}(u_{j'})_{\alpha}+\sum_{k=1,k\neq j,k\neq j'}^{2N}\frac{\lambda_{j}\exp(\frac{2\pi \mathrm{i}(\alpha-1)(j-k)}{N})}{N(2\lambda_{k}+b)(\lambda_{k}-\lambda_{j})}(v_{j}^{\mathrm{c}})_{\alpha}(u_{k})_{\alpha} 
\\
&=\sum_{\substack{k=1 \\ k\neq j,j'}}^{2N}\frac{\lambda_{j}}{N^2(\lambda_{k}-\lambda_{j})(2\lambda_{j}+b)(2\lambda_{k}+b)}\\
&\eqqcolon\sum_{\substack{k=1 \\ k\neq j,j'}}^{2N}\frac{A_{k}}{N^2}.
\end{aligned}
\label{eq.sm.diag_term}
\end{equation}
\noindent
For the nondiagonal elements, they are given by
\begin{equation}
\begin{aligned}
(v_{j}^{\mathrm{c}})_{\beta}(u_{j,\alpha})_{\beta}+(v_{j}^{\mathrm{c}})_{\alpha}(u_{j,\beta})_{\alpha} &= \sum_{k=1}^{2N}\big(a_{jk}^{(\alpha)}(v_{j}^{\mathrm{c}})_{\beta}(u_{k})_{\beta}+a_{jk}^{(\beta)}(v_{j}^{\mathrm{c}})_{\alpha}(u_{k})_{\alpha}\big) \\
&=\sum_{\substack{k=1 \\ k\neq j,j'}}^{2N}\big(a_{jk}^{(\alpha)}(v_{j}^{\mathrm{c}})_{\beta}( u_{k})_{\beta}+a_{jk}^{(\beta)}(v_{j}^{\mathrm{c}})_{\alpha}(u_{k})_{\alpha}\big)  \\
&=\sum_{\substack{k=1 \\ k\neq j,j'}}^{2N}\frac{\lambda_{j}\Big(\exp\big(\frac{2\pi\mathrm{i}}{N}(j-k)(\alpha-\beta)\big)+\exp\big(\frac{2\pi\mathrm{i}}{N}(-j+k)(\alpha-\beta)\big)\Big)}{N^2(\lambda_{k}-\lambda_{j})(2\lambda_{j}+b)(2\lambda_{k}+b)} 
\\
&=\sum_{\substack{k=1 \\ k\neq j,j'}}^{2N}\frac{2A_{k}\cos\left(\frac{2\pi}{N}(j-k)(\alpha-\beta)\right)}{N^2}.
\end{aligned}
\label{eq.sm.non_diag_term}
\end{equation}
\noindent
Because the $(\alpha,\beta)$-independent terms in Eq.~\eqref{expr_sec_derivative} cancel, the difference between the nondiagonal and diagonal elements of $H$ is proportional to the difference between Eqs.~\eqref{eq.sm.non_diag_term} and \eqref{eq.sm.diag_term}:
\begin{equation}
\begin{aligned}
\lambda_{j,\alpha\beta}-\lambda_{j,\alpha\alpha} &= \lambda_{j}\big(2(v_{j}^{\mathrm{c}})_{\alpha}(u_{j,\alpha})_{\alpha}-(v_{j}^{\mathrm{c}})_{\beta}(u_{j,\alpha})_{\beta}-(v_{j}^{\mathrm{c}})_{\alpha}(u_{j,\beta})_{\alpha}\big)
\\
&=\frac{2}{N^2}\sum_{\substack{k=1 \\ k\neq j,j'}}^{2N}\lambda_{j}A_{k}\Big(1-\cos\Big(\frac{2\pi}{N}(j-k)(\alpha-\beta)\Big)\Big) \\
&=\frac{4}{N^2}\sum_{\substack{k=1 \\ k\neq j}}^{N}\lambda_{j}(A_{k}+A_{k+N})\sin^2\Big(\frac{\pi}{N}(j-k)(\alpha-\beta)\Big). 
\end{aligned}
\label{inter_sum}
\end{equation}

Proving that the real part of Eq.~\eqref{inter_sum} is positive will thus imply that the Hessian matrix satisfies $\bm v^\dagger H \bm v < 0$ for some $\bm v$.
To determine the sign of Eq.~\eqref{inter_sum}, it suffices to analyze the sign of $\lambda_{j}(A_{k}+A_{k+N})$, given by
\begin{equation}
\lambda_{j}(A_{k}+A_{k+N})=\frac{\lambda_{j}^2}{(-\lambda_{L,k}+\lambda_{L,j})\sqrt{b^2-4\lambda_{L,j}}}.\nonumber
\end{equation}
\noindent
At the homogeneous optimal point, it follows from Proposition~\ref{prop.theor_hermit_diff} that $\frac{\lambda_{j}}{\sqrt{b^2-4\lambda_{L,j}}}=\mathrm{i}\gamma$, where $\gamma$ is real and nonzero. Therefore,
\begin{equation}
\lambda_{j}(A_{k}+A_{k+N})=\gamma^2\frac{\sqrt{b^2-4\lambda_{L,j}}}{\lambda_{L,k}-\lambda_{L,j}}.
\label{eq.sm.AkAkN}
\end{equation}
\noindent
Substituting Eq.~\eqref{eq.sm.AkAkN} into Eq.~\eqref{inter_sum} yields
\begin{equation}
\label{final_sum}
\lambda_{j,\alpha\beta} -\lambda_{j,\alpha\alpha} = \frac{4\gamma^2}{N^2}\sum_{\substack{k=1 \\ k\neq j}}^{N}\underbrace{\frac{\varrho_{j}+\mathrm{i}\iota_{j}}{\lambda_{L,k}-\lambda_{L,j}}}_{\Xi_k}\sin^2\left(\frac{\pi}{N}(k-j)(\alpha-\beta)\right),
\end{equation}
\noindent
where $\varrho_{j}\coloneq\mathrm{Re}\big(\sqrt{b^2-4\lambda_{L,j}}\,\big)$ and $\iota_{j}\coloneq\mathrm{Im}\big(\sqrt{b^2-4\lambda_{L,j}}\,\big)$. Based on the ordering of eigenvalues, it follows that $\varrho_{j}\geq \varrho_{k}$, $\forall k>1$. To analyze the sign of the real part of Eq.~\eqref{final_sum}, we consider the following coefficient:
\begin{equation}
4\Xi_k^{-1} = \frac{4\lambda_{L,k}-4\lambda_{L,j}}{\varrho_{j}+\mathrm{i}\iota_{j}}=\frac{-b^2+4\lambda_{L,k}+b^2-4\lambda_{L,j}}{\varrho_{j}+\mathrm{i}\iota_{j}}=\varrho_{j}+\mathrm{i}\iota_{j}-\frac{b^2-4\lambda_{L,k}}{\varrho_{j}+\mathrm{i}\iota_{j}}=\varrho_{j}+\mathrm{i}\iota_{j}-\frac{(\varrho_{k}+\mathrm{i}\iota_{k})^2}{\varrho_{j}+\mathrm{i}\iota_{j}},
\label{sign_analysis}
\end{equation}
\noindent
and we note that
\begin{equation}
\begin{aligned}
\mathrm{Re}\Big (\varrho_{j}+\mathrm{i}\iota_{j}-\frac{(\varrho_{k}+\mathrm{i}\iota_{k})^2}{\varrho_{j}+\mathrm{i}\iota_{j}}\Big) &= \varrho_{j}-\frac{\varrho_{j}(\varrho_{k}^2-\iota_{k}^2)+2\varrho_{k}\iota_{k}\iota_{j}}{\varrho_{j}^2+\iota_{j}^2} \\
&\stackrel{S}{=}\varrho_{j}^3+\varrho_{j}\iota_{j}^2-\varrho_{k}^2\varrho_{j}+\iota_{k}^2\varrho_{j}-2\varrho_{k}\iota_{k}\iota_{j}\\
&\geq \varrho_{j}\iota_{j}^2+\iota_{k}^2\varrho_{j}-2\varrho_{k}\iota_{k}\iota_{j} \\
&=\varrho_{j}(\iota_{k}-\iota_{j})^2-2\iota_{k}\iota_{j}(\varrho_{k}-\varrho_{j}) \\
&\geq 0.
\end{aligned}
\label{eq.sm.realpartofexp}
\end{equation}
\noindent
The last inequality in Eq.~\eqref{eq.sm.realpartofexp} is immediate when $\text{sign}(\iota_{k})=\text{sign}(\iota_{j})$, and it follows from the following when $\text{sign}(\iota_{k})\neq\text{sign}(\iota_{j})$:
\begin{equation*}
\varrho_{j}(\iota_{k}-\iota_{j})^2-2\iota_{k}\iota_{j}(\varrho_{k}-\varrho_{j}) \geq -2\varrho_{j}\iota_{k}\iota_{j}-2\iota_{k}\iota_{j}(\varrho_{k}-\varrho_{j})\geq 0.
\end{equation*}
\noindent
Therefore, we proved that each coefficient $\Xi_k$ in Eq.~\eqref{final_sum} has a positive real part, except for $k=1$. To conclude the proof, we now verify that $\mathrm{Re}(\Xi_1+\Xi_{k})>0$ for at least one $k\neq 1$. Since one can always choose a pair $(\alpha,\beta)$ such that $\sin^2\big(\frac{\pi x}{N}(\alpha-\beta)\big)\leq \sin^2\big(\frac{\pi y}{N}(\alpha-\beta)\big)$ for any pair $1\leq x,y\leq N-1$, we can thus choose a pair $(\alpha,\beta)$ such that $\mathrm{Re}(\lambda_{j,\alpha\beta} - \lambda_{j,\alpha\alpha})>0$. Choosing $k$ such that $\lambda_{L,k}^{\mathrm{c}}=\lambda_{L,j}$ yields 
\begin{equation}
\begin{aligned}
\Xi_1+\Xi_{k}=\frac{\varrho_{j}+\mathrm{i}\iota_{j}}{\lambda_{L,j}^{\mathrm{c}}-\lambda_{L,j}}-\frac{\varrho_{j}+\mathrm{i}\iota_{j}}{\lambda_{L,j}}&=\frac{\varrho_{j}+\mathrm{i}\iota_{j}}{\lambda_{L,j}^{\mathrm{c}}-\lambda_{L,j}}\left(2-\frac{\lambda_{L,j}^{\mathrm{c}}}{\lambda_{L,j}}\right)
\\&=\underbrace{\frac{\varrho_{j}+\mathrm{i}\iota_{j}}{\lambda_{L,j}^{\mathrm{c}}-\lambda_{L,j}}}_{\chi_1}\underbrace{\left(\frac{\mathrm{Re}(\lambda_{L,j})^2+3\mathrm{Im}(\lambda_{L,j})^2}{\mathrm{Re}(\lambda_{L,j})^2+\mathrm{Im}(\lambda_{L,j})^2}+\frac{2\mathrm{i}\mathrm{Re}(\lambda_{L,j})\mathrm{Im}(\lambda_{L,j})}{\mathrm{Re}(\lambda_{L,j})^2+\mathrm{Im}(\lambda_{L,j})^2}\right)}_{\chi_2}.
\end{aligned}
\label{final_sum_circ}
\end{equation} 
\noindent
Then, $\mathrm{Re}(\chi_1)\mathrm{Re}(\chi_2)-\Im(\chi_1)\Im(\chi_2)>0$. This inequality implies that Eq.~\eqref{final_sum_circ} has a positive real part, which leads to Eq.~\eqref{eq.sm.existencecondition}.
\end{proof}

\begin{corol}
Consider the optimization problem \eqref{eq.optimization} for the Jacobian matrices
\begin{equation*}J_{1}=\begin{bmatrix}0&I_N\\-L&-KB\end{bmatrix}\,\,\,\,\,\,\,\text{and}\,\,\,\,\,\,\,\,\,J_{2}=\begin{bmatrix}K&I_N\\-L&-B\end{bmatrix},
\end{equation*}

\noindent
where $L$ is a non-Hermitian circulant matrix and $K$ is an all-to-all Laplacian matrix.
If $\Lambda_{\mathrm{max}}(\bm b)$ is differentiable at the homogeneous optimal point $\bm b_{\mathrm{hom}}^* =  b^{*}_{\mathrm{hom}}\bm 1_N$, then $\bm b_{\rm hom}^{*}$ is not a local optimum.
\label{cor.dpsgeneraljac}
\end{corol}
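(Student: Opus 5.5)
Our plan is to reduce both Jacobians to the situation of Theorem~\ref{thm.local_CSB} and reuse its Hessian argument essentially verbatim. The key structural fact is that $K$, being an all-to-all Laplacian, is itself a (Hermitian) circulant matrix, $K=\kappa(N I_N-\bm 1_N\bm 1_N^\transp)$. It therefore commutes with $L$ and is diagonalized by the same discrete Fourier modes $\bm u_k$, with eigenvalue $0$ on $\bm 1_N$ and $\kappa N$ on every other mode. Consequently, at a homogeneous damping $\bm b=b\bm 1_N$ both $J_1$ and $J_2$ decouple into $2\times 2$ blocks indexed by the Fourier modes. For $J_1$ the per-mode quadratic is $\lambda^2+\lambda b\mu_k+\lambda_{L,k}=0$, where $\mu_k\in\{0,\kappa N\}$ is the eigenvalue of $K$. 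For $J_2$ it is $(\lambda-\mu_k)(\lambda+b)+\lambda_{L,k}=0$. In both cases the right eigenvectors of the quadratic eigenvalue problem are again the Fourier modes, and the left eigenvectors are proportional to them.

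\textbf{Step 1.} We first write the quadratic eigenvalue problem $Q(\lambda;\bm b)\bm u=0$ for each case. For $J_1$ we take $Q=\lambda^2 I+\lambda KB+L$. For $J_2$ we eliminate the first block, which gives $Q=\lambda^2 I+\lambda(B-K)+L-KB$. In each case the derivative in $b_\alpha$ is a rank-one or sparse matrix built from $R_\alpha$ and $K$: it is $\lambda KR_\alpha$ for $J_1$ and $\lambda R_\alpha-KR_\alpha$ for $J_2$. Because $K\bm u_k=\mu_k\bm u_k$ for Fourier modes and $\bm v_k^\dagger K=\mu_k\bm v_k^\dagger$, every bilinear form $\bm v_k^\dagger(\partial_\alpha Q)\bm u_j$ reduces to a scalar multiple of $(v_k^{\rm c})_\alpha(u_j)_\alpha\propto \exp\big(2\pi\mathrm{i}(\alpha-1)(j-k)/N\big)/N$. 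This is exactly the structure used in the proof of Theorem~\ref{thm.local_CSB}; only the scalar prefactors (the analogues of $\lambda_j$ and $2\lambda_k+b$) are modified by factors involving $\mu_j,\mu_k$.

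\textbf{Step 2.} By circulant invariance, the Hessian $H$ of $\mathrm{Re}\,\lambda_j$ at $\bm b^*_{\rm hom}$ is again circulant, so its diagonal entries are all equal. Differentiability, assumed in the statement, makes $\lambda_j$ analytic near $\bm b^*_{\rm hom}$. As before, it then suffices to show that $\mathrm{Re}(H_{\alpha\beta}-H_{\alpha\alpha})>0$ for some pair $(\alpha,\beta)$. We would then use the second-derivative formula of Ref.~\cite{adhikari2001eigenderivative}. The $(\alpha,\beta)$-independent terms cancel in this difference, leaving a sum of the form $\sum_{k\neq j}c_k\sin^2\big(\pi(k-j)(\alpha-\beta)/N\big)$. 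Here the coefficients $c_k$ are obtained from the analogues of $A_k+A_{k+N}$ in Eq.~\eqref{inter_sum}.

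\textbf{Step 3, the main obstacle.} The remaining task is to sign $\mathrm{Re}\,c_k$. We would again use the stationarity of the restricted homogeneous problem: since $\bm b^*_{\rm hom}$ minimizes $\mathrm{Re}\,\lambda_j(b\bm 1_N)$ and is a differentiable point, $\mathrm{Re}\,\partial_b\lambda_j=0$. This is the analogue of $\lambda_j/\sqrt{b^2-4\lambda_{L,j}}=\mathrm{i}\gamma$, and it converts $c_k$ into an expression like $\Xi_k$ in Eq.~\eqref{final_sum}. The delicate point is that the discriminants now read $b^2\mu_k^2-4\lambda_{L,k}$ for $J_1$ and $(b+\mu_k)^2-4\lambda_{L,k}$ for $J_2$, with possibly different $\mu_k$ for $k=1$ versus $k\ne 1$. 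To handle this, we would first treat the modes with $\mu_k=\kappa N$, where the shift is uniform and the algebra of Eq.~\eqref{eq.sm.realpartofexp} carries over after substituting $b\to \kappa N b$ (respectively $\lambda\to\lambda-\kappa N$). Only the identically null mode $\bm 1_N$ carries $\mu=0$, and it is excluded through $\mathcal Z$. Finally, we would pair the dominant mode $j$ with its conjugate partner, exactly as in Eq.~\eqref{final_sum_circ}, to obtain a strictly positive real part. Choosing $(\alpha,\beta)$ to weight that term then yields a negative direction of $H$ along $\bm e_\alpha-\bm e_\beta$, so $\bm b^*_{\rm hom}$ is not a local optimum.
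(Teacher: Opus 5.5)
Your route is the paper's: its proof only observes that $K$ and $L$ share the Fourier eigenbasis, writes the per-mode eigenvalues, and asserts that the Hessian computation of Theorem~\ref{thm.local_CSB} carries over.

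\textbf{The gap is your disposal of the $\bm 1_N$ mode.} This is exactly the ``delicate point'' you flag, since it is the only mode with $\mu=0$.
\begin{itemize}
\item Membership in $\mathcal Z$ removes an eigenvalue from the objective $\Lambda_{\max}$. It does not remove the mode from the second-derivative expansion of $\lambda_j$, which runs over all other eigenpairs.
\item In Theorem~\ref{thm.local_CSB}, this mode ($\lambda_{L,1}=0$) is precisely the source of $\Xi_1$. That is the only coefficient violating $\varrho_j\geq\varrho_k$, and the only one whose real part can be negative. Eq.~\eqref{final_sum_circ} is the pairing of $\Xi_1$ with the conjugate partner of $j$. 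So ``exclude the $\mu=0$ mode, then pair exactly as in Eq.~\eqref{final_sum_circ}'' is self-contradictory.
\item For $J_2$ the exclusion is false outright. The $\bm 1_N$ block has eigenvalues $0$ and $-b$, and $-b\notin\mathcal Z$. Also, the correct quadratic problem is $Q_2(\lambda)=(\lambda I_N+B)(\lambda I_N-K)+L$, whose constant term is $-BK$, not $-KB$. Hence $\partial_{b_\alpha}Q_2=R_\alpha(\lambda I_N-K)$, and the coupling between mode $j$ and the $\bm 1_N$ mode carries the nonzero factor $\lambda_j-\kappa N$.
\item For $J_1$ (with $L\bm 1_N=0$), the $\bm 1_N$ block is a Jordan block, since its quadratic is $\lambda^2=0$. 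The normalization by $2\lambda_k+\mu_k b$ that you want to reuse vanishes there, so the modal expansion is not available for that mode at all.
\end{itemize}

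\textbf{The repair is short, but it is the actual content beyond Theorem~\ref{thm.local_CSB}.}

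For $J_1$, set $Q_1(\lambda)=\lambda^2I_N+\lambda KB+L$. Since $\bm 1_N^{\transp}K=\bm 1_N^{\transp}L=0$, we get $\bm 1_N^{\transp}Q_1(\lambda)=\lambda^2\bm 1_N^{\transp}$ for every $\bm b$. So $Q_1$ is block-triangular with respect to $\mathrm{span}(\bm 1_N)\oplus\bm 1_N^\perp$. The nonzero spectrum is that of $\lambda^2I_N+\lambda\,\kappa N\,P^\perp BP^\perp+L$ restricted to $\bm 1_N^\perp$, with $P^\perp=I_N-\bm 1_N\bm 1_N^{\transp}/N$. The Hessian difference is then Eq.~\eqref{final_sum} with $b\to\kappa N b$, times $(\kappa N)^2$, and with no $\Xi_1$ term. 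Every remaining $\mathrm{Re}\,\Xi_k\geq 0$ by Eq.~\eqref{eq.sm.realpartofexp}. The conjugate partner of $j$ has real part proportional to $\varrho_j\iota_j^2>0$, so no pairing is needed.

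For $J_2$ you must keep the $\bm 1_N$ mode. Let $p_k(\lambda)=(\lambda+b)(\lambda-\mu_k)+\lambda_{L,k}$. Summing both branches of mode $k$, its contribution to $\lambda_{j,\alpha\beta}-\lambda_{j,\alpha\alpha}$ is
\[
-\tfrac{4}{N^2}\sin^2\big(\tfrac{\pi}{N}(k-j)(\alpha-\beta)\big)\,\frac{(\lambda_j-\mu_j)(\lambda_j-\mu_k)}{p_j'(\lambda_j)\,p_k(\lambda_j)}.
\]
\begin{itemize}
\item When $\lambda_{L,k}=0$, the ratio $(\lambda_j-\mu_k)/p_k(\lambda_j)=1/(\lambda_j+b)$ no longer depends on $\mu_k$.
\item Using $p_j(\lambda_j)=0$, the $\bm 1_N$ term becomes $-(\lambda_j-\kappa N)^2/[p_j'(\lambda_j)\lambda_{L,j}]$. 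This is exactly the shifted $\Xi_1$.
\item Eqs.~\eqref{final_sum}--\eqref{final_sum_circ} then go through verbatim with $\lambda\to\lambda-\kappa N$ and $b\to b+\kappa N$.
\end{itemize}

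The paper's one-line reduction is also silent on this mode. Still, your stated reason for discarding it is wrong, and for $J_2$ discarding it drops the one term that could spoil the sign.
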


\begin{proof}
For homogeneous damping $\bm b = b\bm 1_N$, we have that the eigenvalues associated with matrices $J_1$ and $J_2$ are respectively given by
\begin{equation}
\lambda_{j,\pm}^{(1)}=\frac{-b\lambda_{K,j}\pm\sqrt{b^2\lambda_{K,j}^2-4\lambda_{L,j}}}{2}\,\,\,\,\,\text{and}\,\,\,\,\,\,\lambda_{j,\pm}^{(2)}=\frac{-b+\lambda_{K,j}\pm\sqrt{(b+\lambda_{K,j})^2-4\lambda_{L,j}}}{2},\nonumber
\end{equation}

\noindent
where we used the fact that $K$ and $L$ admit a common set of orthogonal eigenvectors since they are both circulant matrices of the same size. Note that each eigenvalue of $J_1$ and $J_2$ is a function of a single, unique eigenvalue of $L$. In what follows, we only consider $J_1$ since the proof for $J_2$ is very similar. The derivatives of the eigenvalues of $J_1$ are
\begin{equation*}
\lambda_{j,\alpha}^{(1)}=-\bm v_{j}^{\dagger}(\lambda_{j}^2I_{N}+\lambda_{j}KB_{,\alpha}+L_{,\alpha})\bm u_{j}=-\lambda_{j}\bm v_{j}^{\dagger}KR_{\alpha}\bm u_{j}=-\lambda_{j}\lambda_{K,j}(v_{j}^{\mathrm{c}})_{\alpha}(u_{j})_{\alpha}.
\end{equation*}

\noindent
Applying the normalizations $\bm v_{j}^{\dagger}=\frac{\bm u_{j}^{\dagger}}{2\lambda_{j}+\lambda_{K,j}b}$ and $\bm u_{j}^{\dagger}\bm u_{j}=1$, we can derive the first and second derivatives of the eigenvectors as in the proof of Theorem~\ref{thm.local_CSB} and show that the Hessian matrix has negative eigenvalues.
\end{proof}

\begin{remark} 
Corollary~\ref{cor.dpsgeneraljac} shows rigorously that disorder-promoted stability extends beyond the specific Jacobian structure \eqref{eq.jacobian} considered in Theorem~\ref{thm.local_CSB}, and that it can also be established for broader classes of coupled dynamical systems. In particular, the Jacobian matrix $J_1$ arises in mechanical systems with velocity-dependent damping (scaled by the spatial stiffness matrix $K$), whereas $J_2$ corresponds to certain classes of phase-amplitude oscillators like the Stuart-Landau model.
\end{remark}

\bigskip
\section{Relationship between mode mixing and disorder-promoted stability}
\label{sec.sm.modemix}

For networks of second-order dynamical systems described by the Jacobian matrix \eqref{eq.jacobian}, we can explicitly characterize how heterogeneity in the damping coefficients $b_i$ affects the eigenstructure of $J$\textemdash in particular, the relationship between its eigenvalues and eigenvectors. The analysis reveals how the largest Lyapunov exponent and, consequently, the system's stability are affected by changes in damping.

When the damping is homogeneous, Eq.~\eqref{eq.sm.homogeneouseigproblem} shows that the Jacobian eigenvalues can be grouped into $N$ pairs, each corresponding uniquely to one eigenvalue of $L$. Accordingly, the Jacobian eigenvectors can also be partitioned into $N$ corresponding pairs, and we refer to each such pair as a \textit{network mode}. When the damping becomes heterogeneous, the correspondence between the eigenvectors of the Jacobian and those of the Laplacian becomes less direct due to \textit{mode mixing}, as each Jacobian eigenvector generally becomes a linear combination of multiple Laplacian eigenvectors. To show this, note that the eigenvector pair $\bm{v}_{J,i,\pm}$ is related to an eigenvector $\bm{u}_{i}$ of the quadratic problem \eqref{eq:quad_eig_prob} according to
\begin{equation}
\bm{v}_{J,i,\pm}=\begin{bmatrix}\bm{u}_{i}\\\lambda_{i,\pm} \bm{u}_{i}\end{bmatrix}, \quad \text{for} \,\, i = 1,\ldots,N.
\label{Jac_quad_rel}
\end{equation}

\noindent
Expanding the Laplacian matrix in its eigenbasis in Eq.~\eqref{quadr_eig} yields

\begin{equation}
\lambda_{i,\pm}=\frac{-\bm u_{i}^\dagger B \bm u_{i}\pm\sqrt{(\bm u_{i}^\dagger B\bm u_{i})^2-4\sum_{j=1}^{N}\lambda_{L,j}\abs{\nu_{j}}^2}}{2\bm u_{i}^\dagger \bm u_{i}},
\end{equation}

\noindent
where $\bm{v}_{L,j}$ is an eigenvector of the Laplacian and $\bm{u}_{i}=\sum_{j=1}^{N}\nu_{j}\bm{v}_{L,j}$. Thus, any deviation from homogeneous damping (i.e., when $B\neq bI_N$) causes each $\bm u_i$ to draw components from multiple Laplacian modes, thereby inducing mode mixing. The following result shows that, for homogeneous damping, the network modes are always orthogonal.

\begin{prop}
{Consider the Jacobian matrix \eqref{eq.jacobian}.}
If the Laplacian matrix $L$ is diagonalizable, then the matrices
\begin{equation*}
J=\begin{bmatrix}
0_{N}&I_{N}\\-L&-B
\end{bmatrix}\,\,\, \text{and} \,\,\,J'=\begin{bmatrix}
0_N&I_{N}\\-D_{L}&-U^{-1}BU\end{bmatrix}
\end{equation*}
\noindent
are similar, where $U$ is the matrix that diagonalizes the Laplacian matrix $L$ and $D_{L} = U^{-1}LU$. 

Furthermore, if  $B = b I_N$ for some scalar $b$ and $L$ has $N$ distinct eigenvalues and orthogonal eigenvectors, then any pair $(i,j)$ of eigenvectors $(\bm v_{J,i},\bm v_{J,j})$ of $J$ satisfies
\begin{equation}
\bm v^{\dagger}_{J,i}\bm v_{J,j}=\delta_{ij}+c\delta_{jj'}, \,\,\,\, \text{for} \,\,\, j'\neq i \,\,\, \text{and} \,\,\, 0<c\leq 1.
\end{equation}
\label{prop.JJ'}
\end{prop}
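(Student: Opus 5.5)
The plan has two parts: prove the similarity directly, then compute eigenvectors of $J$ explicitly in the homogeneous case using the quadratic eigenvalue relation \eqref{eq:quad_eig_prob}.

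For the similarity, take $T = I_2\otimes U = \operatorname{diag}(U,U)$, which is invertible because $L$ is diagonalizable. Block multiplication gives
\[
T^{-1}JT=\begin{bmatrix}U^{-1}0_NU & U^{-1}I_NU\\ -U^{-1}LU & -U^{-1}BU\end{bmatrix}=J'.
\]
This is just Eq.~\eqref{eq.similarJ} made explicit, and it needs no further assumption.

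For the second claim, set $B=bI_N$ and let $\{\bm u_i\}$ be the orthonormal eigenvectors of $L$, with $L\bm u_i=\lambda_{L,i}\bm u_i$.

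\emph{Step 1 (eigenvectors).} Since $B$ commutes with $L$, Eq.~\eqref{eq:quad_eig_prob} reduces to the scalar equation \eqref{eq.sm.homogeneouseigproblem} for each $i$. Hence, by Eq.~\eqref{Jac_quad_rel}, the $2N$ eigenvectors of $J$ are
\[
\bm v_{J,i,\pm}=\kappa_{i,\pm}\bigl(\bm u_i^\transp,\ \lambda_{i,\pm}\bm u_i^\transp\bigr)^\transp ,
\]
with normalization $\kappa_{i,\pm}=(1+|\lambda_{i,\pm}|^2)^{-1/2}$. Distinctness of the $\lambda_{L,i}$ guarantees that this list exhausts the eigenvectors, apart from the degenerate case $\lambda_{i,+}=\lambda_{i,-}$, which only gives $c=1$ below.

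\emph{Step 2 (Gram matrix).} A direct computation gives
\[
\bm v_{J,i,s}^\dagger\bm v_{J,j,t}=\kappa_{i,s}\kappa_{j,t}\,(1+\lambda_{i,s}^{\mathrm c}\lambda_{j,t})\,\bm u_i^\dagger\bm u_j .
\]
Orthogonality of the $\bm u_i$ kills every inner product between different network modes. Within a mode the diagonal entries equal $1$. The only remaining overlap is between the two members of the same pair; call that partner $j'$. Its magnitude is
\[
c=\frac{|1+\lambda_{i,+}^{\mathrm c}\lambda_{i,-}|}{\sqrt{(1+|\lambda_{i,+}|^2)(1+|\lambda_{i,-}|^2)}}\le 1,
\]
where the bound is Cauchy--Schwarz applied to the vectors $(1,\lambda_{i,+})$ and $(1,\lambda_{i,-})$. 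This yields the stated form $\delta_{ij}+c\,\delta_{jj'}$.

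\emph{Step 3 (the obstacle: showing $c>0$).} This amounts to showing $1+\lambda_{i,+}^{\mathrm c}\lambda_{i,-}\neq 0$. I would split by the discriminant $b^2-4\lambda_{L,i}$, using Vieta's relations $\lambda_{i,+}\lambda_{i,-}=\lambda_{L,i}$ and $\lambda_{i,+}+\lambda_{i,-}=-b$.
\begin{itemize}
\item If both roots are real, then $\lambda_{i,+}^{\mathrm c}\lambda_{i,-}=\lambda_{L,i}\ge0$, so $1+\lambda_{L,i}>0$.
\item If the roots form a conjugate pair, the quantity becomes $1+\lambda_{i,-}^2$. This vanishes only when $\lambda_{i,-}=\pm\mathrm i$, which forces $b=0$ and $\lambda_{L,i}=1$. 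That case is excluded by requiring $b>0$, or else flagged as a measure-zero exception.
\item For genuinely complex $\lambda_{L,i}$, vanishing requires $\lambda_{i,-}=-1/\lambda_{i,+}^{\mathrm c}$. Combined with the Vieta relations, this imposes a real-codimension condition on $(b,\lambda_{L,i})$. I would argue it is nongeneric, or note that it cannot occur under the orthogonal-eigenvector (normal $L$) hypothesis when $L$ is Hermitian, since then every $\lambda_{L,i}$ is real and the first two cases cover everything.
\end{itemize}
The delicate point is making this exception handling precise. Everything else is routine block algebra.
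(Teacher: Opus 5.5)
Your proposal is correct and follows the paper's route. The paper conjugates by $I_2\otimes U$, reads off the eigenvectors $(\bm e_j,\lambda_{j,\pm}\bm e_j)$ of $J'$, and transfers the inner products back to $J$ using unitarity of $I_2\otimes U$; this is the same as your direct computation with $(\bm u_i,\lambda_{i,\pm}\bm u_i)$ and the Cauchy--Schwarz bound.

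The paper only asserts $0<c\le 1$ without checking it, so your Step~3 goes further than the paper. The exception you found is a real defect of the statement, not of your argument. A concrete case is the 3-node path, with Laplacian eigenvalues $0,1,3$, at $b=0$, where the pair belonging to $\lambda_{L}=1$ is exactly orthogonal.

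The complex case you left open can be closed exactly rather than argued to be nongeneric. If $1+\lambda_{i,+}^{\mathrm c}\lambda_{i,-}=0$, then $\lambda_{i,-}=-1/\lambda_{i,+}^{\mathrm c}$, hence $|\lambda_{L,i}|=|\lambda_{i,+}\lambda_{i,-}|=1$. Reality of $\lambda_{i,+}+\lambda_{i,-}=-b$ then forces $b=0$. The only alternative is that $\lambda_{i,+}$ is real, which gives $\lambda_{L,i}=-1$, and this is excluded by the nonnegative-real-part assumption on $L$. Hence $c>0$ for every $b>0$, even for normal non-Hermitian $L$, and $c=0$ happens exactly when $b=0$ and $|\lambda_{L,i}|=1$.
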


\begin{proof}
To show the similarity between the matrices $J$ and $J'$, consider the following transformation:
\begin{equation*}
(I_{2}\otimes U)^{-1}J(I_{2}\otimes U)=\begin{bmatrix}
0&I_{N}\\-D_{L}&-U^{-1}BU
\end{bmatrix}=J',
\end{equation*}
\noindent
where $\otimes$ denotes the Kronecker product. Thus, $J$ and $J'$ share the same eigenvalues, and the eigenvectors are related by $(I_{2}\otimes U)^{-1}\bm v_{J,k}=\bm v_{J',k}$ for all $1\leq k\leq 2N$. 

Now, suppose $B = bI_N$ and let $\bm v_{J',k} \coloneq (\bm w_{J',k}, \bm u_{J',k})$ be an eigenvector of $J'$, where $\bm w_{J',k},\bm u_{J',k}\in\R^N$. The eigenvalue equation of $J'$ can be equivalently expressed as
\begin{equation}
\begin{aligned}
    \lambda_{J',k}\bm u_{J',k} &= \bm w_{J',k}, \\
    \lambda_{J',k} \bm u_{J',k} &= -D_{L}\bm w_{J',k}-b\bm u_{J',k}.
\end{aligned}
\label{eq.sm.penfinal_subeq}
\end{equation}
\noindent
The solution for Eq.~\eqref{eq.sm.penfinal_subeq} is $\bm w_{J',k}=\bm e_{j}$, $\bm u_{J',k}=\lambda_{J',j,\pm}\bm e_{j}$, and $\lambda_{J',k}=\lambda_{J',j,\pm}=\frac{1}{2}\big(-b\pm\sqrt{b^2-4\lambda_{L,j}}\,\big)$, where $\bm e_{k}$ is the $k$th vector in the canonical basis and $j=\frac{k+k\!\!\mod 2}{2}$. This solution is unique, as the eigenvalues of $L$ are all distinct.

In conclusion, $(\bm v_{J',k,s_{1}})^{\dagger}\bm v_{J',j,s_{2}}=(1-c)\delta_{jk}\delta_{s_{1}s_{2}}+c\delta_{jk}$, where $0<c\leq 1$ and $s_{1},s_{2}\in\{+,-\}$. Given that the similarity matrix $I_{2}\otimes U$ is a unitary matrix, the orthogonality relations are maintained for the eigenvectors of $J$.
\end{proof}

\begin{remark}
The structure of matrix $J'$ highlights how heterogeneity in the damping vector $\bm b$ induces mode mixing. In the homogeneous case ($B=bI_N$), Proposition~\ref{prop.JJ'} establishes that every eigenvector of $J$ is orthogonal to all but one other eigenvector, showing that the \textit{network modes} are fully decoupled. However, when $\bm b$ is heterogeneous, the submatrix $U^{-1} B U$ introduces an off-diagonal structure in $J'$, leading to coupling between these network modes (i.e., mode mixing).
\end{remark}

To understand the next theorem, it is important to note that nondifferentiable points $\Lambda_{\mathrm{max}}(\bm b,A)$ can arise when multiple eigenvalues have real part equal to the largest Lyapunov exponent. Following Refs.~\cite{lewis2003mathematics,burke2001optimal}, we refer to these eigenvalues as  \textit{active} eigenvalues, and the cardinality of this set as the \textit{activity} of the system (for a given choice of parameters).  


\begin{defin}
    The activity set of a matrix $J$ is defined as $\mathcal A\big(J(\bm b)\big) = \{ \lambda_i(\bm b) \, : \, \mathrm{Re}(\lambda_i) = \Lambda_{\mathrm{max}}\big(J(\bm b)\big) \}$, where $\lambda_i$ is an eigenvalue of $J$.
\end{defin}

\begin{theor}
Consider the Jacobian matrix \eqref{eq.jacobian} evaluated at a global optimum $\bm b^*$ of Eq.~\eqref{eq.optimization}, and suppose that $L$ is Hermitian. Then, at least two eigenvectors of the Jacobian associated with eigenvalues in $\mathcal A$ are parallel.
\label{thm.mode_mixing}
\end{theor}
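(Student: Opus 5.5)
The plan is to argue by contradiction using first-order nonsmooth optimality at $\bm b^*$. Suppose that at the global optimum $\bm b^*$ the eigenvectors of $J$ belonging to the eigenvalues in $\mathcal A\big(J(\bm b^*)\big)$ are pairwise nonparallel. Then every active eigenvalue is semisimple: a defective eigenvalue has a Jordan block whose eigenvectors coalesce, which is precisely the parallel case. Moreover, $\mathcal A$ can be split into finitely many simple eigenvalue branches (conjugate pairs included). Each such branch $\lambda_k(\bm b)$ is analytic near $\bm b^*$, and
\[
\Lambda_{\max}(\bm b)=\max_{k\in\mathcal A}\mathrm{Re}\,\lambda_k(\bm b)
\]
locally. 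By the max-of-smooth-functions rule (Clarke subdifferential, as in Refs.~\cite{lewis2003mathematics,burke2001optimal}), a necessary condition for a local minimum is that $0$ lies in the convex hull of the real gradients $\bm g_k=\mathrm{Re}\,\nabla_{\bm b}\lambda_k$, $k\in\mathcal A$.

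Next we compute these gradients explicitly. Since $L$ is Hermitian, the left and right eigenvectors of the quadratic problem \eqref{eq:quad_eig_prob} satisfy $\bm v_k\propto\bm u_k^{\mathrm c}$. Formula \eqref{eq:grad_b_res} then gives
\[
\partial\lambda_k/\partial b_i=-\lambda_k u_{k,i}^2,
\]
under the normalization $\bm u_k^\transp(2\lambda_k I_N+B)\bm u_k=1$. A conjugate pair $\lambda,\lambda^{\mathrm c}$ contributes identical real gradients, so it counts as a single branch. If $\mathcal A$ consisted of a single simple branch, or a single conjugate pair, then $\Lambda_{\max}$ would be differentiable at $\bm b^*$. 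That contradicts Proposition~\ref{prop.theor_hermit_diff}, which excludes differentiable local minima. Hence $\mathcal A$ contains at least two genuinely distinct branches, and it remains to rule out $\sum_k\mu_k\bm g_k=0$ with $\mu_k\ge 0$ and $\sum_k\mu_k=1$.

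To rule this out, I would look for an explicit common descent direction $\bm d$ with $\bm g_k^\transp\bm d<0$ for every active $k$; by Gordan's alternative, this is equivalent to $0\notin\mathrm{conv}\{\bm g_k\}$.
\begin{itemize}
\item For complex active eigenvalues, the Rayleigh-type formula \eqref{quadr_eig} gives $\mathrm{Re}\,\lambda_k=-\bm u_k^\dagger B\bm u_k/(2\bm u_k^\dagger\bm u_k)$. Any $\bm d\ge 0$ with $\bm d\neq 0$ therefore decreases it to first order, provided $\bm u_k$ has support on $\mathrm{supp}(\bm d)$.
\item For real active eigenvalues (overdamped branches), $\bm u_k$ can be taken real. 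The gradient is then $-\lambda_k u_{k,i}^2$ with $\lambda_k<0$, so it has nonnegative entries, and increasing damping raises these eigenvalues.
\end{itemize}
The candidate direction is therefore $\bm d=\bm w-\theta\bm 1_N$, for a suitable weight $\bm w$ on nodes where the complex active modes carry mass, and I would use the distinctness (non-parallelism) of the active $\bm u_k$ to choose $\bm w$ so that all inner products become negative. For nonparallel real vectors $\bm u_k$, the squared profiles $u_{k,i}^2$ are not all proportional. Orthogonality relations between eigenvectors of distinct eigenvalues of the Hermitian pencil supply the needed independence.

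The main obstacle is this last step: excluding a vanishing convex combination when real and complex active branches coexist with partially overlapping supports. I would first treat the case in which all active eigenvalues are complex, using $\bm d=\bm 1_N$, which is immediate. I would then handle the mixed case through a Lagrange-multiplier identity obtained by summing $\mu_k\bm g_k=0$ against $\bm b^*$ and $\bm 1_N$. The aim is to show that this identity forces some active $\bm u_k$ to coincide up to scaling, which yields the parallel eigenvectors $[\bm u_k^\transp,\lambda_k\bm u_k^\transp]^\transp$ of $J$ via \eqref{Jac_quad_rel}.
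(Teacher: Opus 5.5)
Your proposal has a genuine gap, and it sits exactly where you place the main obstacle. Nothing in the proposal excludes $0\in\mathrm{conv}\{\bm g_k\}$ when $\mathcal A$ contains several distinct simple branches, and the tools you plan to use for that step do not hold as stated.

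\textbf{The complex-branch claim is false.} The identity $\mathrm{Re}\,\lambda_k=-\bm u_k^\dagger B\bm u_k/(2\bm u_k^\dagger\bm u_k)$ does hold along a complex branch. However, this quotient is not stationary in $\bm u_k$, because the left eigenvector is $\bm u_k^{\mathrm c}$, not $\bm u_k$. So you cannot differentiate only through $B$. The true directional derivative is $\mathrm{Re}\big(-\lambda_k\sum_i d_iu_{k,i}^2\big)$ with complex $u_{k,i}^2$, and $\bm d\ge 0$ does not fix its sign. For example, take $N=2$, $A_{12}=A_{21}=1$, $b_1=0$ and $b_2$ large. The underdamped pair then has $\mathrm{Re}\,\lambda\approx-1/(2b_2)$, so increasing $b_2$ destabilizes it even though $\bm u$ has mass on node~2.

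\textbf{Consequences for the rest of your plan.} Even your ``immediate'' all-complex case with $\bm d=\bm 1_N$ is unproven. It requires $\mathrm{Re}(-\lambda_k\,\bm u_k^\transp\bm u_k)<0$ under the normalization $\bm u_k^\transp(2\lambda_kI_N+B)\bm u_k=1$, and you have not shown this. The mixed real/complex case is left open. You also ignore the constraint $\bm b\ge0$: when some $b_i^*=0$, stationarity picks up normal-cone terms, and $\bm w-\theta\bm 1_N$ may be infeasible. Finally, ``pairwise nonparallel eigenvectors'' does not rule out semisimple multiple active eigenvalues. Their branches are not smooth in several parameters, so the reduction to a max of analytic functions is not justified.

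\textbf{Your endgame targets the wrong object.} Proportional $\bm u_k,\bm u_l$ do not make $[\bm u_k^\transp,\lambda_k\bm u_k^\transp]^\transp$ and $[\bm u_l^\transp,\lambda_l\bm u_l^\transp]^\transp$ parallel unless $\lambda_k=\lambda_l$. Under homogeneous damping, for instance, the distinct eigenvalues $\lambda_{j,\pm}$ share $\bm u_j$ but have independent Jacobian eigenvectors. What the statement requires is a coalescence, i.e., a defective active eigenvalue.

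This is how the paper obtains it. Proposition~\ref{prop.theor_hermit_diff} places $\bm b^*$ at a nondifferentiable point. The paper then observes from \eqref{quadr_eig} that a single branch can be nonsmooth only where $(\bm u_i^\dagger B\bm u_i)^2=4\bm u_i^\dagger L\bm u_i$, i.e., $\lambda_{i,+}=\lambda_{i,-}$. The two Jacobian eigenvectors then coincide by \eqref{Jac_quad_rel}. That route is a short discriminant argument, not a Clarke-stationarity analysis. It treats nondifferentiability as coming either from parallel active eigenvectors or from one nonsmooth branch. It does not separately discuss a kink created by distinct simple branches with equal real parts. You are right that this case needs an argument, but your proposal does not provide one.
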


\begin{proof}
It follows from Proposition~\ref{prop.theor_hermit_diff} that the largest Lyapunov exponent $\Lambda_{\mathrm{max}}$ is a nondifferentiable function at $\bm b^*$. There are two conditions in which $\Lambda_{\mathrm{max}}$ is nondifferentiable:
\begin{enumerate}
\item if at least two eigenvectors in $\mathcal A$ are parallel;

\item if a single eigenvalue has a point of nondifferentiability at $\bm b^*$.
\end{enumerate}

\noindent
To prove that condition 2 implies condition 1, we use Eq.~\eqref{quadr_eig} to express
\begin{equation}
\lambda_{i,\pm}=\frac{-\bm u_{i}^\dagger B \bm u_{i}+\sqrt{(\bm u_{i}^\dagger B\bm u_{i})^2-4\bm u_{i}^\dagger L \bm u_{i}}}{2\bm u_{i}^\dagger \bm u_{i}},
\end{equation}

\noindent
where $\bm u_{i}$ satisfies Eq.~\eqref{eq:quad_eig_prob} for $\lambda_{i,\pm}$. The eigenvectors of $J$ can be related to $\bm{u}_i$ via Eq.~\eqref{Jac_quad_rel}. Therefore, an eigenvalue $\lambda_{i,\pm}$ is nondifferentiable only when $(\bm u_{i}^\dagger B\bm u_{i})^2=4\bm u_{i}^\dagger L \bm u_{i}$. Since this condition is equivalent to $\lambda_{i,+}=\lambda_{i,-}$, it follows from Eq.~\eqref{Jac_quad_rel} that $v_{J,i,+}=v_{J,i,-}$. Therefore, at least one pair of eigenvectors is parallel.
\qedhere
\end{proof}

Theorem~\ref{thm.mode_mixing} identifies a sufficient condition for disorder-promoted stability: the emergence of mode mixing, characterized by parallelism among eigenvectors associated with active eigenvalues. Importantly, mode mixing between more than two eigenvectors can only be satisfied at heterogeneous damping configurations, since the network modes are necessarily orthogonal at a homogeneous damping configuration for diagonalizable matrices $L$ (Proposition~\ref{prop.JJ'}). Thus, provided that the eigenvalues of $L$ are distinct, mode mixing becomes increasingly more likely when the system's activity is higher, as higher activity can increase eigenvector interactions under heterogeneous damping. In the opposite extreme, where $L$ has only one distinct eigenvalue, it can be shown that the optimal parameters must be homogeneous (even when $L$ is nondiagonalizable). These insights motivate the following conjecture, which posits that heterogeneity in the damping vector improves stability only when $L$ possesses a sufficiently nondegenerate eigenvalue spectrum.

\begin{conjec}
Let $\mathcal L = \{\lambda_{L,i} \, : \, \lambda_{L,i} \neq 0\}$ be the set of nonzero eigenvalues of $L$, where $\lambda_{L,1}$ is the smallest eigenvalue in this set. It holds that
\begin{equation}
    \frac{\Lambda_{\mathrm{max}}\big(J(\bm b^*;A)\big) - \Lambda_{\mathrm{max}}\big(J(\bm b_{\mathrm{hom}}^*;A)\big)}{\abs{\Lambda_{\mathrm{max}}\big(J(\bm b_{\mathrm{hom}}^*;A)\big)}} < 0
\label{eq.sm.conjecture}
\end{equation}
only if the algebraic multiplicity of $\lambda_{L,1}$ is one. Moreover, the magnitude of the normalized difference in Eq.~\eqref{eq.sm.conjecture} increases with the relative distance between elements in the set $\mathcal L$.
\label{conj.activity}
\end{conjec}

Although we have not been able to prove this conjecture, our numerical analysis on model-generated and empirical networks has supported it thus far. To illustrate Theorem~\ref{thm.mode_mixing} and Conjecture~\ref{conj.activity}, we present the following example. 

\begin{figure}
\centering
\includegraphics[width=0.8\textwidth]{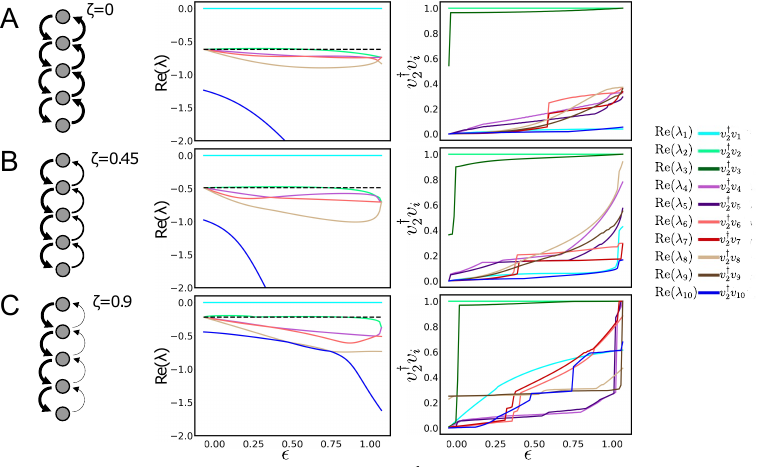}
\caption{\label{fig.dotproduct} \textbf{Mode mixing in chain networks.} (\textbf{A}--\textbf{C}) Real part of eigenvalues (middle panels) and dot product between eigenvectors (right panels) of the Jacobian matrix as a function of $\epsilon$. The control parameter in the Laplacian matrix is $\zeta=0$~(\textbf{A}), $\zeta=0.45$~(\textbf{B}), and $\zeta=0.9$~(\textbf{C}), as illustrated by the corresponding network topologies (left images). Some eigenvalues have identical real parts and are thus superimposed, and all dot products are computed with respect to the same reference eigenvector (i.e., $\bm v_2^\dagger \bm v_i$, for $i=1,\ldots,10$). For reference, the dashed lines in the middle panels show the largest Lyapunov exponent corresponding to the homogeneous optimum.}
\end{figure}

\begin{example}
\label{examp.modemixing}
Let $N = 5$ and consider the Laplacian matrix
\begin{equation*}
L=\begin{bmatrix}1-\zeta&-1+\zeta&0&0&0\\-1&2&-1&0&0\\0&-1&2&-1&0\\0&0&-1&2&-1\\0&0&0&-1+\zeta&1-\zeta\end{bmatrix},
\label{lap_example}
\end{equation*}
\noindent
which represents a chain of coupled oscillators with free boundary conditions. The parameter $\zeta$ allows us to control the relative spacing between the eigenvalues of $L$ while preserving the system symmetries. We quantify this relative spacing as $\Delta \lambda\coloneq\abs{\frac{\langle\delta\lambda\rangle}{\lambda_{L,1}}}$, where $\langle\delta\lambda\rangle\coloneq \frac{1}{|\mathcal L|}\sum_{i=1}^{|\mathcal L|}(\lambda_{L,i+1}-\lambda_{L,i})$. We consider $\zeta\in\{0,0.45,0.9\}$, which leads to the respective relative spacings  $\Delta \lambda\in\{1.89, 2.95, 14.06\}$.

The damping vector is parameterized as
\begin{equation*}
\bm b(\epsilon) \coloneq \bm b_{\text{hom}}^* + \epsilon(\bm b^* -\bm b_{\text{hom}}^*),
\end{equation*}

\noindent
where $\bm b^*$ denotes the global optimal solution of the optimization problem \eqref{eq.optimization}, $\bm b_{\text{hom}}^*$ denotes the homogeneous optimal solution of problem \eqref{eq.constrainedoptimization}, and $0\leq\epsilon\leq 1$.
Fig.~\ref{fig.dotproduct} shows that, as $\epsilon$ increases (i.e., as the system transitions from the homogeneous to the heterogeneous optimum), the dot products between eigenvectors tend to increase, indicating more alignment. At the heterogeneous optimal point ($\epsilon = 1$), a pair of eigenvectors of $J$ coincides, leading to strong mode mixing. As proposed in Conjecture~\ref{conj.activity}, this behavior occurs due to the nondegenerate eigenvalue spectrum of $L$. 

For $\zeta = 0$, the Laplacian matrix is Hermitian and, by Theorem~\ref{thm.mode_mixing}, mode mixing is guaranteed to occur at the global optimum (Fig.~\ref{fig.dotproduct}A). Our numerical experiments suggest this is also the case for non-Hermitian matrices, since they consistently show that the optimal solution lies at a nondifferentiable point. In line with Conjecture~\ref{conj.activity}, Fig.~\ref{fig.dotproduct} also shows that the magnitude of the relative improvement $\frac{\Lambda_{\mathrm{max}}(\bm b^*)-\Lambda_{\mathrm{max}}(\bm b_{\text{hom}}^*)}{\abs{\Lambda_{\mathrm{max}}(\bm b_{\text{hom}}^*)}}$ grows with $\Delta\lambda$, reaching values $\{-0.196, -0.445, -0.706\}$ for $\zeta\in\{0,0.45,0.9\}$, respectively.
\QEDA
\end{example}

\begin{figure}[H]
\centering
\includegraphics[width=0.75\textwidth]{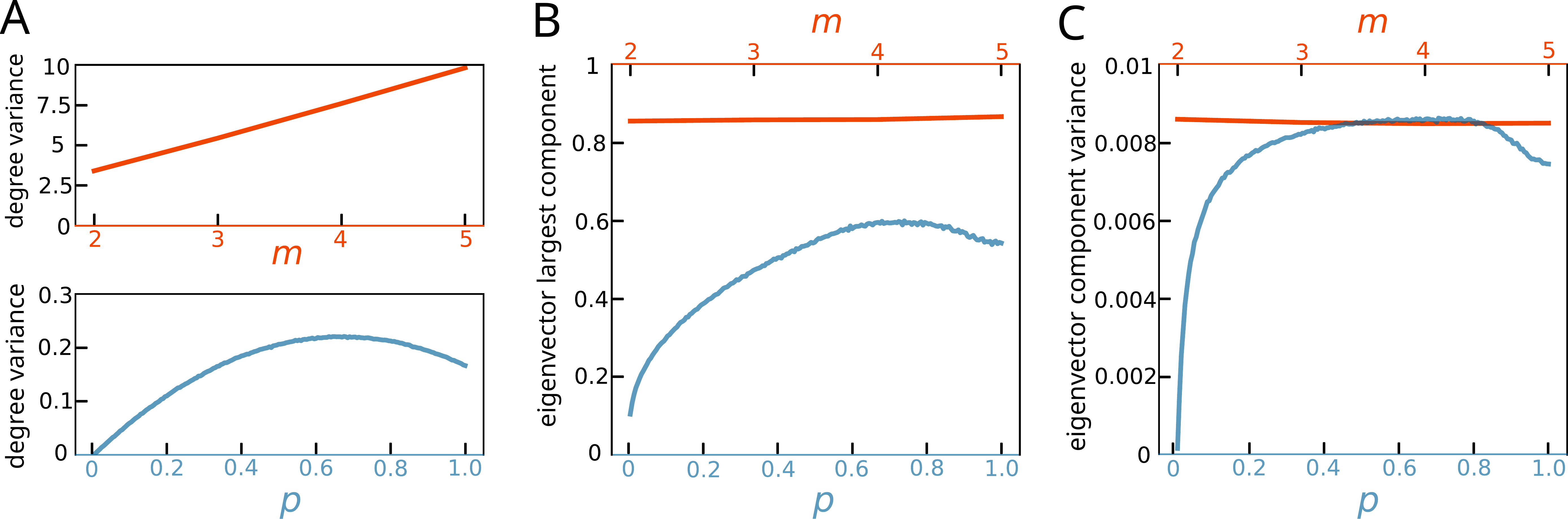}
\caption{\label{fig.sm.isotropy} 
\textbf{Eigenvector localization in degree-homogeneous and degree-heterogeneous networks.}
(\textbf{A}) Variance of node in-degrees as a function of the rewiring probability $p$ (for SW networks, in blue) and the attachment parameter $m$ (for SF networks, in red).
(\textbf{B--C}) Largest absolute value and variance of the components of the dominant Laplacian eigenvector, plotted as functions of $p$ and $m$. The dominant eigenvector $\bm v$ corresponds to the eigenvector associated with the Laplacian eigenvalue with the largest real part. All quantities are averaged over 2{,}000 independent network realizations of size $N=100$, and eigenvectors are normalized.
These results show that degree-heterogeneous networks exhibit substantially stronger eigenvector localization compared to degree-homogeneous networks.
}
\end{figure}

\section{{Disorder-promoted stability in a broader class of dynamical states}}
\label{sec.sm.spatiotemp}

{Beyond the equilibrium states analyzed in the main text, here we present numerical simulations showing that disorder-promoted stability can also be observed in more general forms of collective behavior, including periodic, chaotic, and multistable regimes. 
Example~\ref{examp.sm.lvperiodic} shows that disordered network structures can enhance stability in LV ecological systems exhibiting periodic dynamics. 
Example~\ref{examp.sm.chaossync} shows that disorder in nodal parameters can promote stable synchronization in networks of chaotic R\"ossler oscillators. 
Finally, Example~\ref{examp.sm.basin} shows that, in multistable phase-amplitude oscillator systems, improving the linear stability of the synchronous state can also enlarge its basin of attraction.}

\begin{remark}
{
    To characterize the stability of periodic and chaotic solutions in the following examples, we use Lyapunov exponents $\Lambda_i$, defined as the asymptotic exponential rate at which infinitesimal perturbations $\delta\bm x(t)$ to a trajectory $\bm x(t)$ grow or decay, as given by $\norm{\delta\bm x(t)}\sim e^{\Lambda_i t}\norm{\delta\bm x(0)}$.
    For an equilibrium point (as considered in the main text), $\Lambda_i$ reduces to the real part of an eigenvalue of the Jacobian matrix, $\Lambda_i = \Re\big(\lambda_i(J)\big)$.
    For a periodic orbit with period $T$, on the other hand, $\Lambda_i$ is  related to the Floquet multiplier $\mu_i$ by $\Lambda_i = \frac{1}{T} \log |\mu_i|$. In Figs.~\ref{fig.RE_limit_cycle} and \ref{fig.sm.rossler}, we numerically estimate the Lyapunov exponents using the standard variational-equation approach, in which infinitesimal perturbation vectors are evolved along the trajectory and periodically re-orthonormalized to extract their asymptotic exponential growth rates.
}
\end{remark}

\begin{example}[{Lotka-Volterra system with periodic dynamics}]
\label{examp.sm.lvperiodic}

Consider the LV network model presented in Eq.~\eqref{eq.lvmodel}. Under certain conditions (described in detail below), the model has a single unstable equilibrium point and a single stable limit cycle (under the constraint that they are feasible). In this regime, we can isolate how disordered parameter configurations affect the stability of periodic orbits, providing a natural extension of the equilibrium-point analysis presented in Fig.~\ref{fig.ecologicalnet}.

{
To simplify the analysis, we explore the equivalence between the LV equation with $N$ species and the \textit{replicator equation} with $N'=N+1$ strategies \cite{hofbauer1998evolutionary}, as commonly done in mathematical ecology \cite{allesina2026global}. The replicator equation is used in evolutionary game theory to describe a population choosing among $N'$ possible strategies. Let $p_i$ denote the fraction of the population choosing strategy $i$, and $M\in\R^{N'\times N'}$ be the payoff matrix. The replicator equation is given by \cite{taylor1978evolutionary}
\begin{equation}
    \dot p_i = p_i\left(\sum_{j=1}^{N'} M_{ij}p_j - \bm p^\transp M\bm p\right),\quad\text{for } i=1,\ldots,N'.
\end{equation}
Thus, strategies whose payoff exceeds the population average $\bm p^\transp M\bm p$ increase in frequency, whereas strategies with below-average payoff decrease in frequency. The mapping between the replicator equation and the LV model is obtained (up to a rescaling of time) by writing
\begin{equation}
p_{i}=\begin{cases}
\frac{1}{1+\sum_{k=1}^{N}x_{k}}, &\, \text{ if }\,i=1,\\
\frac{x_{i}}{1+\sum_{k=1}^{N}x_{k}}, &\, \text{ if }\, i=2,\ldots,N+1,
\end{cases}
\quad \text{and} \quad
M_{ij}=\begin{cases}
0, &\, \text{ if } \;i=1,\\
b_{i-1}, &\, \text{ if } \;j=1 \text{ and } \;i\neq 1,\\
A_{i-1,j-1}, &\, \text{ otherwise}.
\end{cases}
\end{equation}
For the replicator equation, it is known that if the payoff matrix has the cyclic form
\begin{equation}
M_{ij}=\begin{cases}
\alpha_0,&\quad\text{ if }\;i=j,\\
\alpha_{i},&\quad\text{ if }i=(j+1)\,\text{mod}\,N',\\
0,&\quad\text{ otherwise},
\end{cases}
\end{equation}
with $\alpha_0\leq 0$ and $\alpha_{i}\geq 0,\;\forall i\geq1$, then the system admits a stable limit cycle encircling the unstable equilibrium point $\bm p^{*}\neq 0$~\cite{hofbauer1998evolutionary} for certain values of $
\alpha_{0}$.
The corresponding growth rates and adjacency matrix in the LV model are then respectively given by $b_{i}=\alpha_{i}+\alpha_0\delta_{iN}$ and
\begin{equation}
A_{ij}=\begin{cases}
\alpha_0-\alpha_{N}\delta_{i1},\quad &\text{ if }\;i=j=1, \\
-\alpha_{N},\quad &\text{ if }\;j=i \text{ and } i>1,\\
\alpha_{i},\quad &\text{ if }\;i=(j+1)\,\text{mod}\,N \text{ and } i\neq N.\\
\end{cases}
\end{equation}
}

\begin{figure}[t]
\centering
\includegraphics[width=\textwidth]{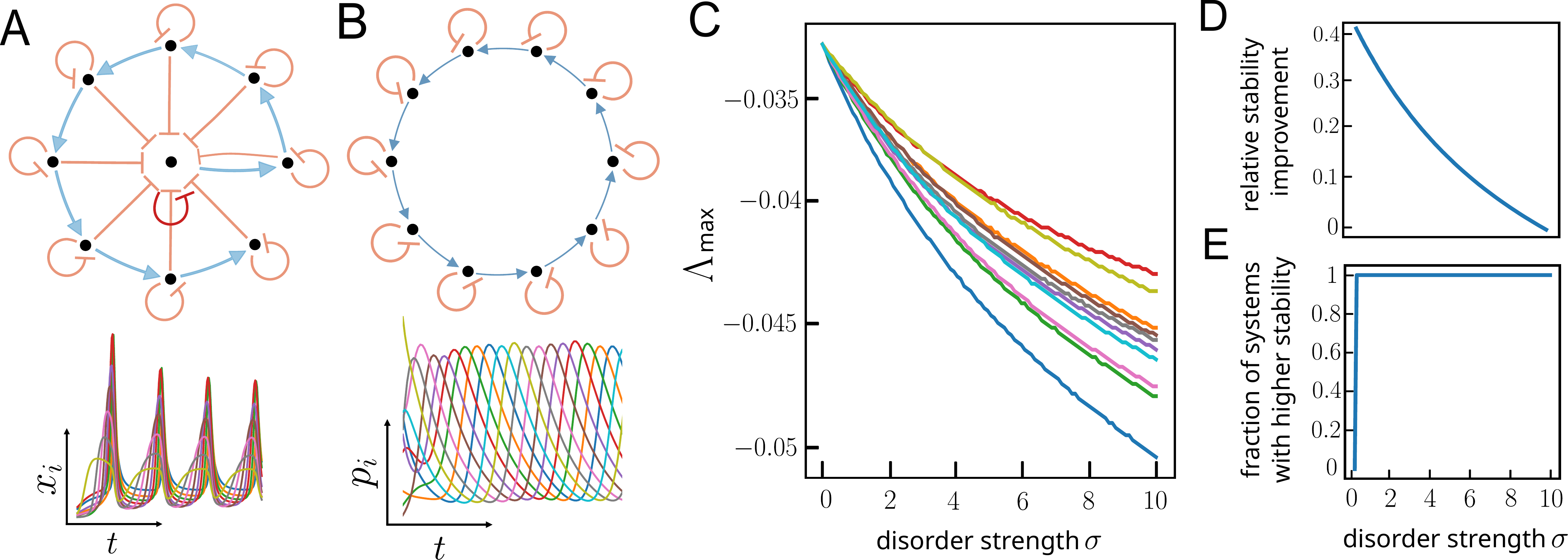}
\caption{\label{fig.RE_limit_cycle} 
{
\textbf{Disorder-promoted stability of limit cycles.}
(\textbf{A, B}) Network structures associated with the adjacency matrix $A$ of the LV model with $N=9$ species (A) and the payoff matrix $M$ of the equivalent replicator equation with $N'=10$ strategies (B). The blue and red edges represent excitatory and inhibitory interactions, respectively.
The bottom panels show the time series of the populations $x_i$ in the LV model and $p_i$ in the replicator equation.
(\textbf{C}) Largest transverse Lyapunov exponent $\Lambda_{\rm max}(\sigma)$ associated with the limit cycle for the network structure in panel B, shown as a function of the disorder strength $\sigma$. Each curve corresponds to a different realization of disorder.
(\textbf{D}) Average relative stability improvement, $(\Lambda_{\mathrm{max}}(\sigma) - \Lambda_{\mathrm{max}}(0))/|\Lambda_{\mathrm{max}}(0)|$, as a function of $\sigma$.
(\textbf{E}) Fraction of systems with disorder-promoted stability (satisfying $\Lambda_{\rm max}(\sigma)<\Lambda_{\rm (0)}$), as a function of $\sigma$.
The results in panels D and E are averaged over 100 realizations of disorder, with $\alpha_{i}\sim\bar\alpha+\mathcal{U}[0,\sigma/N]$, $\forall i$, $\bar\alpha = 1$, and $\alpha_0=-0.5$.
}
}
\end{figure}

{
Fig.~\ref{fig.RE_limit_cycle}A,B illustrates the network structures of the equivalent LV and replicator systems. In the replicator system, when all cyclic edges are identical ($\alpha_i=\bar\alpha$, $\forall i$), the underlying network $M$ is rotationally symmetric and therefore all nodes are structurally identical (even though this symmetry does not exist in the corresponding LV system). The non-Hermiticity of the payoff matrix $M$ implies that the variational dynamics along the limit cycle are governed by a non-Hermitian (time-dependent) Jacobian. Consequently, by Proposition~\ref{cor.sm.nonhermitian}, the largest transverse Lyapunov exponent $\Lambda_{\rm max}$ associated with the limit cycle can be a nonconvex function of $\bm\alpha = (\alpha_1,\ldots,\alpha_{N'})$ and, thus, by Theorem~\ref{thm.jensenineq}, the parameter configuration that maximizes stability need not coincide with the homogeneous choice $\bm\alpha=\bar\alpha\bm 1_{N'}$. Moreover, the underlying network structure $M$ is directed and circulant, indicating that this homogeneous point can be differentiable and that disorder-promoted stability is likely. To probe this effect, we introduce disorder into the edge interactions $\alpha_{i}\sim\bar\alpha+\mathcal{U}[0,\sigma/N]$, thus making the network structure asymmetric and disordered. Fig.~\ref{fig.RE_limit_cycle}C--E shows that increasing the disorder strength indeed decreases $\Lambda_{\rm max}$, enhancing the stability of the limit cycle. These results demonstrate that our framework extends beyond equilibrium points and can be used to analyze the relationship between non-Hermiticity, disorder, and the stability of periodic orbits.}
\QEDA
\end{example}

\begin{example}[{Chaos synchronization}]
\label{examp.sm.chaossync}
{
We study the problem of chaos synchronization, that is, we analyze the parameters for which all oscillators follow the same chaotic trajectory. To this end, consider an undirected ring network of $N=10$ coupled R\"ossler oscillators~\cite{rossler1976equation}:
\begin{equation} \label{eq.rossler}
\begin{aligned}
\dot{x}_{i}&=-y_{i} - z_{i},\\
\dot{y}_{i}&=x_{i} + a_{i}y_{i} + K\sum_{j=1}^N A_{ij} (y_j-y_i),\\
\dot{z}_{i}&=b + z_{i}(x_{i} - c),
\end{aligned}
\end{equation}
for $i=1,\ldots,N$, where $\bm x_i=(x_i,y_i,z_i)^\transp$ is the state of oscillator $i$, $A_{ij}$ is the adjacency matrix, $K$ is the coupling strength, $(b,c)$ are fixed parameters, and $\bm{a}=(a_1,\ldots,a_N)$ is the set of nodal parameters in which we introduce disorder. The Jacobian of the system, evaluated at the phase-synchronized state, is generally non-Hermitian. Therefore, by Proposition~\ref{cor.sm.nonhermitian}, $\Lambda_{\rm max}$ can be a nonconvex function of $\bm a$. Following our framework in Fig.~\ref{fig.modemixing}A, this implies that the stability of the synchronous state can potentially be enhanced at parameter configurations $\bm a$ that do not respect the global symmetries of the ring network $A$ and are therefore heterogeneous.
}

\begin{figure}[t]
\centering
\includegraphics[width=0.95\textwidth]{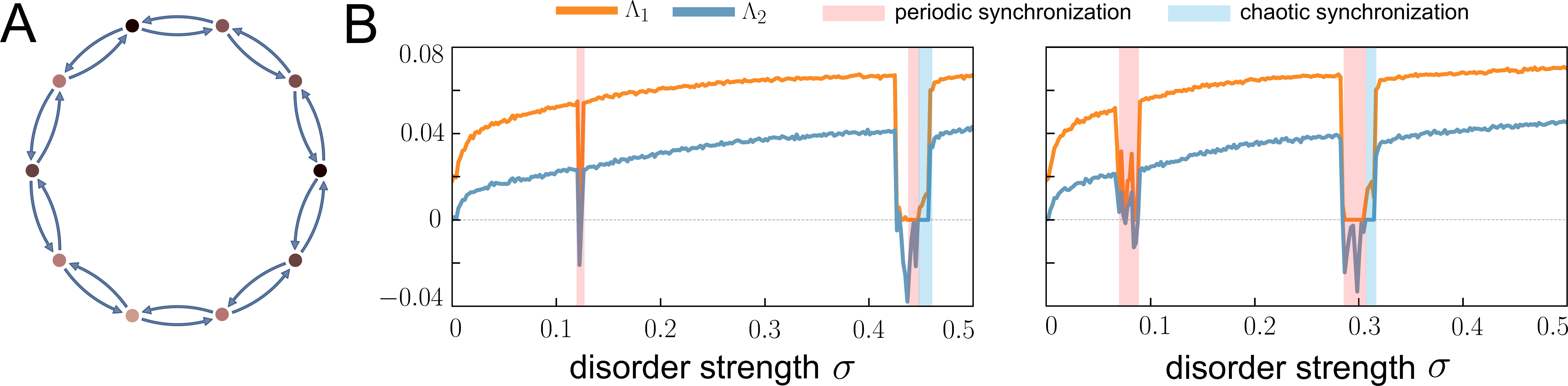}
\caption{\label{fig.sm.rossler}
{
\textbf{Disorder-promoted synchronization among chaotic oscillators.}
(\textbf{A}) Ring network of $N=10$ R\"ossler oscillators, with node colors representing the heterogeneity in parameter $a_{i}$. 
(\textbf{B}) Largest Lyapunov exponents $\Lambda_{1}$ (orange) and second largest Lyapunov exponent $\Lambda_{2}$ (blue), shown as a function of the disorder strength $\sigma$. Each panel shows a different realization of the disordered parameter, randomly drawn as $a_i\sim \bar a+\mathcal{U}[0,\sigma/N]$, $\forall i$. 
The green (purple) shade denotes intervals of $\sigma$ in which the oscillators have stable chaotic (periodic) synchronization.
The simulations are shown for $(\bar a,b,c,K)=(0.1,0.1,14,0.225)$.
}
}
\end{figure}

{
Fig.~\ref{fig.sm.rossler} shows how the two largest Lyapunov exponents, $\Lambda_1$ and $\Lambda_2$, vary as the disorder in $\bm a$ is increased, with parameters sampled as $a_i\sim \bar{a}+\mathcal{U}[0,\sigma/N]$. Synchronization among chaotic oscillators is often empirically diagnosed based on the signs of these exponents \cite{Rosenblum1997}:
\begin{enumerate}
    \item If $\Lambda_1,\Lambda_2>0$, the oscillators are desynchronized; 
    \item If $\Lambda_1>0$ and $\Lambda_2 = 0$, the oscillators are synchronized on a chaotic orbit (i.e., chaotic synchronization);
    \item If $\Lambda_1=0$ and $\Lambda_2 < 0$, the oscillators are synchronized on a periodic orbit (i.e., periodic synchronization).
\end{enumerate}
For homogeneous parameters ($\sigma=0$) and sufficiently small coupling $K$, both exponents are positive, indicating that there is no synchronization. In contrast, when disorder is introduced into $\bm a$, both periodic and chaotic synchronization emerge over finite ranges of $\sigma$. This demonstrates that parameter heterogeneity can promote synchronization in regimes where the corresponding homogeneous system remains desynchronized.
}
\QEDA
\end{example}

\begin{example}[{Multistable phase-amplitude oscillators}]
\label{examp.sm.basin}
{
Consider the following network of coupled phase-amplitude oscillators (which is a simplified version of Stuart-Landau oscillators):
\begin{equation}
    \begin{aligned}
        \dot r_i &= b_i r_i (1-r_i) + \varepsilon r_i \sum_{j=1}^N A_{ij} \cos(\phi_j-\phi_i), \\
        \dot\phi_i &= \omega + r_i - 1 + r_i \sum_{j=1}^N A_{ij} \sin(\phi_j-\phi_i),
    \end{aligned}
\end{equation}
for $i=1,\ldots,N$, where $(r_i,\phi_i)$ denote the amplitude and phase of oscillator $i$, $\varepsilon$ is the coupling strength, $\omega$ is the natural frequency, and $\bm b = (b_1,\ldots,b_N)$ is the vector of dissipation parameters into which we introduce disorder. We take the adjacency matrix $A$ to be a directed circulant network with distinct clockwise and counterclockwise edge weights, as illustrated in Fig.~\ref{fig.multistable}A.
This system is multistable because different initial conditions can converge to distinct attractors, including frequency-synchronized states and incoherent states. Here, we focus on the stability of a frequency-synchronized solution of the form $r_j(t)=r_j^{\rm eq}$ and $\phi_j(t)=\Omega t+\phi_j^{\rm eq}$, where all oscillators rotate with the same frequency $\Omega$. The equilibrium amplitudes $r_j^{\rm eq}$ and phase offsets $\phi_j^{\rm eq}$ are generally nonidentical for different oscillators when $b_i\neq b_j$. The Jacobian matrix evaluated around this frequency-synchronized state is non-Hermitian, and hence, by Proposition~\ref{cor.sm.nonhermitian},  $\Lambda_{\rm max}$ can be a nonconvex function of $\bm b$. 
}

{
Following our framework in Fig.~\ref{fig.modemixing}A, the parameter configuration that maximizes stability need not preserve the symmetries of the underlying network, and may therefore be heterogeneous. Moreover, because the network structure is directed and circulant, it is likely that stability is promoted for disordered parameter configurations of the form $b_i\sim\mathcal N(b_{\rm hom},\sigma^2)$. Fig.~\ref{fig.multistable}B shows that the largest Lyapunov exponent $\Lambda_{\rm max}$ indeed decreases over finite ranges of the disorder strength $\sigma$, indicating enhanced linear stability of the frequency-synchronized state.
Importantly, this stabilization is accompanied by an enlargement of the basin of attraction over comparable ranges of disorder, and both effects occur across a large fraction of disorder realizations (Fig.~\ref{fig.multistable}C). Thus, in this example, disorder not only enhances local linear stability but also increases the system's resilience to finite perturbations.}
\QEDA
\end{example}

\begin{figure}[H]
    \centering
    \includegraphics[width=0.85\linewidth]{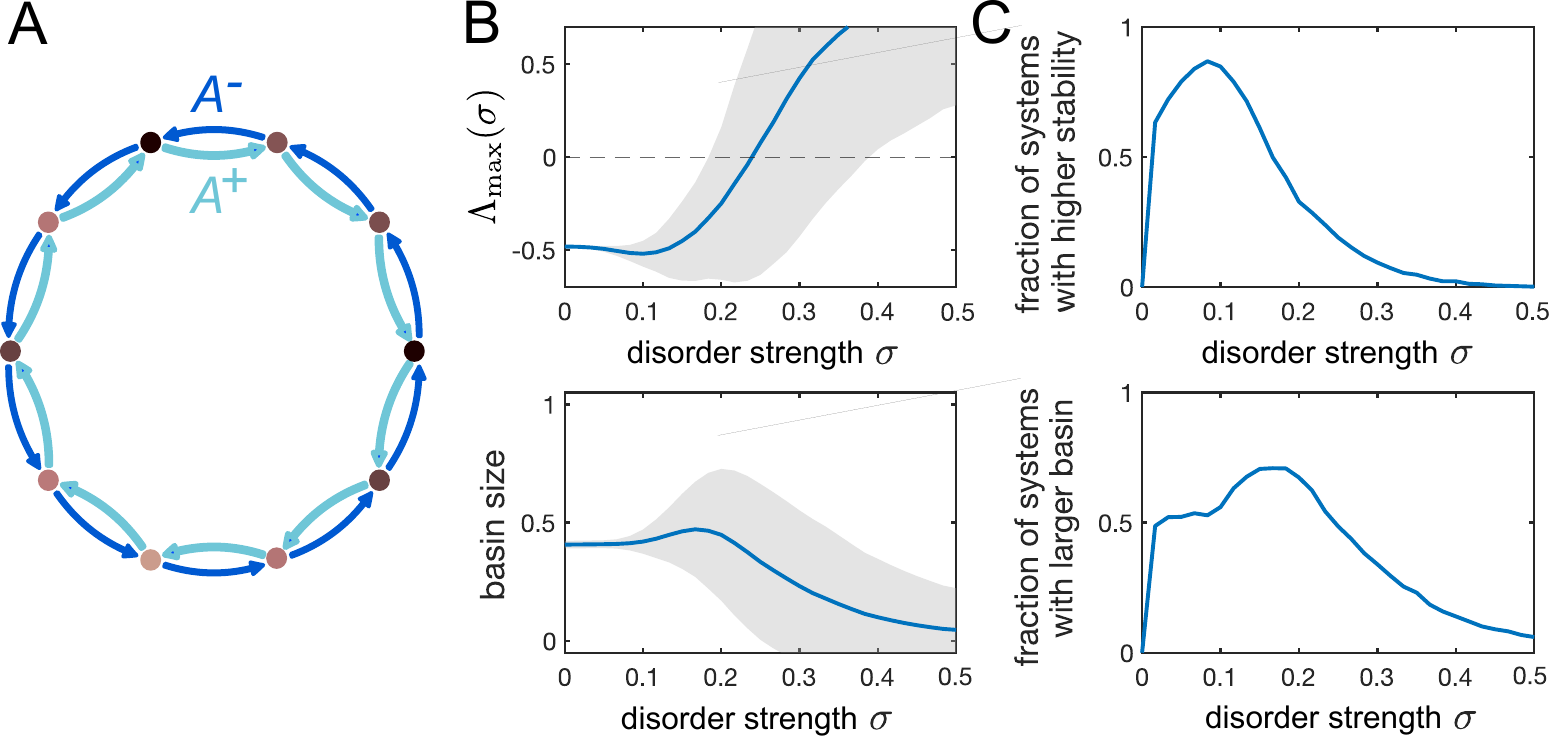}
    \caption{\label{fig.multistable}
    {
    \textbf{Disorder-promoted stability in multistable oscillator networks.}
    (\textbf{A}) Directed circulant network of $N=10$ coupled phase-amplitude oscillators, with clockwise and counterclockwise edge weights $A^+$ and $A^-$, respectively.
    (\textbf{B}) Largest Lyapunov exponent (top) and size of the basins of attraction (bottom), shown as a function of the disorder strength $\sigma$. The curves and shaded areas indicate the average and standard deviation, respectively.
    (\textbf{C}) Fraction of systems with larger stability (top) and basin sizes (bottom), shown as a function of $\sigma$.
    The results in panels B and C are computed over 1{,}000 independent disorder realizations, with $b_i\sim\mathcal N(\bar b,\sigma^2)$, $\forall i$, and $(\bar b,\omega,\varepsilon,A^-,A^+)=(0.5,0,0.3,1,2)$. The basin sizes are estimated as the fraction of 1{,}000 randomly sampled initial conditions ($r_i\sim\mathcal U[0,3]$ and $\phi_i\sim\mathcal U[-\pi,\pi]$) for which the system converges to the frequency-synchronized state.
    }
    }
\end{figure}







\end{document}